\documentclass[11pt]{article}

%\usepackage[active,tightpage]{preview}
%\PreviewEnvironment{tikzpicture} 
%
\usepackage[bookmarks]{hyperref}
\usepackage{amssymb}
\usepackage{amsmath}
\usepackage{amsthm}
\usepackage{stmaryrd}
\usepackage[affil-it]{authblk}
\usepackage{graphicx,color,colordvi}
\usepackage{bbm}
\usepackage{cleveref}

\usepackage{pgfplots}
%\pgfplotsset{compat=1.12}

% colors, useful for adding remarks on the draft
%\usepackage[colorinlistoftodos]{todonotes}

%preamble from Richard
\setlength{\oddsidemargin}{0.25in}
\setlength{\topmargin}{-1.0cm}  % needs to be adjusted to your printer!
\setlength{\textwidth}{6.1in}
\setlength{\textheight}{8.6in} \setlength{\parskip}{1mm}

\def\ra{\rangle}
\def\la{\langle}

\def\openone{\leavevmode\hbox{\small1\kern-3.8pt\normalsize1}}

\def\II{\mathbb I}

\def\CC{\mathbb{C}}
\def\RR{\mathbb{R}}
\def\ZZ{\mathbb{Z}}

\def\NN{\mathbb{N}}

\def\PP{\mathbb{P}}

\def\h{\mathfrak h}

\newtheorem{theorem}{Theorem}
\newtheorem{lemma}{Lemma}
\newtheorem{proposition}{Proposition}
\newtheorem{corollary}{Corollary}

\theoremstyle{definition}
\newtheorem{definition}{Definition}

\def\ketbra#1#2{|#1\rangle\langle #2|}

\newcommand{\bP}{{\overline{P}}}

\newcommand{\bfi}{{\textbf{i}}}
\newcommand{\bfj}{{\textbf{j}}}
\newcommand{\bfk}{{\textbf{k}}}
\newcommand{\bfx}{{\textbf{x}}}

\def\reff#1{(\ref{#1})}
\def\eps{\varepsilon}

\newcommand{\supp}{\mathop{\rm supp}\nolimits}

\newcommand{\tr}{\mathop{\rm Tr}\nolimits}

\newcommand{\spec}{{\rm sp}}

\newcommand{\hrho}{\hat{\rho}}
\newcommand{\hsigma}{\hat{\sigma}}
\newcommand{\hq}{\hat{q}}
\newcommand{\hr}{\hat{r}}

 % ... is a partition of ...

\newcommand{\cA}{{\cal A}}
\newcommand{\cB}{{\cal B}}
\newcommand{\cD}{{\cal D}}

\newcommand{\cF}{{\cal F}}

\newcommand{\cH}{{\cal H}}

\newcommand{\cP}{\mathcal{P}}

\newcommand{\cX}{{\cal X}}
\newcommand{\cO}{{\cal O}}

\newcommand{\es}{{e_{\rm{sym}}}}
\newcommand{\est}{{e^*_{\rm{sym}}}}
\newcommand{\rw}{{\rm{w}}}
\newcommand{\tnd}{\triangle_n D}
\def\d{\mathrm{d}}
\def\e{\mathrm{e}}

\usepackage{graphicx}
\usepackage{setspace}
\usepackage{verbatim}
\usepackage{subfig}

\theoremstyle{definition}

\theoremstyle{remark}
\newtheorem{remark}{Remark}
\newtheorem{condition}{Condition}
\numberwithin{equation}{section}

\DeclareRobustCommand\openone{\leavevmode\hbox{\small1\normalsize\kern-.33em1}}

\newcommand{\id}{\rm{id}}
\newcommand{\be}{\begin{equation}}
\newcommand{\ee}{\end{equation}}
\newcommand{\bea}{\begin{eqnarray}}
\newcommand{\eea}{\end{eqnarray}}
\newcommand{\beas}{\begin{eqnarray*}}
	\newcommand{\eeas}{\end{eqnarray*}}

\setcounter{Maxaffil}{1}

\begin{document}
	\bibliographystyle{abbrv}
	
	\title{Second-order asymptotics for quantum hypothesis testing in settings beyond i.i.d. - quantum lattice systems and more}
	\author[a]{Nilanjana Datta}
	\author[b]{Yan Pautrat}
	\author[a]{Cambyse Rouz\'e}
	\affil[a]{\small Statistical Laboratory, Centre for Mathematical Sciences, University of Cambridge, Cambridge~CB30WB, UK}
	\affil[b]{\small Laboratoire de Math\'ematiques d'Orsay,
		Univ. Paris-Sud, CNRS, Universit\'e
		Paris-Saclay,  91405~Orsay, France}
	\maketitle

	%	\maketitle
	
	\begin{abstract}
		
		Quantum Stein's Lemma is a cornerstone of quantum statistics and concerns the problem of correctly identifying a quantum state, given the knowledge that it is one of two specific states ($\rho$ or $\sigma$). It was originally derived in the asymptotic i.i.d.~setting, in which arbitrarily many (say, $n$) identical copies of the state ($\rho^{\otimes n}$ or $\sigma^{\otimes n}$) are considered to be available. In this setting, the lemma states that, for any given upper bound on the probability $\alpha_n$ of erroneously inferring the state to be $\sigma$, the probability $\beta_n$ of erroneously inferring the state to be $\rho$ decays exponentially in $n$, with the rate of decay converging to the relative entropy of the two states. The second order asymptotics for quantum hypothesis testing, which establishes the speed of convergence of this rate of decay to its limiting value, was derived in the i.i.d.~setting independently by Tomamichel and Hayashi, and Li. We extend this result to settings beyond i.i.d.. Examples of these include Gibbs states of quantum spin systems (with finite-range, translation-invariant interactions) at high temperatures, and quasi-free states of fermionic lattice gases.

	\end{abstract}
	\section{Introduction}\label{sec_intro}
	
	\subsection*{Quantum Hypothesis Testing}
	Quantum hypothesis testing concerns the problem of discriminating between two different quantum states\footnote{It is often referred to as {\em{binary}} quantum hypothesis testing, to distinguish it from the case in which there are more than two states.}. It is fundamentally different from its classical counterpart, in which one discriminates between 
	two different probability distributions, the difference arising essentially from the non-commutativity of quantum states of physical systems, which are at the heart of quantum information. Hence, discriminating between different states of a quantum-mechanical system is of paramount importance in quantum information-processing tasks. 
	In the language of hypothesis testing, one considers two hypotheses -- the {\em{null hypothesis}} $H_0 : \rho$ and the {\em{alternative hypothesis}} $H_1 :\sigma$, where $\rho$ and $\sigma$ are two quantum states. In an operational setting, say Bob receives a state $\omega$ with the knowledge that either $\omega=\rho$ or $\omega=\sigma$. His goal is then to infer which hypothesis is true, i.e., which state he has been given, by means of 
	a measurement on the state he receives. The measurement is given most generally by a POVM $\{T, \mathbb{I} - T\}$ where $0\le T\le \mathbb{I}$. Adopting the nomenclature from classical hypothesis testing, we refer to $T$ as a {\em{test}}. 
	The probability that Bob correctly guesses the state to be~$\rho$ is then 
	equal to $\tr(T\rho)$, whereas his probability of correctly guessing
	the state to be $\sigma$ is $\tr((\mathbb{I}-T)\sigma)$. Bob can erroneously infer the state to be $\sigma$ when it is actually $\rho$ or vice versa. The corresponding error probabilities are referred to as the {\em{type I error}} and {\em{type II error}} respectively. They are given as follows:
	\begin{align} \alpha(T) &:= \tr\left(( \mathbb{I} -  T)\rho\right), \quad \beta(T) := \tr\left(T \sigma\right), 
	\end{align}
	where $\alpha(T)$ is the probability of accepting $H_1$ when 
	$H_0$ is true, while $\beta(T)$ is the probability
	of accepting $H_0$ when $H_1$ is true. Obviously, there is a trade-off between the two error probabilities, and
	there are various ways to jointly optimize them, depending on whether or not 
	the two types of errors are treated on an equal footing. In the setting of {\em{symmetric hypothesis testing}}, one minimizes the total probability of error $\alpha(T)+\beta(T)$, whereas in {\em{asymmetric hypothesis testing}} one minimizes the type II error under a suitable constraint on the type I error.
	
	Quantum hypothesis testing was originally studied in the {\em{asymptotic i.i.d.~setting}} in which Bob is provided not with just a single copy of the state but with multiple (say $n$) 
	identical copies of the state, and he is allowed to do a joint measurement on all these copies. The optimal asymptotic performance in the different settings is quantified by the following exponential decay rates, evaluated in the asymptotic limit ($n \to \infty$):
	$(i)$ the optimal exponential decay rate of the sum of type I and type II errors, $(ii)$ the optimal exponential decay rate of the type II error under the 
	assumption that the type I error decays with a given exponential
	speed, and $(iii)$ the optimal exponential decay rate of the type II error 
	under the assumption that the type I error remains bounded.
	The first of these corresponds to the symmetric setting while the other two to the asymmetric setting; $(i)$ is given by the quantum Chernoff bound \cite{Aetal07, NussbaumSzkola}, $(ii)$ is given by the Hoeffding bound \cite{HM7, N06, OH04} while $(iii)$ is given by quantum Stein's lemma \cite{HP91, ON00}. In this paper we will restrict attention to the asymmetric setting of quantum Stein's Lemma and its refinement, and hence we elaborate on it below.
	
	\subsection*{Quantum Stein's lemma and its refinement: asymptotic i.i.d.~setting}
	Suppose that the state $\omega_n$ that Bob receives is either the state $\rho_n:=\rho^{\otimes n} $ or the state
	$\sigma_n:=\sigma^{\otimes n}$, with $\rho$ and $\sigma$ being states on a finite-dimensional Hilbert space $\cH$. In the setting of quantum Stein's lemma, the quantity of interest is the \emph{optimal asymptotic type II error exponent}, which is given, for any $\eps\in(0,1)$, by 
	\begin{align*}
	\lim_{n \to \infty}- \frac{1}{n} \log \beta_n(\eps),\qquad {\hbox{where}} \,\, \beta_n(\eps):=\inf_{0 \leq T_n \leq \mathbb{I}_n}\{\beta(T_n)|\alpha(T_n)\le \eps\}.
	\end{align*}
	Here $\alpha(T_n)= \tr[(\mathbb{I}_n - T_n)\rho_n]$ and $\beta(T_n) = \tr[T_n \sigma_n]$, with $\mathbb{I}_n$ being the
	identity operator acting on $\cH^{\otimes n}$. Quantum Stein's lemma (see \cite{HP91,ON00}) establishes that 
	$$\lim_{n \to \infty}- \frac{1}{n} \log \beta_n(\eps)=D(\rho||\sigma) \quad \forall \, \eps \in (0,1),$$ 
	where $D(\rho||\sigma)$ denotes the quantum relative entropy defined in \Cref{qrel}. Moreover, it states that the minimal asymptotic type I error jumps discontinuously from $0$ to $1$ as the asymptotic type II error exponent crosses the value $D(\rho||\sigma)$ from below: see \Cref{fig1}, which is a plot of the function 
	\begin{equation} \label{eq_defalpha1}
	\alpha_\infty^{(1)}(t_1)=\inf\big\{ \limsup_n \alpha(T_n)\, |\, -\liminf_n\frac1{n}\log \beta(T_n) \geq t_1\big\}.
	\end{equation}
	However, this discontinuous dependence of the minimal asymptotic type~I error on the asymptotic type II error exponent is a manifestation of the coarse-grained analysis underlying the Quantum Stein's lemma, in which only the linear term (in~$n$) of the {\em{type~II error exponent}} $(-\log \beta(T_n))$ is considered.
	
	\begin{figure}[h]
		\centering
		\includegraphics[width=0.5\textwidth]{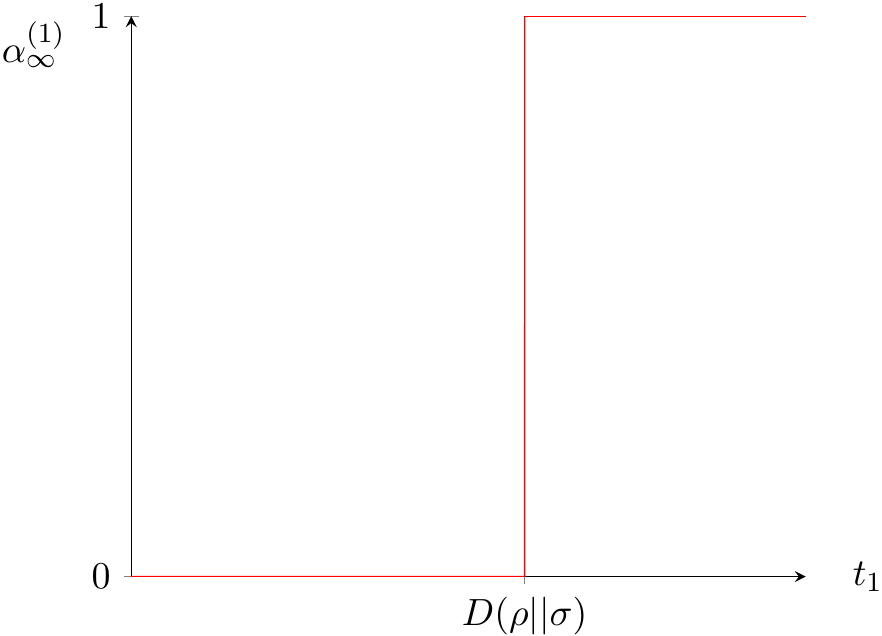}
		\caption{The minimal asymptotic type I error $\alpha_\infty^{(1)}(t_1)$, defined in \Cref{eq_defalpha1}.}\label{fig1}
	\end{figure}
	More recently, Li \cite{L14} and Tomamichel and Hayashi \cite{TH13} independently showed that this discontinuity vanishes under a more refined analysis of the 
	type II error exponent, in which its second order (i.e.~order $\sqrt{n}$) term 
	is retained, in addition to the linear term. This analysis is referred to as the {\em{second order asymptotics}} for (asymmetric) quantum hypothesis testing, since it involves the evaluation of $\left(-\log\beta_n(\eps)\right)$ up to second order. It was proved in \cite{TH13, L14} to be given by:
	\begin{align}\label{soa}
	-\log\beta_n(\eps)=n\,D(\rho||\sigma)+\sqrt{n\, V(\rho||\sigma)}\,\Phi^{-1}(\eps)+\mathcal{O}(\log n),
	\end{align}
	where $\Phi$ denotes the cumulative distribution function (c.d.f.)~of a standard normal distribution, and $V(\rho||\sigma)$ is called the {\em{quantum information variance}} and is defined in \Cref{info-var}. The above expansion implies that, if the minimal type~II error exponent is constrained to have $n\,D(\rho||\sigma)$ as its first order term and $\sqrt{n}\,t_2$ as its second order term, then the minimal type~I error is given by $\Phi(t_2/\sqrt{V(\rho||\sigma)})$.  Hence 
	the minimal asymptotic type I error varies smoothly between $0$ and $1$ as $t_2$ increases from $-\infty$ to $+\infty$: see \Cref{fig2}, which is a plot of the function 
	\begin{equation} \label{eq_defalpha2}
	\alpha_\infty^{(2)}(t_2)=\inf\big\{ \limsup_n \alpha(T_n)\, |\, -\liminf_n\frac1{\sqrt{n}}\big(\log \beta_n(T_n) + n \, D(\rho||\sigma)\big)\geq t_2\big\}.
	\end{equation}
	\begin{figure}[h]
		\centering
		\includegraphics[width=0.5\textwidth]{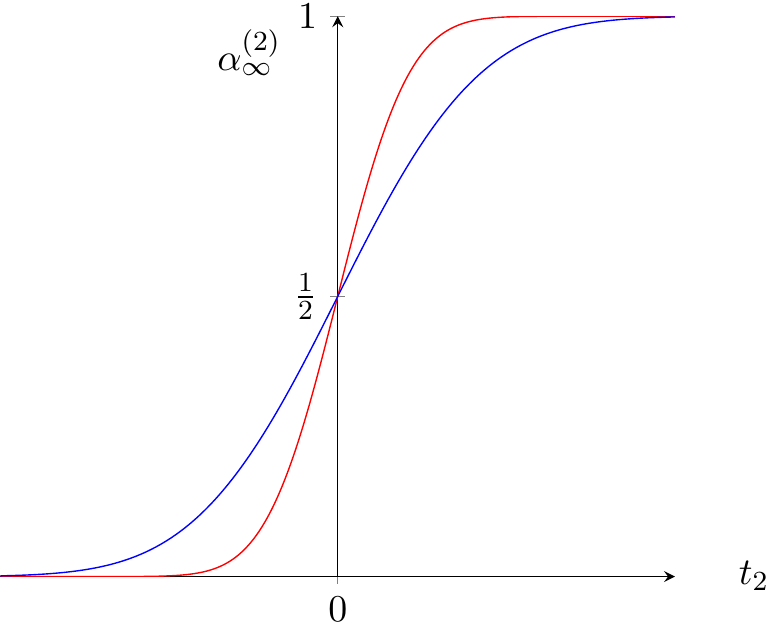}
		\caption{The minimal asymptotic type I error $\alpha_\infty^{(2)}(t_2)$, defined in \Cref{eq_defalpha2}, for two different values of $V(\rho||\sigma)$ (red: $V(\rho||\sigma)=1$, blue: $V(\rho||\sigma)=2$).}\label{fig2}
	\end{figure}
	The Gaussian c.d.f. $\Phi$ arises from the Central Limit Theorem (CLT), or rather from its refinement, the Berry-Esseen Theorem (see e.g.~\cite{Feller2}), which gives the rate of convergence of the distribution of the scaled sum of i.i.d.~random variables to a normal distribution.
	
	The study of second order asymptotics is a key step towards understanding a fundamental problem of both theoretical and practical interest, which is to determine how quickly the behaviour of a finite system approaches that of its asymptotic limit. Second order asymptotics have been obtained for various classical- (\cite{Strassen62,HTan14,WH}, ; see also the review \cite{T14} and references therein) and quantum information theoretic tasks (initiated in \cite{TH13, L14}; see \cite{LD14} and references therein for later works, as well as \cite{WRG14}). However, all results obtained in the quantum  case pertain to the i.i.d.~setting in which the underlying resource is assumed to be uncorrelated. In this paper, we obtain second order asymptotics for quantum settings beyond i.i.d.~which had thus far remained a challenging, open problem\footnote{It has been done for mixed source coding in \cite{LD14} but there too the problem can essentially be reduced to the i.i.d.~setting.}. Moreover, we do it for the task of quantum hypothesis testing which underlies various other information-theoretic tasks, such as the transmission of classical information through a quantum channel.

	\subsection*{Our contribution}
	In this paper, we extend the study of second order asymptotics of the {\em{type II error exponent}} $\left(-\log\beta_n(\eps)\right)$ of asymmmetric quantum hypothesis testing to settings beyond i.i.d.. In this case, the state $\omega_n$  which Bob receives is either 
	$\rho_n$ or $\sigma_n$, but the latter are not necessarily of the tensor power (i.e.~i.i.d.) form. More precisely, we consider two sequences $\hrho:=(\rho_n)_{n\in\NN}$ and $\hsigma:=(\sigma_n)_{n\in\NN}$, where for every $n$, $\rho_n$ and $\sigma_n$ are states (on a finite-dimensional Hilbert space $\cH_n$) which need not be of the form $\rho^{\otimes n}$ and $\sigma^{\otimes n}$, respectively. We still assume, however, that for increasing $n$ the amount of information that can be extracted by performing a test increases linearly with a specified sequence of positive weights $(w_n)_{n\in {\mathbb{N}}}$, satisfying $w_n \to \infty$ as $n \to \infty$. In the case of $n$ i.i.d.~copies, i.e. if $\rho_n=\rho^{\otimes n}$ and $\sigma_n=\sigma^{\otimes n}$, we choose $w_n=n$. Interesting examples of quantum states which fall within our setting are Gibbs states of quantum spin systems (with finite-range, translation-invariant interactions) at high temperature and quasi-free states of a fermionic lattice gas. The example of i.i.d.~states is included in our setting and so we can recover the result of \cite{TH13} and \cite{L14}. 
	
	The first order asymptotics in this general setting was studied by Hiai, Mosonyi and Ogawa \cite{Hiai2008}, where the authors introduced the analysis of the hypothesis testing problem using the existence and differentiability of the asymptotic logarithmic moment generating function, which leads to a proof of quantum Stein's lemma  for various classes of correlated states (see \cite{Hiai2009}, \cite{Jaksicetal}). It is also worth mentioning that various extensions of quantum Stein's lemma for non ~i.i.d. states were previously treated in \cite{Bjelakovic2008,Bjelakovic2004,Hiai2007,Hiai2008,Hiai1994,MOSONYI,Mosonyi2008a}. Special cases of the second order asymptotics that we derive were obtained in \cite{HT14} and \cite{TT13}:
	in Lemma 26 of \cite{TT13} the case in which $\hrho$
	is a sequence of product states and $\hsigma$ is a sequence of i.i.d 
	states was considered, whereas in \cite{HT14} $\hrho$ was 
	chosen to be i.i.d.~while $\hsigma$ was taken to be  
	some sequence of so-called {\em{universal states}} (see Lemma 1 of \cite{HT14}). 
	
	Our main results are given by \Cref{mainresult} and \Cref{coro_ideal} of \Cref{sec_main}. 
	The essence of our strongest result (which is given by \Cref{coro_ideal}), can be easily conveyed
	through the example of quasi-free states of fermions on a one-dimensional lattice $\ZZ$. Suppose
	the sequence of states $\rho_n$ and $\sigma_n$ arising in our hypothesis testing problem are Gibbs states of these fermions, corresponding to the same Hamiltonian but at two different temperatures, restricted to a finite subset $\Lambda_n$ of the lattice. Even though these are
	non-i.i.d.~states, we prove that the minimal type I error has the same {\em{functional dependence}} on the second order term
	of the type II error exponent as that for i.i.d.~case shown in \Cref{fig2}. Note, however, that the relative entropy $D(\rho||\sigma)$ is replaced by the relative entropy rate $d(\hrho, \hsigma)$, the quantum information variance, $V(\rho||\sigma)$, is replaced by the quantum information variance rate, $v(\hrho,\hsigma)$ (defined in \Cref{drho} and \Cref{dvee} respectively), and $n$ is replaced by the lattice size $|\Lambda_n|$.
	
	\section{Notations and Definitions}\label{preliminaries}
	In this section let $\cH$ denote a finite-dimensional Hilbert space, $\cB(\cH)$ denote the algebra of linear operators acting on $\cH$ and $\cB_{sa}(\cH)$ denote the set of self-adjoint operators. Let $\cP(\cH)$ be the set of positive semi-definite operators on $\cH$ and $\cP_{+}(\cH) \subset  \cP(\cH)$ the set of (strictly) positive operators. Further, let $\cD(\cH):=\lbrace\rho\in\cP(\cH)\mid \tr\rho=1\rbrace$ denote the set of density matrices on $\cH$; we will use the terms ``density matrix" and ``state" interchangeably. We denote the support of an operator $A$ as ${\mathrm{supp}}(A)$ and the range of a projection operator $P$ as ${\mathrm{ran}}(P)$. Let $\mathbb{I}\in\cP(\cH)$ denote the identity operator on $\cH$, and $\id:\cB(\cH)\mapsto \cB(\cH)$ the identity map on operators on~$\cH$.
	
	Any element $A$ of $\cB_{sa}(\cH)$ has a \textit{spectral decomposition} of the form
	$A = \sum_{\lambda \in \spec(A)} \lambda \,P_\lambda(A),$ where $\spec(A)$ denotes the spectrum of $A$, and $P_\lambda(A)$ is the projection operator corresponding to the eigenvalue $\lambda$. We denote by $A_+$ the positive part of $A$.  %and for any Borel subset $\mathcal B$ of $\RR$, we denote by $P_{\mathcal B}(A)$ the associated projector.
	 More precisely,
	\[ A_+ := \sum_{\lambda \in \spec(A)\cap \RR_+} \lambda\, P_\lambda(A)\]%\qquad %P_{\mathcal B}(A)  := \sum_{\lambda \in \spec(A)\cap \mathcal B}  P_\lambda(A) \]
%	so that $A_+=P_{+}(A)\, A = A \, P_{\RR_+}(A) = P_{\RR_+}(A)\, A \, P_{\RR_+}(A)$.
	
	For any $\rho, \sigma \in \cP(\cH)$, and for any $s \in [0,1]$, 
	we define the following quantity which plays a central role in our proofs:
	\begin{equation}\label{phi-s}
	\Psi_s(\rho|\sigma):= \log\tr\left(\rho^s\sigma^{1-s}\right).
	\end{equation}
	If $\rho$ and $\sigma$ have orthogonal supports, then $\Psi_s(\rho|\sigma)= - \infty$. Note that for $s \in [0,1)$,
	\begin{equation}
	\Psi_s(\rho|\sigma)= (s-1) D_s(\rho||\sigma),
	\end{equation}
	where $D_s(\rho||\sigma)$ is the quantum relative R\'enyi entropy of order $s$. It is known that
	\begin{equation}
	\lim_{s \to 1} D_s(\rho||\sigma) 
	= D(\rho||\sigma),
	\end{equation}
	where $D(\rho||\sigma)$ is the quantum relative entropy,
	\begin{equation} \label{qrel}
	D(\rho||\sigma):=
	\tr \rho (\log \rho - \log \sigma),
	\end{equation}
	if $\supp(\rho) \subseteq \supp(\sigma)$, and equal to $+ \infty$ otherwise.
	Henceforth, we assume for simplicity that all states are faithful, so that $\supp(\rho) =\supp(\sigma)$.
	We also define the quantum information variance
	\be\label{info-var}
	V(\rho||\sigma):= \tr \big(\rho(\log \rho - \log \sigma)^2\big) - D(\rho||\sigma)^2.
	\ee
	The following identities can be easily verified.
	:
	\begin{lemma}\label{lem1}
		For any $\rho, \sigma \in \cD(\cH)$ with $\supp(\rho)=\supp(\sigma)$ we have
		\begin{equation*}
		\left.\frac{\d}{\d s}\Psi_s(\rho|\sigma)\right|_{s=0} = - D(\sigma||\rho) \qquad 
		\left.\frac{\d}{\d s} \Psi_s(\rho|\sigma)\right|_{s=1} = D(\rho || \sigma) \qquad
		\left.\frac{\d^2}{\d s^2} \Psi_s(\rho|\sigma)\right|_{s=1} = V(\rho || \sigma).
		\end{equation*}
		\noindent Moreover, for $s \in [0,1]$, $\Psi_s^{\prime\prime}(\rho|\sigma) \geq 0$ and hence
		$\Psi_s(\rho|\sigma)$ is convex in $s$.
	\end{lemma}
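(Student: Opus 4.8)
The plan is to set $f(s):=\tr(\rho^s\sigma^{1-s})$, so that $\Psi_s(\rho|\sigma)=\log f(s)$, and to read off all three identities by differentiating $\log f$. Since $\rho$ and $\sigma$ are faithful, $f$ is strictly positive and real-analytic on a neighbourhood of $[0,1]$, with $f(0)=\tr\sigma=1=\tr\rho=f(1)$, so differentiation under the trace is legitimate throughout. First I would write $\rho^s=\e^{s\log\rho}$, $\sigma^{1-s}=\e^{(1-s)\log\sigma}$ and use $\tfrac{\d}{\d s}\rho^s=(\log\rho)\rho^s$, $\tfrac{\d}{\d s}\sigma^{1-s}=-(\log\sigma)\sigma^{1-s}$, together with cyclicity of the trace, to get $f'(s)=\tr\!\big((\log\rho-\log\sigma)\,\rho^s\sigma^{1-s}\big)$. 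Evaluating at $s=0$ and $s=1$ and using $\Psi_s'=f'(s)/f(s)$ with $f(0)=f(1)=1$ then immediately yields $\Psi_0'=\tr(\sigma\log\rho)-\tr(\sigma\log\sigma)=-D(\sigma||\rho)$ and $\Psi_1'=\tr(\rho\log\rho)-\tr(\rho\log\sigma)=D(\rho||\sigma)$.

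For the second-order identity I would differentiate once more, using $\Psi_s''=f''(s)/f(s)-(f'(s)/f(s))^2$. Differentiating the formula for $f'$ and setting $s=1$ (so that $\sigma^{1-s}=\mathbb{I}$) gives $f''(1)=\tr\!\big((\log\rho-\log\sigma)(\log\rho)\rho\big)-\tr\!\big((\log\rho-\log\sigma)\rho\log\sigma\big)$. Writing $X:=\log\rho-\log\sigma$ and substituting $\log\rho=X+\log\sigma$ rewrites this as $\tr(\rho X^2)+\tr\!\big(X[\log\sigma,\rho]\big)$, and the commutator term should vanish: using cyclicity of the trace and the fact that $\sigma$ commutes with $\log\sigma$ and $\rho$ with $\log\rho$, both $\tr\!\big(\log\rho\,[\log\sigma,\rho]\big)$ and $\tr\!\big(\log\sigma\,[\log\sigma,\rho]\big)$ are $0$. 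This leaves $f''(1)=\tr(\rho X^2)$, hence $\Psi_1''=\tr(\rho X^2)-D(\rho||\sigma)^2=V(\rho||\sigma)$, using \Cref{info-var}.

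For convexity on $[0,1]$ I would switch to a spectral picture: diagonalising $\rho=\sum_i p_i\ketbra{i}{i}$ and $\sigma=\sum_j q_j\ketbra{j}{j}$ gives $f(s)=\sum_{i,j}|\inpr{i}{j}|^2\,p_i^{\,s}q_j^{\,1-s}$, a finite combination, with nonnegative coefficients, of the log-affine functions $s\mapsto p_i^{\,s}q_j^{\,1-s}=\e^{s\log p_i+(1-s)\log q_j}$. A nonnegative combination of log-convex functions is log-convex (by H\"older's inequality applied at the endpoints of a segment), so $f$ is log-convex, i.e.\ $\Psi_s=\log f$ is convex and $\Psi_s''\ge 0$ on $[0,1]$ (indeed on all of $\R$). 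Alternatively — and this is the version I would actually present, since it re-derives the three derivative identities at once — one normalises the summands to probability weights $w_{ij}(s)\propto|\inpr{i}{j}|^2 p_i^{\,s}q_j^{\,1-s}$ and checks $\Psi_s'=\sum_{ij}w_{ij}(s)\,(\log p_i-\log q_j)$ and $\Psi_s''=\mathrm{Var}_{w(s)}\!\big(\log p_i-\log q_j\big)\ge 0$; specialising to $s=1$ (where $w_{ij}(1)=|\inpr{i}{j}|^2 p_i$) and to $s=0$ (where $w_{ij}(0)=|\inpr{i}{j}|^2 q_j$) then recovers $\Psi_1'=D(\rho||\sigma)$, $\Psi_1''=V(\rho||\sigma)$ and $\Psi_0'=-D(\sigma||\rho)$.

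The only genuine obstacle is bookkeeping of operator ordering: $\log\rho$ and $\log\sigma$ do not commute, so in the first two paragraphs every simplification must be routed through cyclicity of the trace or through the commutation of a state with its own logarithm — in particular the vanishing of $\tr\!\big(X[\log\sigma,\rho]\big)$ in $f''(1)$ is precisely the step where a careless computation would go astray. Everything else (differentiability of $s\mapsto\rho^s$, differentiation under the trace, analyticity of $f$ across $s=0,1$) is routine in finite dimension given faithfulness, and the spectral computation sidesteps the ordering issue altogether, which is why I would lean on it.
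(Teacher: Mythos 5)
Your proof is correct. The paper itself offers no proof of this lemma (it only remarks that the identities ``can be easily verified''), so there is nothing to diverge from; your argument is a complete and accurate verification. The computations of $f'$, of $f''(1)$, and the cancellation of the commutator term $\tr\big(X[\log\sigma,\rho]\big)$ via cyclicity all check out, and the spectral/log-convexity argument for $\Psi_s''\ge 0$ is sound. It is worth noting that your second, ``normalised weights'' version is exactly the viewpoint the paper adopts elsewhere: the weights $w_{ij}(s)\propto|\inpr{i}{j}|^2 p_i^{\,s}q_j^{\,1-s}$ are the ($s$-tilted) Nussbaum--Szko\l a distributions, and your identity $\Psi_s=\log\sum_{ij}w_{ij}$ is precisely \Cref{eq_fundamental}, which exhibits $\Psi_s$ as a cumulant generating function of the random variable with law $\mu_{\rho|\sigma}$ --- whence the first and second derivatives are the mean and variance of $-X$ under the tilted law and convexity is automatic. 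So your proof is not only correct but consistent with the machinery the paper builds around this lemma.
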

	
	As in \cite{Jaksicetal}, we use {\em{relative modular operators}} in our proofs and intermediate results. Relative modular operators were introduced originally by Araki, and this allowed him to extend the notion of relative entropy to arbitrary states on a C*-algebra (see \cite{Araki76, Araki77, OP}). We also refer to Petz's papers \cite{Petz1985} and \cite{Petz1986} for a discussion on the relation between the relative modular operator and R\'{e}nyi divergences. We do not work with infinite-dimensional systems in the present paper; however, the language of modular theory will prove convenient even in the finite-dimensional case. In addition, extensions to infinite dimensions will be natural once statements and proofs are expressed in the language of modular theory (this will be done in a future paper). In particular the crucial relation \eqref{phi-s-inner} (given below) will hold without change in the infinite-dimensional setting.
	
	To define relative modular operators on a finite-dimensional operator algebra $\cB(\cH)$, we start by equipping $\cA=\cB(\cH)$ with a Hilbert space structure through the Hilbert-Schmidt scalar product: for $A,B$ two elements of $\cA$ we let $\langle A,B\rangle := \tr A^* B$. We define a map $\pi:\cB(\cH)\to \cB(\cA)$ by $\pi(A):X\mapsto AX$, i.e.~$\pi(A)$ is the map acting on $\cA=\cB(\cH)$ by left multiplication by $A$. This map is linear, one-to-one and has in addition the properties $\pi(AB)=\pi(A)\pi(B)$, $\pi(A^*)=\pi(A)^*$, where here $\pi(A)^*$ denotes the adjoint of the map $\pi(A)$ defined through the 
	relation $\langle X,\pi(A)(Y)\rangle = \langle \pi(A)^*X,Y\rangle$, and the following identity between operator norms $\|\pi(A)\|_{\cB(\cA)}=\|A\|_{\cB(\cH)}$.
	We will therefore identify $A$ with $\pi(A)$ and, because of the identity $\pi(A)X=AX$, will simply write~$A$ for $\pi(A)$ (even though $\pi(A)$ is a linear map on $\cA$, and $A$ is not!).
	
	For any $\rho\in \cD(\cH)$, we denote $\Omega_\rho:=\rho^{1/2}\in \cB_{sa}(\cH)$. We then have the identity
	\begin{equation}
	\tr(\rho A) = \langle \Omega_\rho, A \Omega_\rho \rangle \qquad \mbox{for all }A\in\cA\label{eq_GNS}
	\end{equation}
	even though, again, the right-hand side should be written $\langle \Omega_\rho, \pi(A) \Omega_\rho \rangle$. This is nothing but a simple case of the well-known GNS representation (see e.g. Section 2.3.3 of \cite{BR1}).
	
	From now on, for simplicity of exposition, we only consider faithful states, i.e.~any state~$\rho$ will be such that $\supp(\rho)=\cH$ and in particular any two states will satisfy $\supp(\rho)=\supp(\sigma)$. We then define the \textit{relative modular operator} $\Delta_{\rho|\sigma}$ to be the map
	\begin{equation}\label{eq_defDelta}
	\begin{array}{cccc}
	\Delta_{\rho|\sigma}: & \cA & \to & \hspace{-1em} \cA \\
	& A & \mapsto & \rho A \sigma^{-1}
	\end{array}
	\end{equation}
	Let us denote by $J$ the antilinear operator on $\cA$ defined by $J:A\mapsto A^*$. This map is anti self-adjoint and anti-unitary:
	\begin{align}\label{antiJ}
	\forall A,B\in\mathcal{B}(\mathcal{H}),\quad \langle A,J(B)\rangle=\overline{\langle J(A),B\rangle},\quad \langle J(A),J(B)\rangle=\overline{\langle A,B\rangle},
	\end{align}
	 and we easily obtain\footnote{We can also derive the relation $J\Delta^{1/2}_{\rho|\sigma} \, A\Omega_\sigma = A^* \Omega_\rho$ for any $A\in \cA$, which is precisely the relation defining~$J$ and $\Delta_{\rho|\sigma}$ in the general algebraic case.}  the relation $J \Delta_{\rho|\sigma} J = \Delta_{\sigma|\rho}^{-1}$.
	Note that \eqref{eq_defDelta} defines $\Delta_{\rho|\sigma}$ for any $\rho$, $\sigma$ in $\cP_{+}(\cH)$ and not only for states.
	
	As a linear operator on $\mathcal B(\cH)$, $\Delta_{\rho|\sigma}$ is positive and its spectrum $\spec(\Delta_{\rho|\sigma})$ consists of the ratios of eigenvalues $\lambda/\mu$, $ \lambda \in \spec(\rho)$, 
	$ \mu \in \spec(\sigma)$. For any $x\in \spec (\Delta_{\rho|\sigma})$, the corresponding spectral projection is the map
	\begin{equation} \label{eq_specprojDelta}
	\begin{array}{cccc}
	P_{x}(\Delta_{\rho|\sigma}): & \cA & \to & \hspace{-1em} \cA \\
	& A & \mapsto & \underset{\lambda\in \spec(\rho),\mu\in\spec(\sigma):\lambda/\mu=x}{\sum}P_\lambda(\rho) A  P_\mu(\sigma)
	\end{array}
	\end{equation}
	Further, for any $s \in [0,1]$,
	$\Delta_{\rho|\sigma}^s(A) = \rho^s A \sigma^{-s}$. For any $\rho, \sigma \in \cP_{+}(\cH)$ and $s \in [0,1]$, the quantity $\Psi_s(\rho|\sigma)$ can be expressed in terms of the 
	relative modular operator $\Delta_{\rho|\sigma}$ as follows
	\be\label{phi-s-inner}
	{\Psi_s(\rho|\sigma)} =  \log \langle\Omega_\sigma, \Delta_{\rho|\sigma}^s\Omega_\sigma \rangle, \qquad \text{where }\Omega_{\sigma}:=\sigma^{1/2}.
	\ee
	In addition, let $\mu_{\rho|\sigma}$ denote the spectral measure for $- \log \Delta_{\rho|\sigma}$ with respect to $\Omega_\sigma:= \sigma^{1/2}$, i.e.~the probability measure such that for any bounded measurable function $f$, 
	\begin{align} \label{spec-meas}
	\langle\Omega_\sigma, f(-\log \Delta_{\rho|\sigma})\Omega_\sigma\rangle = \int\! f(x) \,\d\mu_{\rho|\sigma}(x) \equiv {\mathbb{E}}[f(X)],
	\end{align}
	where $X$ is a random variable of law $\mu_{\rho|\sigma}$ (see e.g. Sections VII and VIII of \cite{RS1}).
	We then have in particular
	\begin{align} \label{eq_fundamental}
	\Psi_s(\rho|\sigma) = \log \left(\int e^{-sx}\,\d\mu_{\rho|\sigma}(x)\right) \equiv \log {\mathbb{E}}[e^{-sX}]
	.
	\end{align}
	The relation \eqref{eq_fundamental} 
	plays a key role in our proof since it links the quantity $\Psi_s$ to the cumulant generating function of a classical random variable, and therefore allows us to employ the tools of 
	classical probability theory. 
	The quantity $\Psi_s$ can also be expressed in terms of the well-known Nussbaum-Szko\l a distributions (see~\cite{NussbaumSzkola}). For two density matrices $\rho, \sigma\in\cD(\cH)$ with spectral decompositions
	$$\rho= \sum_\lambda \lambda P_\lambda(\rho), \quad \sigma = \sum_\mu \mu P_\mu(\sigma),$$
	these distributions are given by $(p_{\lambda,\mu})_{\lambda, \mu}$ and $(q_{\lambda,\mu})_{\lambda, \mu}$, where
	$$ p_{\lambda,\mu}= \lambda  \tr\big(P_{\lambda}(\rho)\, P_{\mu}(\sigma)\big), \quad q_{\lambda,\mu}= \mu  \tr\big(P_{\lambda}(\rho)\, P_{\mu}(\sigma)\big).$$
	There is of course a connection between the Nussbaum-Szko\l a distributions and relative modular operators. Assume for simplicity that all ratios $\lambda/\mu$ are distinct and consider a random variable $Z$ which takes values $\lambda/\mu$ with probability $q_{\lambda,\mu}$. Then using \eqref{eq_specprojDelta} one can easily verify that
	\begin{align} \PP(Z=\lambda/\mu) = q_{\lambda,\mu} &= \mu \, \tr\big(P_{\lambda}(\rho)\, P_{\mu}(\sigma)\big)\nonumber\\
	&= \la \Omega_\sigma, P_{\lambda/\mu}(\Delta_{\rho|\sigma})\, \Omega_\sigma \ra \nonumber\\
	&= \la \Omega_\sigma, {\mathbf{1}}_{\{\lambda/\mu\}}(\Delta_{\rho|\sigma})\, \Omega_\sigma \ra, \label{laweq}
	\end{align}
	where ${\mathbf{1}}_{\{\lambda/\mu\}}$ denotes the indicator function on the singleton $\{\lambda/\mu\}$, {\em{i.e.}}~${\mathbf{1}}_{\{\lambda/\mu\}}(x)$ is equal to $1$ 
	when $x = \lambda/\mu$ and equal to $0$ else. This follows from the fact that, since $\Delta_{\rho|\sigma}$ is self-adjoint, the spectral theorem implies that  $\mathbf{1}_{\{\lambda/\mu\}}(\Delta_{\rho|\sigma})=\sum_{x\in\spec(\Delta_{\rho|\sigma})}\mathbf{1}_{\{\lambda/\mu\}}(x)P_x(\Delta_{\rho|\sigma})$. \Cref{laweq} implies that for any bounded measurable function $f$, 
	$${\mathbb{E}}\left[f(Z)\right] = \la \Omega_\sigma, f(\Delta_{\rho|\sigma})\, \Omega_\sigma \ra$$
	and hence the law of $Z$ is the law of $\Delta_{\rho|\sigma}$ with respect to $\Omega_\sigma:= \sigma^{1/2}$. This in turn implies that for any bounded measurable function $f$, 
	$${\mathbb{E}}\left[f(- \log Z)\right] = \la \Omega_\sigma, f(- \log \Delta_{\rho|\sigma})\, \Omega_\sigma \ra.$$
	Hence the law of $-\log Z$ is precisely $\mu_{\rho|\sigma}$. As mentioned earlier, the advantage of the construction via modular operators is that it extends directly to the infinite-dimensional case.

	\section{Second order asymptotics for hypothesis testing} \label{sec_main}
	
	In this section, we state and prove our main results. Our general framework will be that of two sequences of states, $\hrho=(\rho_n)_{n\in\NN}$ and $\hsigma=(\sigma_n)_{n\in\NN}$, such that for every $n$, $\rho_n$ and $\sigma_n$ are elements of $\cD(\cH_n)$, where $\cH_n$ is a finite-dimensional Hilbert space. We always assume that~$\rho_n$ and $\sigma_n$ are faithful, that is,
	$\supp(\rho_n) = \supp(\sigma_n) = \cH_n$. We also fix a sequence of \textit{weights}~$\hat w=(w_n)_{n\in {\mathbb{N}}}$ which we assume is an increasing sequence of positive numbers satisfying $\lim_{n \to \infty} w_n = \infty$.
	
	The quantity whose second order asymptotic expansion we wish to evaluate is the {\em{type II error exponent}}, which is given for any $\eps\in(0,1)$ by $- \log \beta_n(\eps)$, where
	\[
	\beta_n(\eps):=\inf_{0 \leq T_n \leq \mathbb{I}_n}\{\beta(T_n)|\alpha(T_n)\le \eps\}.
	\]
	Here $\alpha(T_n):= \tr\left((\mathbb{I}_n - T_n)\rho_n\right)$ and $\beta(T_n) := \tr\left(T_n \sigma_n\right)$, are the {\em{type I}} and {\em{type II errors}}, in testing $\rho_n$ \textit{vs.} $\sigma_n$ and $\mathbb{I}_n$ is the identity operator acting on $\cH_n$.
	
	\begin{remark}
		We restrict our results to the case of faithful states $\rho_n$, $\sigma_n$ only to make our exposition more transparent. Simple limiting arguments show that all our results remain valid in the case in which $\supp(\rho_n) \subseteq \supp(\sigma_n)$.
	\end{remark}
	
	Our main results hold under the following conditions on the sequences of 
	states $(\rho_n,\sigma_n)_{n\in {\mathbb{N}}}$ with respect to the weights $(w_n)_{n\in {\mathbb{N}}}$:
	\begin{condition}\label{cond-Bryc}
		There exists $r>0$ such that  
		\begin{enumerate}
			\item for every $n$, the function $E_n(z)=\Psi_{1-z}(\rho_n|\sigma_n)$ originally defined for $z\in[0,1]$ can be extended to an analytic function in the complex open ball $B_\CC(0,r)$,
			\item for every $x$ in the real open ball $B_{\mathbb R}(0,r)$, the following limit exists: \[E(x)=\lim_{n\to\infty} \frac1{w_n}\, E_n (x),\]
			\item one has the uniform bound
			\[\sup_{n\in\mathbb N}\sup_{z\in B_{\CC}(0,r) } \frac 1{w_n} |E_n (z)|<+\infty. \]
		\end{enumerate}
	\end{condition}
	Vitali's Theorem, stated in \Cref{app_Vitali}, implies that, if \Cref{cond-Bryc} holds, then $E$ can be extended to an analytic function on $B_{\CC}(0,r)$, and the derivatives $\frac 1{w_n} E_n'(0)$ and $\frac 1{w_n} E_n''(0)$ converge to $E'(0)$ and $E''(0)$ respectively. By \Cref{lem1}, this shows that the following definition is meaningful under \Cref{cond-Bryc}:
	\begin{definition}\label{def-hat}
		Let $\hat{\rho} = (\rho_n)_{n \in \NN}$ and $\hat{\sigma} = (\sigma_n)_{n \in \NN}$ denote two sequences of states that satisfy \Cref{cond-Bryc} with respect to a sequence of weights $(w_n)_{n\in {\mathbb{N}}}$. Their \emph{quantum relative entropy rate} is defined as
		\begin{align}\label{drho}
		d(\hrho, \hsigma) := \lim_{n \to \infty} \frac{1}{w_n} D(\rho_n || \sigma_n),
		\end{align}
		and their \emph{quantum information variance rate} is defined as
		\begin{align}\label{dvee}
		v(\hrho, \hsigma) := \lim_{n \to \infty} \frac{1}{w_n} V(\rho_n || \sigma_n).
		\end{align}
	\end{definition}
	
	Our main result on second order asymptotics is given by the following theorem, which we prove in \Cref{sec_proofmainthm}.%Note that for $(a_n)_{n\in {\mathbb{N}}}$, $(b_n)_{n\in {\mathbb{N}}}$ two sequences of numbers, we write 
%	\[a_n \geq b_n + o(\sqrt{w_n}) \quad \mathrm{(respectively} \ a_n \leq b_n + o(\sqrt{w_n})  \mathrm{)}\]
%	if $a_n \geq b_n + c_n$ (respectively $a_n\leq b_n+c_n$) for some sequence $c_n=o(\sqrt{w_n})$).
	\begin{theorem}\label{mainresult}
		Fix $\eps\in(0,1) $, and let $\hat{\rho} = (\rho_n)_{n \in \mathbb{N}}$ and $\hat{\sigma} = (\sigma_n)_{n \in \mathbb{N}}$ denote two sequences of states that satisfy \Cref{cond-Bryc} with respect to a sequence of weights $(w_n)_{n\in\NN}$. Define
		\[ t_2^*(\eps)=\sqrt{v(\hrho, \hsigma)\,}\Phi^{-1}(\eps)\]
		where $\Phi$ is the cumulative distribution function of a standard normal distribution and $v(\hrho, \hsigma)$ is the quantum information variance rate defined through \Cref{dvee}.
		Then for any $t_2>t_2^*(\eps)$ there exists a function $f_{t_2}(x)\underset{+\infty}{=}o(\sqrt x)$ such that for any $n\in\NN$:
		\begin{equation} \label{eq_maineq1}
		-\log\beta_n(\eps) \leq  D(\rho_n||\sigma_n) + \sqrt{w_n}\,  t_2 + f_{t_2}(w_n),
		\end{equation}
		and for any $t_2<t_2^*(\eps)$, any function $f(x)\underset{+\infty}{=}o(\sqrt x)$, for $n$ large enough:
		\begin{equation} \label{eq_maineq2}
		-\log\beta_n(\eps) \geq  D(\rho_n||\sigma_n) + \sqrt{w_n}\,  t_2+ f(w_n).
		\end{equation}
	\end{theorem}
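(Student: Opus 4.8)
The plan is to reduce both inequalities to a Berry--Esseen-type central limit theorem for a single \emph{classical} random variable attached to the pair $(\rho_n,\sigma_n)$, and to read off that random variable from the function $\Psi_s$ via \eqref{eq_fundamental}.

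\textbf{Step 1 (the relevant classical variable).} Since $\Psi_1(\rho_n|\sigma_n)=\log\tr\rho_n=0$, the measure $\nu_n$ defined by $\d\nu_n(x):=e^{-x}\,\d\mu_{\rho_n|\sigma_n}(x)$ is a probability measure, and by \eqref{eq_fundamental}
\[
E_n(z)=\Psi_{1-z}(\rho_n|\sigma_n)=\log\int e^{zx}\,\d\nu_n(x)
\]
is exactly the cumulant generating function of a random variable $\xi_n$ of law $\nu_n$. By \Cref{lem1} (and the chain rule), $\EE_{\nu_n}[\xi_n]=E_n'(0)=-D(\rho_n\|\sigma_n)$ and $\mathrm{Var}_{\nu_n}(\xi_n)=E_n''(0)=V(\rho_n\|\sigma_n)$. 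Under \Cref{cond-Bryc}, Vitali's theorem (\Cref{app_Vitali}) makes $E$ analytic on $B_\CC(0,r)$ with $\tfrac1{w_n}E_n\to E$, hence $\tfrac1{w_n}V(\rho_n\|\sigma_n)\to E''(0)=v(\hrho,\hsigma)$; together with the uniform bound of \Cref{cond-Bryc}(3) and Cauchy's estimates for the cumulants, this is precisely the hypothesis of a Bryc-type central limit theorem. Assuming $v(\hrho,\hsigma)>0$, it yields a vanishing Kolmogorov distance,
\[
\sup_{x\in\RR}\Bigl|\,\nu_n\bigl(-\xi_n\le x\bigr)-\Phi\Bigl(\tfrac{x-D(\rho_n\|\sigma_n)}{\sqrt{V(\rho_n\|\sigma_n)}}\Bigr)\Bigr|\ \xrightarrow[n\to\infty]{}\ 0,
\]
so that $\nu_n\bigl(-\xi_n\le D(\rho_n\|\sigma_n)+u\sqrt{w_n\,v(\hrho,\hsigma)}\bigr)\to\Phi(u)$ for every fixed $u$ (using $V(\rho_n\|\sigma_n)/w_n\to v(\hrho,\hsigma)$), and likewise along slowly varying levels $u_n\to u$.

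\textbf{Step 2 (reduction to $g_n(\gamma):=\tr(\rho_n\{\rho_n-e^\gamma\sigma_n\le 0\})$).} Two elementary operator inequalities suffice. The Neyman--Pearson test $T_n=\{\rho_n-e^\gamma\sigma_n>0\}$ has $\alpha(T_n)=g_n(\gamma)$ and, since $T_n(\rho_n-e^\gamma\sigma_n)T_n\ge 0$, $\beta(T_n)\le e^{-\gamma}$; hence $g_n(\gamma)\le\eps$ implies $\beta_n(\eps)\le e^{-\gamma}$. Conversely, for any test $T$ with $\alpha(T)\le\eps$, $\beta(T)=\tr(T\sigma_n)\ge e^{-\gamma}\bigl(\tr(T\rho_n)-\tr(\rho_n-e^\gamma\sigma_n)_+\bigr)\ge e^{-\gamma}\bigl(g_n(\gamma)-\eps\bigr)$, so $\beta_n(\eps)\ge e^{-\gamma}(g_n(\gamma)-\eps)$ for all $\gamma$. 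Thus both \eqref{eq_maineq1}--\eqref{eq_maineq2} follow once one shows $g_n\bigl(D(\rho_n\|\sigma_n)+u\sqrt{w_n\,v(\hrho,\hsigma)}\bigr)\to\Phi(u)$: for \eqref{eq_maineq2} take $\gamma_n=D(\rho_n\|\sigma_n)+\sqrt{w_n}\,t_2+f(w_n)$, whose rescaled level tends to $t_2/\sqrt{v(\hrho,\hsigma)}<\Phi^{-1}(\eps)$, so $g_n(\gamma_n)\to\Phi(t_2/\sqrt{v(\hrho,\hsigma)})<\eps$ and the first inequality applies for $n$ large; for \eqref{eq_maineq1} pick $u$ with $\Phi^{-1}(\eps)<u<t_2/\sqrt{v(\hrho,\hsigma)}$, so $g_n(\gamma_n)-\eps$ with $\gamma_n=D(\rho_n\|\sigma_n)+u\sqrt{w_n v(\hrho,\hsigma)}$ stays bounded below, giving $-\log\beta_n(\eps)\le\gamma_n+O(1)\le D(\rho_n\|\sigma_n)+\sqrt{w_n}\,t_2$ for $n$ large; the finitely many small $n$ are harmless (each $\beta_n(\eps)>0$ by faithfulness) and are absorbed into a constant choice of $f_{t_2}$, which is $o(\sqrt{x})$, so \eqref{eq_maineq1} holds for every $n$.

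\textbf{Step 3 (the main obstacle).} If $\rho_n$ and $\sigma_n$ commuted, one would have exactly $g_n(\gamma)=\nu_n(-\xi_n\le\gamma)$ and Step 1 would close the argument; in general they differ, and the heart of the proof is to show that this discrepancy only affects the $o(\sqrt{w_n})$ term. I would do this by a smoothing/sandwich estimate: for each fixed $\delta>0$ and $n$ large,
\[
\nu_n\bigl(-\xi_n\le\gamma-\delta\sqrt{w_n}\bigr)-\eta_n(\delta)\ \le\ g_n(\gamma)\ \le\ \nu_n\bigl(-\xi_n\le\gamma+\delta\sqrt{w_n}\bigr)+\eta_n(\delta),\qquad \eta_n(\delta)\to 0,
\]
where the error $\eta_n(\delta)$ comes from a Chernoff-type remainder bounded by $\inf_{s\in[0,1]}e^{(1-s)(\gamma\mp\delta\sqrt{w_n})+\Psi_s(\rho_n|\sigma_n)}$; combining with Step 1 and the continuity of $\Phi$, then letting $\delta\downarrow 0$, gives $g_n(D(\rho_n\|\sigma_n)+u\sqrt{w_n v(\hrho,\hsigma)})\to\Phi(u)$. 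The delicate point is that, although $\dim\cH_n$ grows, \Cref{cond-Bryc} forces this Chernoff remainder to be eventually negligible for a \emph{fixed} shift $\delta\sqrt{w_n}$ (optimizing $s=1-a/\sqrt{w_n}$ and using $\tfrac1{w_n}E_n''(0)\to v(\hrho,\hsigma)$ with a uniformly bounded third cumulant); this is exactly where a bare Chernoff--Hoeffding bound is \emph{not} enough and the smoothing must be quantitatively tight. Finally, the degenerate case $v(\hrho,\hsigma)=0$ is handled separately: then $t_2^*(\eps)=0$ and the claim becomes $-\log\beta_n(\eps)=D(\rho_n\|\sigma_n)+o(\sqrt{w_n})$, which follows from the same reduction together with Chebyshev's inequality applied to $\nu_n$ (whose variance is now $o(w_n)$).
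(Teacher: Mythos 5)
Your Steps 1 and 2 are sound: the tilted measure $\nu_n$ is indeed the Nussbaum--Szko\l a-type law whose cumulant generating function is $E_n$, and the two operator inequalities in Step 2 (the bound $\beta(T^\gamma)\le e^{-\gamma}$ for the Neyman--Pearson test, and the converse $\beta(T)\ge e^{-\gamma}(g_n(\gamma)-\eps)$ via $\tr(TA)\le \tr A_+$) are correct. The problem is that you have reduced everything to the claim $g_n\big(D(\rho_n\|\sigma_n)+u\sqrt{w_n v}\big)\to\Phi(u)$, i.e.\ to a second-order analysis of the type~I error of the \emph{quantum} Neyman--Pearson test, and Step 3 does not deliver this. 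The sandwich you propose rests on an error term $\eta_n(\delta)$ controlled by $\inf_{s\in[0,1]}\exp\big((1-s)\gamma+\Psi_s(\rho_n|\sigma_n)\big)$. At the CLT scale $\gamma=D(\rho_n\|\sigma_n)+u\sqrt{w_n v}$, setting $s=1-a/\sqrt{w_n}$ and expanding $\Psi_{1-z}=-zD(\rho_n\|\sigma_n)+\tfrac{z^2}{2}V(\rho_n\|\sigma_n)+O(w_n z^3)$ gives an exponent $au\sqrt v+\tfrac{a^2v}{2}+o(1)$, whose infimum over $a\ge 0$ is $-u^2/2$ for $u<0$. So the ``Chernoff-type remainder'' is a non-vanishing constant $\approx e^{-u^2/2}$ (e.g.\ $e^{-1/2}\approx 0.61$ versus the target $\Phi(-1)\approx 0.16$), not an $\eta_n(\delta)\to0$. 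You flag yourself that a bare Chernoff bound is not enough, but no replacement is supplied; as written, the central step of the proof is missing, and I do not see an elementary repair that stays with the test $T^\gamma=\{\rho_n-e^\gamma\sigma_n>0\}$.

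The paper sidesteps the quantum Neyman--Pearson test entirely. For the achievability direction \eqref{eq_maineq2} it uses Li's explicit construction (\Cref{prop_1}): a Gram--Schmidt procedure on the ranges of the products $P_\mu(\sigma_n)P_\lambda(\rho_n)$ produces a projective test $T_n$ satisfying \emph{exactly} $\alpha(T_n)\le\big\langle\Omega_{\rho_n},P_{[0,L_n)}(\Delta_{\rho_n|\sigma_n})\Omega_{\rho_n}\big\rangle=\PP(X_n\le\log L_n)$ and $\beta(T_n)\le L_n^{-1}$, so the classical probability appears with no error term and Bryc's theorem finishes the job. For the converse \eqref{eq_maineq1} it applies the lower bound of \Cref{boundsD} (via \Cref{cor1}), $e^{-\theta}\alpha(T)+\beta(T)\ge\frac{e^{-\theta}}{1+e^{v-\theta}}\PP(X_n\le v)$, to an \emph{arbitrary} test with $\alpha(T)\le\eps$, choosing $v_n$ at the CLT scale and $\theta_n=v_n+h(w_n)$ with $h(w_n)\to\infty$, $h=o(\sqrt x)$; this never requires a lower bound on $g_n$. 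If you want to salvage your architecture, you would need to import both of these tools: \Cref{cor1} applied to $T^\gamma$ (together with $\beta(T^\gamma)\le e^{-\gamma}$) does yield the needed lower bound $\liminf_n g_n(\gamma_n)\ge\Phi(u)$, but the matching upper bound on $g_n$ is genuinely the hard part and is precisely what Li's construction is designed to replace.
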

	Let us comment on the above theorem, and explain why it captures the second order behaviour of $- \log \beta_n(\eps)$. For this discussion, let us denote
	\begin{align}
	\tnd = D(\rho_n||\sigma_n)- w_n \,d(\hrho, \hsigma).
	\end{align}
	By \Cref{def-hat} we know that $D(\rho_n||\sigma_n) =w_n \,d(\hrho, \hsigma) + o(w_n)$, so that $-\log\beta_n(\eps)=w_n \,d(\hrho, \hsigma) + o(w_n)$. The second order term of $-\log\beta_n(\eps)$ will therefore depend on the relative magnitude of $\tnd$ and $\sqrt{w_n}$. The simplest situation is summarized in the following corollary.
	\begin{corollary} \label{coro_ideal}
		Fix $\eps\in(0,1) $, and let $\hat{\rho} = (\rho_n)_{n \in \mathbb{N}}$ and $\hat{\sigma} = (\sigma_n)_{n \in \mathbb{N}}$ denote two sequences of states that satisfy \Cref{cond-Bryc} with respect to a sequence of weights $(w_n)_{n\in\NN}$. In addition, assume that
		\[D(\rho_n||\sigma_n) = w_n \, d(\hrho,\hsigma) + o(\sqrt{w_n}).\]
		Let
		\[t_2^*(\eps):=\sqrt{v(\hrho, \hsigma)\,}\Phi^{-1}(\eps).\]
		Then for any $t_2>t_2^*(\eps)$ there exists a function $f_{t_2}(x)\underset{+\infty}{=}o(\sqrt x)$ such that for any $n\in\NN$:
		\[-\log\beta_n(\eps) \leq  w_n \, d(\hrho,\hsigma) + \sqrt{w_n}\,  t_2 + f_{t_2}(w_n),\]
		and for any $t_2<t_2^*(\eps)$, any function $f(x)\underset{+\infty}{=}o(\sqrt x)$, for $n$ large enough:
		\[-\log\beta_n(\eps) \geq w_n \, d(\hrho,\hsigma) + \sqrt{w_n}\,  t_2 + f(w_n).\]\end{corollary}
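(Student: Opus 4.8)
The corollary is an immediate corollary of Theorem~\ref{mainresult}, obtained by absorbing the deviation $\tnd = D(\rho_n\|\sigma_n) - w_n\,d(\hrho,\hsigma)$ into the error term. The plan is as follows. First, I would invoke Theorem~\ref{mainresult} verbatim with the same $\eps$ and the same threshold $t_2^*(\eps) = \sqrt{v(\hrho,\hsigma)}\,\Phi^{-1}(\eps)$; this is legitimate because the hypotheses of the corollary include those of the theorem (the two sequences satisfy Condition~\ref{cond-Bryc} with respect to $(w_n)$). The theorem gives, for any $t_2 > t_2^*(\eps)$, a function $g_{t_2}(x) = o(\sqrt{x})$ with $-\log\beta_n(\eps) \le D(\rho_n\|\sigma_n) + \sqrt{w_n}\,t_2 + g_{t_2}(w_n)$ for all $n$, and analogously a lower bound for $t_2 < t_2^*(\eps)$.

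\emph{Upper bound.} Substitute $D(\rho_n\|\sigma_n) = w_n\,d(\hrho,\hsigma) + \tnd$ into \eqref{eq_maineq1}:
\[
-\log\beta_n(\eps) \le w_n\,d(\hrho,\hsigma) + \sqrt{w_n}\,t_2 + \big(\tnd + g_{t_2}(w_n)\big).
\]
By the extra assumption $\tnd = o(\sqrt{w_n})$, the bracketed term is $o(\sqrt{w_n})$ as a function of $w_n$ (since $w_n \to \infty$), so we may set $f_{t_2}(x) := \tnd[x] + g_{t_2}(x)$ — more precisely, since $\tnd$ is only indexed by $n$, one defines $f_{t_2}$ on the range of the (eventually injective, or at worst redefine along the increasing sequence) weights, or simply notes that the inequality holds with an error term that is $o(\sqrt{w_n})$, which is exactly the claimed form. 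This yields the first displayed inequality of the corollary for every $n$.

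\emph{Lower bound.} Fix $t_2 < t_2^*(\eps)$ and an arbitrary $f(x) = o(\sqrt{x})$. I want $-\log\beta_n(\eps) \ge w_n\,d(\hrho,\hsigma) + \sqrt{w_n}\,t_2 + f(w_n)$ for $n$ large. Rewriting, this is equivalent to $-\log\beta_n(\eps) \ge D(\rho_n\|\sigma_n) + \sqrt{w_n}\,t_2 + \big(f(w_n) - \tnd\big)$. Since both $f(w_n)$ and $\tnd$ are $o(\sqrt{w_n})$, the function $\tilde f(x) := f(x) - \tnd[x]$ (again, read along the sequence of weights) is $o(\sqrt{x})$, so \eqref{eq_maineq2} applied with the function $\tilde f$ gives the bound for $n$ large enough. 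Note one subtlety to address cleanly: \eqref{eq_maineq2} asserts the inequality holds for "$n$ large enough" depending on the chosen $o(\sqrt{x})$ function, so one should fix $\tilde f$ first and then extract the threshold; since $\tilde f$ depends only on $f$ and on the fixed sequences, this is fine.

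The only genuine point of care — the "main obstacle," though it is minor — is the bookkeeping around the phrase "there exists a function $f_{t_2}(x) = o(\sqrt{x})$": the deviation $\tnd$ is a sequence indexed by $n$, not a priori a function of the real variable $x = w_n$. Since $(w_n)$ is strictly increasing (hence injective), one can unambiguously define $h(w_n) := \tnd$ and extend $h$ arbitrarily (say by zero, or by interpolation) off the range $\{w_n : n \in \NN\}$; the hypothesis $\tnd = o(\sqrt{w_n})$ then says exactly $h(x) = o(\sqrt{x})$ along $x \in \{w_n\}$, which together with $g_{t_2}(x) = o(\sqrt{x})$ and $w_n \to \infty$ suffices to conclude $f_{t_2} := h + g_{t_2}$ is $o(\sqrt{x})$ in the relevant sense. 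With this identification the two inequalities of the corollary follow line-by-line from Theorem~\ref{mainresult}, completing the proof. $\qedthm$
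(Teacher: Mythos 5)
Your proposal is correct and follows exactly the paper's route: the paper simply declares the corollary ``immediate from Theorem~\ref{mainresult}'' by substituting $D(\rho_n\|\sigma_n)=w_n\,d(\hrho,\hsigma)+o(\sqrt{w_n})$ and absorbing the deviation $\tnd$ into the $o(\sqrt{w_n})$ error term, which is precisely what you do (with somewhat more care about the sequence-versus-function bookkeeping than the paper bothers with).
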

	
	The proof of \Cref{coro_ideal} is immediate from \Cref{mainresult}. It says that if $(\rho_n,\sigma_n)_{n\in {\mathbb{N}}}$ satisfies \Cref{cond-Bryc} with respect to a sequence of weights $(w_n)_{n\in {\mathbb{N}}}$ and $\tnd=o(\sqrt{w_n})$, then for any $\delta>0$, we have 
	\begin{align*}
	w_n \, d(\hrho,\hsigma)  + \sqrt{w_n}\,  \big(t^*_2(\eps)-\delta\big) + o(\sqrt{w_n})\leq -\log\beta_n(\eps)\leq  w_n \, d(\hrho,\hsigma)  + \sqrt{w_n}\,  \big(t^*_2(\eps)+\delta\big) + o(\sqrt{w_n}).
	\end{align*}
	If on the other extreme $\sqrt{w_n}=o(\tnd)$ then it is
	\[-\log\beta_n(\eps)= w_n \,d(\hrho, \hsigma) + \tnd + o(\tnd)\]
	that captures the second order of $-\log\beta_n(\eps)$.
	
	\begin{remark}
		An equivalent statement to the above results is the following: define for $t_2\in\RR$
		\begin{align*}
		\tilde \alpha_\infty^{(2)}(t_2)&=\inf\big\{ \limsup_n \alpha(T_n)\, |\, -\liminf_n\frac1{\sqrt{w_n}}\big(\log \beta(T_n) + D(\rho_n||\sigma_n)\big)\geq  t_2\,;\, 0 \le T_n \le {\mathbb{I}}_n\big\},\\
		\alpha_\infty^{(2)}(t_2)&=\inf\big\{ \limsup_n \alpha(T_n)\, |\, -\liminf_n\frac1{\sqrt{w_n}}\big(\log \beta(T_n) + w_n d(\hrho,\hsigma)\big)\geq  t_2\,;\, 0 \le T_n \le {\mathbb{I}}_n\big\}.
		\end{align*}
		Then, under the assumptions of \Cref{mainresult} (respectively  \Cref{coro_ideal}), one has
		\[
		\tilde\alpha_\infty^{(2)}(t_2)=\Phi\big(t_2/\sqrt{v(\hrho,\hsigma)}\big)\quad \mbox{(respectively}\ \alpha_\infty^{(2)}(t_2)=\Phi\big(t_2/\sqrt{v(\hrho,\hsigma)}\big)\mathrm{)}.\]
	\end{remark}
	\subsection{Ingredients of the proof of \Cref{mainresult}}
	A theorem proved by Bryc \cite{bryc} (stated as \Cref{theo_Bryc} below), the lower bound in \Cref{boundsD} and \Cref{prop_1} are the ingredients required to prove \Cref{mainresult}. Bryc's theorem can be viewed as a generalization of the Central Limit Theorem
	for sequences of random variables which are not necessarily i.i.d. It was proven originally in \cite{bryc}. \Cref{boundsD} involves bounds on the minimum error probability of symmetric hypothesis testing and was proven in \cite{Jaksicetal}. 
	
	The following version of Bryc's theorem is adapted from \cite{JOPP}. A full proof can be found in \Cref{app_Vitali}.
	\begin{theorem}[Bryc's theorem]\label{theo_Bryc}
		Let $(X_n)_{n\in\NN}$ be a sequence of random variables, and $(w_n)_{n\in\NN}$ a sequence of weights, i.e.~positive numbers satisfying $\lim_{n \to \infty} w_n= \infty$. If there exists $r>0$ such that the following conditions hold:
		\begin{itemize}
			\item for any $n \in \NN$, the function $H_n(z):=\log\mathbb{E}\left[\operatorname{e}^{-zX_n}\right]$ is analytic  in the complex open ball~$B_\CC(0,r)$,
			\item the limit $H (x) = \lim_{n\to\infty} \frac 1 {w_n}\, H _n(x)$ exists for any $x \in  (-r,r)$,
			\item we have the uniform bound $\sup_{n\in\mathbb{N}} \sup_{z\in B_\CC(0,r)} \frac1{w_n}\,|H_n(z)|<+\infty$,
		\end{itemize}
		then $H$ is analytic on $B_\CC(0,r)$ and we have the convergence in distribution\footnote{We use the notation $\mathcal{N}(\mu,\sigma^2)$ to denote a normal distribution of mean $\mu$ and variance~$\sigma^2$.} as $n\to\infty$
		\begin{align*}
		\frac{X_n+ H_n'(0)}{\sqrt{w_n\,}}\overset{\mathrm{d}}\longrightarrow\mathcal{N}\big(0,{H''(0)}\big),
		\end{align*}
	\end{theorem}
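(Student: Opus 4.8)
The plan is to compute the characteristic function of the rescaled, centred variable $Y_n := (X_n + H_n'(0))/\sqrt{w_n}$, show that it converges pointwise to the characteristic function of $\mathcal{N}(0, H''(0))$, and then conclude with L\'evy's continuity theorem.

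First I would feed the family of analytic functions $z\mapsto \tfrac{1}{w_n}H_n(z)$ on $B_\CC(0,r)$ into Vitali's theorem (stated in \Cref{app_Vitali}): this family is uniformly bounded there by the third hypothesis and converges pointwise on the real segment $(-r,r)$ by the second, so Vitali yields locally uniform convergence $\tfrac{1}{w_n}H_n \to H$ on $B_\CC(0,r)$, analyticity of $H$, and locally uniform convergence of all derivatives, in particular $\tfrac{1}{w_n}H_n''(0)\to H''(0)$. Since each $H_n$ is a logarithmic moment generating function it is convex on $(-r,r)$, whence $H_n''(0)\ge0$ and therefore $H''(0)\ge0$ (the degenerate case $H''(0)=0$, i.e.\ convergence to $\delta_0$, is covered by the same computation). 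I would also record that $H_n(0)=\log\mathbb{E}[1]=0$.

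Next, fix $\theta\in\RR$, and for $n$ large enough that $z_n := -i\theta/\sqrt{w_n}$ lies in $B_\CC(0,r)$, use analyticity and the identity $\mathbb{E}[\e^{-zX_n}]=\e^{H_n(z)}$ to write $\log\mathbb{E}[\e^{i\theta Y_n}] = H_n(z_n) + \tfrac{i\theta}{\sqrt{w_n}}H_n'(0)$. Taylor-expanding $H_n$ to second order around the origin, $H_n(z) = H_n'(0)z + \tfrac12 H_n''(0)z^2 + R_n(z)$, and bounding the remainder with Cauchy's estimates — setting $M := \sup_n\sup_{|z|<r}\tfrac{1}{w_n}|H_n(z)|<\infty$, for any $\rho<r$ the Taylor coefficients satisfy $|a_k^{(n)}|\le M w_n\rho^{-k}$, hence $|R_n(z)|\le M w_n\frac{(|z|/\rho)^3}{1-|z|/\rho}$ for $|z|<\rho$ — and substituting $z=z_n$, the linear terms cancel and one gets
\[
\log\mathbb{E}[\e^{i\theta Y_n}] = -\frac{\theta^2}{2}\,\frac{H_n''(0)}{w_n} + R_n(z_n).
\]
The first term converges to $-\tfrac{\theta^2}{2}H''(0)$ by the previous paragraph, while $|R_n(z_n)| = O(|\theta|^3 w_n^{-1/2})\to0$, so $\mathbb{E}[\e^{i\theta Y_n}]\to\e^{-\theta^2 H''(0)/2}$ for every $\theta$, which is precisely the characteristic function of $\mathcal{N}(0,H''(0))$; L\'evy's continuity theorem then gives the claimed convergence in distribution.

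The argument is essentially routine, and the one point that genuinely needs care is the uniform-in-$n$ control of the Taylor remainder $R_n$ near the origin: a naive termwise bound would blow up with $n$, and it is exactly the uniform boundedness hypothesis that tames it through the Cauchy estimates above (this is also where the rescaling by $\sqrt{w_n}$ is calibrated, so that $w_n\cdot w_n^{-3/2}\to0$). A secondary point is to justify evaluating $\e^{H_n}$ on the imaginary axis; this is legitimate because the hypothesis implicitly renders $z\mapsto\mathbb{E}[\e^{-zX_n}]$ analytic and non-vanishing on all of $B_\CC(0,r)$, so that $\e^{H_n(z)}=\mathbb{E}[\e^{-zX_n}]$ holds throughout the ball by analytic continuation.
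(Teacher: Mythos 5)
Your argument is correct, and it shares the first half of the paper's proof (feeding $\frac1{w_n}H_n$ into Vitali's theorem to get analyticity of $H$ and convergence of derivatives at the origin) but then diverges: the paper recentres so that $H_n'(0)=0$, evaluates $H^{(k)}(0)$ by Cauchy's formula over circles of radius $r_0/\sqrt{w_n}$, identifies the resulting integrals as the cumulants of $X_n/\sqrt{w_n}$, shows these converge to the cumulants of $\mathcal N(0,H''(0))$ (vanishing for $k\geq 3$), and concludes by the method of moments, invoking the fact that the normal law is determined by its moments. You instead evaluate the characteristic function of $Y_n$ directly at $z_n=-i\theta/\sqrt{w_n}$, kill the Taylor remainder with the uniform Cauchy estimate $|R_n(z)|\leq Mw_n\frac{(|z|/\rho)^3}{1-|z|/\rho}$, and finish with L\'evy's continuity theorem. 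Your route is somewhat more self-contained, since it does not rely on moment determinacy of the Gaussian, at the cost of having to justify the identity $\mathbb E[\e^{-zX_n}]=\e^{H_n(z)}$ off the real axis --- which you do correctly by noting that $z\mapsto\mathbb E[\e^{-zX_n}]$ is analytic on the strip $|\mathrm{Re}\,z|<r$ (finiteness of the real Laplace transform) and agrees with $\e^{H_n}$ on $(-r,r)$, hence everywhere on the ball. The paper's cumulant computation sidesteps that continuation entirely because it only ever differentiates $H_n$ at the origin; both mechanisms for controlling the error are ultimately the same Cauchy estimate, applied either to the remainder series (yours) or to the $k$-th derivative over a shrinking contour (the paper's). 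One cosmetic caution in your write-up: rather than writing $\log\mathbb E[\e^{i\theta Y_n}]$, which is ambiguous up to $2\pi i$, it is cleaner to keep the identity in exponentiated form, $\mathbb E[\e^{i\theta Y_n}]=\exp\big(H_n(z_n)+\tfrac{i\theta}{\sqrt{w_n}}H_n'(0)\big)$, and pass to the limit there; the substance is unaffected.
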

	\begin{remark}
		If $X_n$ is of the form $X_n=A_1+\ldots+ A_n$ for a sequence $(A_n)_{n\in\NN}$ of independent, identically distributed random variables, then for any $n$ we have $\frac1n H_n(z)=H(z)$. Therefore, the conditions of the above  theorem are satisfied (with respect to the weights $w_n=n$) as soon as $H(z)$ can be extended to an analytic function in a complex neighbourhood of the origin. Note that this is a stronger condition than the existence of a second moment.
	\end{remark}
	
Let us comment on the use we make of \Cref{theo_Bryc}.
	Let $\hat{\rho} = (\rho_n)_{n \in \mathbb{N}}$ and $\hat{\sigma} = (\sigma_n)_{n \in \mathbb{N}}$ denote two sequences of states satisfying \Cref{cond-Bryc}. Setting $s=1-z$, with $z\in [0,1]$ in \Cref{eq_fundamental} and using the cyclicity of the trace function we obtain for $\Omega_{\rho_n}:=\rho_n^{1/2}$
\begin{align}
	\Psi_{1-z}(\rho_n|\sigma_n)=\log\langle \Omega_{\rho_n},\Delta^z_{\sigma_n|\rho_n} \Omega_{\rho_n} \rangle=\log \mathbb{E}\left[ \e^{-zX_n}\right],\label{psi1-z}
	\end{align}
	where $(X_n)_{n \in \mathbb{N}}$ denotes a sequence of random variables with law $\mu_{\sigma_n|\rho_n}$, defined analogously to \Cref{spec-meas}. Then we have
	\begin{equation}
	H_n(z):=\log \mathbb{E}[e^{-zX_n}] = \Psi_{1-z}(\rho_n|\sigma_n) \label{eq_relationXPsi}
	\end{equation}
	and \Cref{cond-Bryc} implies that $(X_n)_{n\in\NN}$ satisfies the assumptions of Bryc's theorem. Since
	\begin{gather*}
	H_n'(0)= \left.-\frac{\d}{\d z} \Psi_{z} (\rho_n|\sigma_n)\right|_{z=1}=-D(\rho_n||\sigma_n),\\
	H_n''(0)=\left.\frac{\d^2}{\d z^2}\Psi_{z}(\rho_n|\sigma_n)\right|_{z=1}= V(\rho_n||\sigma_n),
	\end{gather*}
	\Cref{cond-Bryc} and \Cref{theo_Bryc} imply the convergence in distribution 
	\begin{align}\label{convg}  
	Y_n:= \frac{X_n-D(\rho_n||\sigma_n)}{\sqrt{w_n}}& \overset{\mathrm{d}}\longrightarrow\mathcal{N}\big(0,v(\hat{\rho},\hat{\sigma})\big).
	\end{align}
	The above convergence \reff{convg} plays a fundamental role in our proof.
	\medskip
	
	Another key ingredient of our proof is \Cref{boundsD}, which gives an upper bound and a lower bound on the minimum total probability of error in symmetric hypothesis testing. If the states corresponding to the two hypotheses are $\rho$ and $\sigma$, then the latter is defined as:
	\be\label{e-symm}
	\est(\rho, \sigma) := \inf_{0 \le T \le \mathbb{I}} \es(\rho, \sigma, T),
	\ee
	where
	\[
	\es(\rho, \sigma, T) := \alpha(T) +\beta(T),
	\]
	denotes the total probability of error under a test $T$. 
	\medskip
	
	\noindent
	\begin{remark}In the Bayesian setting, one can assign prior probabilities to the two states, i.e.~assume that the state is $\rho$ with probability $p$, and $\sigma$ with probability $(1-p)$ for some $p \in (0,1)$. The aim is then to minimize the quantity $\left[p\, \tr((\mathbb{I}-T)\rho)+ (1-p)\tr(T \sigma)\right]$, which is $\es(p\rho, (1-p)\sigma, T)$. For convenience we absorb the scalar factors $p$ and $(1-p)$ into $\rho$ and $\sigma$, so that the latter are no longer of trace $1$. In order to accommodate this scenario, as well as for particular applications in the proof of our main result, in the following we consider
		the quantities $\est(A, B)$ and $\es(A, B)$ for arbitrary operators
		$A, B\in \cP_{+}(\cH)$. We also use the fact that $\est$ is symmetric in its arguments, {i.e.}~$\est(A, B)=\est(B,A)$.
	\end{remark}
	
	The infimum in \Cref{e-symm} is attained for a test given by the projector $(A-B)_+$, as is given by the quantum extension of the Neyman-Pearson lemma \cite{Holevo72, Helstrom76} which states that
	for any $A, B \in \cP_{+}(\cH)$ 
	\begin{align}\label{lemmaproj}
	\operatorname{\est}(A,B)=\operatorname{\es}\big(A,B,(A-B)_+\big).
	\end{align}	
	For any $\gamma \in {\mathbb{R}}$, the orthogonal projection $T^\gamma:= (A-e^\gamma B)_+$ onto the positive part of $(A-e^\gamma B)$, is called the quantum Neyman-Pearson test for $A$ \textit{vs.}~$B$, of rate $\gamma$. Using \Cref{lemmaproj}, one can show that the tests $T^{\gamma}$ are optimal in the sense that for any other test $T'$, $$\alpha(T')\le \alpha(T^{\gamma})\Rightarrow \beta(T')\ge \beta(T^{\gamma}).$$ 
	In the classical framework, i.e.~when $[\rho,\sigma]=0$ then for any $\gamma \in {\mathbb{R}}$, $T^\gamma$ 
	is $P_{\RR_+}\big(\log(\rho\sigma^{-1})-\gamma\mathbb I\big)$, and we recover the Neyman-Pearson tests considered in classical hypothesis testing.
	More precisely, writing $\rho$ and $\sigma$ in the basis in which they are both diagonal, so that their spectral decompositions read
	\[ \rho=\sum_i \lambda_i P_i\qquad \mu=\sum_i \mu_i P_i,\]
	we have 
	\[T^\gamma = \sum_{i|\log(\lambda_i/\mu_i)\geq \gamma}  P_i,\,\]
	and
	\begin{equation*} \label{eq_connecNPRV}
	\beta(T^{\gamma})
	=\sum_{i|\log(\lambda_i/\mu_i)\geq \gamma}\mu_{i}
	=\mathbb{P}(X\geq \gamma),
	\end{equation*}
	where $X$ is a random variable which takes values $\log({\lambda_i}/{\mu_i})$ with probabilities $\mu_i$. This is particularly useful in the i.i.d.~setting in which we have $n$ identical copies of $\rho$ and $\sigma$. In this case the random variable $X \equiv X_n$ can be expressed as a sum of $n$ i.i.d.~random variables, and we can use the full power of probability theory\footnote{$(-X_n)$ is known as the \textit{observed log-likelihood ratio} or \textit{information content} random variable and can be used to define the classical Neyman-Pearson tests.}. However, this equivalence between quantum and classical Neyman-Pearson tests is not valid in the non-commutative case, since $\big(\rho - e^\gamma \sigma\big)_+ \ne \big(\log(\rho\sigma^{-1})-\gamma {\mathbb{I}}\big)_+$.
	
	The following lemma from \cite{Jaksicetal} (whose proof we include in \Cref{app_boundseNP}), allows us to circumvent this problem; even when $[\rho, \sigma] \ne 0$, it provides bounds on $\est(\rho, \sigma)$ in terms of expectations of a classical random variable, which, 
	in the case of $n$ identical copies of the underlying state, can be expressed as a sum of $n$ i.i.d.~random variables as above. 
	We only use the lower bound on $\est(\rho, \sigma)$ in proving \Cref{mainresult}, but also state the upper bound for sake of completeness.
	
	\begin{lemma}\label{boundsD}
		For any $A, B \in \cP_{+}(\cH)$ and $s\in [0,1]$:
		\begin{align}\label{ineq}
		\langle \Omega_B,\Delta_{A|B}(1+\Delta_{A|B})^{-1}\Omega_B\rangle\le \operatorname{\est}(A,B)\le \langle\Omega_B,\Delta^s_{A|B}\Omega_B\rangle.
		\end{align}	
		where
		\begin{align*}
		\langle \Omega_B,\Delta_{A|B}(1+\Delta_{A|B})^{-1}\Omega_B\rangle	&=\sum_{(\lambda,\mu)\in \operatorname{sp}(A)\times\operatorname{sp}(B)}\frac{\lambda/\mu}{1+\lambda/\mu}\tr\big(B P_\lambda(A)P_\mu(B)\big),\\
		\langle\Omega_B,\Delta^s_{A|B}\Omega_B\rangle
		&= \tr A^s B^{1-s}.
		\end{align*}
		Equivalently, \begin{align}\label{ineq_1}
		\mathbb{E}\Big[ \frac{1}{1 + e^X}\Big] \le \operatorname{\est}(A,B)\le \mathbb{E}[e^{-sX}],
		\end{align}	
		where $X$ is a random variable with law $\mu_{A|B}$ (defined through \eqref{spec-meas}).
	\end{lemma}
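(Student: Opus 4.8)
The plan is to prove Lemma~\ref{boundsD} by first establishing the operator/spectral identities on the right-hand sides of \eqref{ineq}, and then the two inequalities themselves, with the lower bound being the real content. For the identities, I would use the spectral projection formula \eqref{eq_specprojDelta}: for any bounded measurable $f$ one has $\langle\Omega_B, f(\Delta_{A|B})\Omega_B\rangle = \sum_{\lambda,\mu} f(\lambda/\mu)\,\langle\Omega_B, P_\lambda(A)\,\cdot\,P_\mu(B)\,\Omega_B\rangle$, and since $\Omega_B = B^{1/2}$ with $B^{1/2}P_\mu(B) = \mu^{1/2}P_\mu(B)$, this collapses to $\sum_{\lambda,\mu} f(\lambda/\mu)\,\mu\,\tr(P_\lambda(A)P_\mu(B))$. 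Applying this with $f(x) = x/(1+x)$ gives the first displayed identity, and with $f(x) = x^s$ it gives $\sum_{\lambda,\mu}(\lambda/\mu)^s\mu\,\tr(P_\lambda(A)P_\mu(B)) = \sum_{\lambda,\mu}\lambda^s\mu^{1-s}\tr(P_\lambda(A)P_\mu(B)) = \tr(A^sB^{1-s})$, using the spectral decompositions of $A$ and $B$. The reformulation \eqref{ineq_1} in terms of $X\sim\mu_{A|B}$ is then immediate from the definition \eqref{spec-meas}, since $X = -\log\Delta_{A|B}$ in law relative to $\Omega_B$, so $e^{-sX} \leftrightarrow \Delta^s_{A|B}$ and $1/(1+e^X)\leftrightarrow \Delta_{A|B}(1+\Delta_{A|B})^{-1} = (1+\Delta_{A|B}^{-1})^{-1}$.

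For the upper bound, I would invoke the Neyman--Pearson identity \eqref{lemmaproj}: $\est(A,B) = \es(A,B,(A-B)_+) = \tr((\mathbb I - (A-B)_+)A) + \tr((A-B)_+B)$. A standard trace inequality (the one underlying the Audenaert et al.\ bound, $\tr((\mathbb I - P)A) + \tr(PB) \le \tr(A^sB^{1-s})$ for the Holevo--Helstrom projection $P$ and any $s\in[0,1]$) then gives $\est(A,B) \le \tr(A^sB^{1-s})$, which is the right-hand side by the identity just established. I would cite the relevant source (e.g.\ \cite{Aetal07}) for this trace inequality rather than reprove it, or reproduce the short argument that $\min(a,b)\le a^sb^{1-s}$ applied eigenvalue-wise in the commuting case and lifted via the operator inequality $A\wedge B \le A^sB^{1-s}$.

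The lower bound is the more delicate part and is where I expect the main obstacle. The quantity $\est(A,B) = \tr((\mathbb I - (A-B)_+)A) + \tr((A-B)_+B)$ must be bounded below by $\langle\Omega_B,\Delta_{A|B}(1+\Delta_{A|B})^{-1}\Omega_B\rangle$. The natural route is to work in the GNS/modular picture: writing $T = (A-B)_+$ as a projection onto the positive part of $A-B$, one wants to show $\tr((\mathbb I-T)A) + \tr(TB) \ge \langle\Omega_B,(1+\Delta_{A|B}^{-1})^{-1}\Omega_B\rangle$. One clean way is to use that for \emph{any} test $0\le T\le\mathbb I$, $\tr((\mathbb I-T)A) + \tr(TB) = \|(\mathbb I-T)^{1/2}\Omega_A\|^2 + \|T^{1/2}\Omega_B\|^2$ in the Hilbert--Schmidt space $\cA$, and then to relate $\Omega_A$ and $\Omega_B$ via $\Omega_A = \Delta_{A|B}^{1/2}\Omega_B$ (up to the $J$-twist; note $\Delta_{A|B}^{1/2}\Omega_B = A^{1/2}B^{1/2}B^{-1/2} = A^{1/2} = \Omega_A$ as operators, but one must be careful that the vectors $T^{1/2}\Omega_A$ and $T^{1/2}\Omega_B$ live in $\cA$ with $T$ acting by left multiplication). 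A two-state pure-state-like estimate then yields, for each spectral mode $x = \lambda/\mu$ of $\Delta_{A|B}$, a contribution bounded below by $\frac{x}{1+x}$ times the spectral weight, because $\min_{0\le t\le 1}\big((1-t)x + t\big) \cdot(\text{weight})$ --- wait, more precisely one uses that the \emph{commuting} lower bound $\tr((\mathbb I-T)A)+\tr(TB)\ge \sum_i \frac{\lambda_i\mu_i}{\lambda_i+\mu_i}\cdot(\cdots)$ is achieved and that the non-commutative case only helps. Concretely I would follow the argument in \cite{Jaksicetal}: reduce via a variational formula for $\est$ (or via the inequality $\est(A,B)\ge \tr\big(A(1+\Delta_{B|A}^{-1})^{-1}\big)$-type bound obtained by testing against the suboptimal ``soft'' test $(1+\Delta_{B|A}^{-1})^{-1}$ and using that $(A-B)_+$ is optimal only makes it larger), and then match terms through the spectral calculus identity already proven. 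The crux, and the step I would flag as hardest, is getting the modular-operator bookkeeping exactly right --- in particular handling the fact that $\Delta_{A|B}$ acts on the operator algebra $\cA$ rather than on $\cH$, and that $\Omega_A \neq \Delta_{A|B}^{1/2}\Omega_B$ in general when $[A,B]\neq 0$ (the correct relation involves $J$), so that the naive ``pure state'' computation needs the full two-sided structure $\rho A\sigma^{-1}$ to close.
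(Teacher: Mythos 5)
Your treatment of the two spectral identities and of the upper bound is correct and matches the paper: the identities follow from \eqref{eq_specprojDelta} together with $B^{1/2}P_\mu(B)=\mu^{1/2}P_\mu(B)$, and the upper bound is exactly the Audenaert et al.\ trace inequality $\frac12(\tr A+\tr B-\tr|A-B|)\le \tr A^sB^{1-s}$ combined with the Neyman--Pearson identity \eqref{lemmaproj}. The problem is the lower bound, which you yourself call ``the real content'': you do not actually prove it. You sketch several possible routes and then defer to \cite{Jaksicetal}, and one of the routes you propose is logically backwards. Since $(A-B)_+$ is the \emph{optimal} test, evaluating $\es(A,B,T')$ at any other test $T'$ (such as a ``soft'' test built from $(1+\Delta_{B|A}^{-1})^{-1}$) gives an \emph{upper} bound on $\est(A,B)$, not a lower bound; a lower bound must hold for the minimizer itself, so no choice of test can produce it. Likewise the claim that ``the non-commutative case only helps'' is precisely the assertion to be proved, not an argument.

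The missing idea is an operator inequality applied mode by mode. Writing $P$ for the optimal projection and $\bP=\mathbb I-P$, expand
\begin{align*}
\tr\big(A\bP\big)&=\sum_{\lambda,\mu}\frac{\lambda}{\mu}\,\tr\big(B\,\bP P_\lambda(A)\bP\,P_\mu(B)\big),\qquad
\tr(BP)=\sum_{\lambda,\mu}\tr\big(B\,P P_\lambda(A)P\,P_\mu(B)\big),
\end{align*}
and then use, for each fixed $\lambda,\mu$ with $\kappa=\lambda/\mu\ge 0$, the identity
\begin{align*}
\kappa\,\bP P_\lambda(A)\bP+P P_\lambda(A)P
=\frac{\kappa}{1+\kappa}\,P_\lambda(A)+\frac{1}{1+\kappa}\big(\mathbb I-(1+\kappa)\bP\big)P_\lambda(A)\big(\mathbb I-(1+\kappa)\bP\big)
\ \ge\ \frac{\kappa}{1+\kappa}\,P_\lambda(A),
\end{align*}
the second term being manifestly positive semidefinite. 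Tracing against $B(\cdot)P_\mu(B)$ and summing over $\lambda,\mu$ yields
$\tr(A\bP)+\tr(BP)\ge\sum_{\lambda,\mu}\frac{\lambda/\mu}{1+\lambda/\mu}\tr\big(BP_\lambda(A)P_\mu(B)\big)$, which is the claimed lower bound by your identity for $f(x)=x/(1+x)$. Note that this argument never needs the relation between $\Omega_A$ and $\Delta_{A|B}^{1/2}\Omega_B$ or any $J$-twist; the modular operator enters only through the final bookkeeping identity, so the difficulty you flag about $[A,B]\neq 0$ dissolves once the inequality is set up at the level of the spectral projections of $A$ and $B$ separately.
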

	\begin{proof}
		See \Cref{app_boundseNP}.
	\end{proof}
	\smallskip

	The following result follows immediately from \Cref{boundsD}, and is also proven in \Cref{app_boundseNP}.
	\begin{corollary}\label{cor1}
		For any $\rho, \sigma \in {\cal{P}}_{+}(\cH)$ and any $\upsilon, \theta \in\mathbb{R}$,
		\begin{align}\label{markovineq}
		\operatorname{\est}(\sigma,\rho\operatorname{e}^{-\theta})\ge \frac{\e^{-\theta}}{1+\e^{\upsilon-\theta}}\, {\mathbb{P}}(X \le \upsilon),
		\end{align}
		where $X$ is a random variable with law $\mu_{\sigma|\rho}$ defined similarly to \Cref{spec-meas}.
	\end{corollary}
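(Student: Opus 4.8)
The plan is to derive \Cref{cor1} directly from the lower bound in \Cref{boundsD}, followed by an elementary monotonicity estimate. First I would apply the left inequality in \eqref{ineq} (equivalently, the left inequality in \eqref{ineq_1}) to the pair of positive operators $A=\sigma$ and $B=\rho\,\e^{-\theta}$, which gives
\[
\est\big(\sigma,\rho\,\e^{-\theta}\big)\ \ge\ \langle \Omega_B,\,\Delta_{\sigma|B}\,(1+\Delta_{\sigma|B})^{-1}\,\Omega_B\rangle .
\]
The next step is to rewrite the right-hand side in terms of the data of the original pair $(\sigma,\rho)$. From the definition \eqref{eq_defDelta} one has $\Delta_{\sigma|B}=\e^{\theta}\,\Delta_{\sigma|\rho}$ as operators on $\cA$, and, since $\e^{-\theta}$ is a positive scalar, $\Omega_B=\e^{-\theta/2}\,\Omega_\rho$.

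Writing $\Delta_{\sigma|\rho}=\e^{-Y}$ with $Y=-\log\Delta_{\sigma|\rho}$, so that $\Delta_{\sigma|B}=\e^{\theta}\e^{-Y}$, the functional calculus gives $\Delta_{\sigma|B}(1+\Delta_{\sigma|B})^{-1}=(1+\e^{\,Y-\theta})^{-1}$. Hence, using the scaling of $\Omega_B$ and the defining property \eqref{spec-meas} of $\mu_{\sigma|\rho}$ (the spectral measure of $Y$ with respect to $\Omega_\rho$),
\[
\langle \Omega_B,\,\Delta_{\sigma|B}(1+\Delta_{\sigma|B})^{-1}\,\Omega_B\rangle
=\e^{-\theta}\,\langle \Omega_\rho,\,(1+\e^{\,Y-\theta})^{-1}\,\Omega_\rho\rangle
=\e^{-\theta}\,\mathbb{E}\!\left[\frac{1}{1+\e^{\,X-\theta}}\right],
\]
where $X$ has law $\mu_{\sigma|\rho}$. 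Equivalently, $\mu_{\sigma|B}$ is, up to the total-mass factor $\e^{-\theta}$, the image of $\mu_{\sigma|\rho}$ under the translation $x\mapsto x-\theta$, which is the only slightly delicate bookkeeping point.

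Finally I would bound the expectation from below by discarding the contribution of $\{X>\upsilon\}$ and using that $x\mapsto(1+\e^{\,x-\theta})^{-1}$ is nonincreasing: on the event $\{X\le\upsilon\}$ one has $(1+\e^{\,X-\theta})^{-1}\ge(1+\e^{\,\upsilon-\theta})^{-1}$, so
\[
\mathbb{E}\!\left[\frac{1}{1+\e^{\,X-\theta}}\right]\ \ge\ \frac{1}{1+\e^{\,\upsilon-\theta}}\;\mathbb{P}(X\le\upsilon).
\]
Chaining the three displays yields \eqref{markovineq}. There is no genuine obstacle in this argument; the proof is a short computation, and the only thing that needs a moment's care is tracking the scalar factors $\e^{\theta}$ and $\e^{-\theta/2}$ that appear when translating the inequality of \Cref{boundsD} from the pair $(\sigma,\rho\,\e^{-\theta})$ to the pair $(\sigma,\rho)$.
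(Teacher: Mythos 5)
Your proposal is correct and follows essentially the same route as the paper's own proof: apply the lower bound of \Cref{boundsD} to the pair $(\sigma,\rho\,\e^{-\theta})$, use the scaling identities $\Delta_{\sigma|\e^{-\theta}\rho}=\e^{\theta}\Delta_{\sigma|\rho}$ and $\Omega_{\e^{-\theta}\rho}=\e^{-\theta/2}\Omega_\rho$ to rewrite the bound as $\e^{-\theta}\,\mathbb{E}\bigl[(1+\e^{X-\theta})^{-1}\bigr]$, and then restrict to the event $\{X\le\upsilon\}$ using monotonicity (what the paper calls Markov's inequality). The bookkeeping of the scalar factors is handled correctly.
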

	\medskip
	
	The last ingredient in the proof of \Cref{mainresult} is the following explicit construction, which is due to Li (in \cite{L14}). We give a full proof in \Cref{app_Prop1}.
\begin{proposition}\label{prop_1}
	Let $\rho$, $\sigma$ be two states in ${\mathcal{D}}({\mathcal{H}})$. For any $L>0$ there exists a test $T$ such that
	\begin{equation} 
	\tr \,\rho(1-T) \leq  \big\langle{\Omega_\rho}, P_{[0,L)} (\Delta_{\rho|\sigma})\, \Omega_\rho\big\rangle \qquad
	\mbox{and}\qquad \tr\,\sigma T \leq  L^{-1}, \label{eq_alphabeta}
	\end{equation}
where $ P_{[0,L)} (\Delta_{\rho|\sigma}) := \sum_{x \in [0, L)} P_x (\Delta_{\rho|\sigma})$, with $ P_x (\Delta_{\rho|\sigma})$ being the projection operator defined through (\ref{eq_specprojDelta}).
\end{proposition}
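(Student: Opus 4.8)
The plan is to imitate, at the level of the relative modular operator, the classical Neyman--Pearson test. If $[\rho,\sigma]=0$ one would take $T=P_{[L,\infty)}(\rho\sigma^{-1})$; then $\tr(\sigma T)=\mathbb{P}(e^{-X}\ge L)$ for $X\sim\mu_{\rho|\sigma}$, and since $\mathbb{E}[e^{-X}]=e^{\Psi_{1}(\rho|\sigma)}=\tr\rho=1$ by \eqref{eq_fundamental}, Markov's inequality yields $\tr(\sigma T)\le L^{-1}$, while $\tr(\rho(1-T))$ is the type-I error of the Neyman--Pearson test for the Nussbaum--Szko\l a pair $(p,q)$ at level $L$, which one checks equals $\langle\Omega_\rho,P_{[0,L)}(\Delta_{\rho|\sigma})\Omega_\rho\rangle$. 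For non-commuting $\rho,\sigma$ we keep the spectral projection $P_{[L,\infty)}(\Delta_{\rho|\sigma})$ of the modular operator on $\cA$ and build from it a genuine test on $\cH$.

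Two elementary facts about $\Delta_{\rho|\sigma}$ drive the two bounds in \eqref{eq_alphabeta}. First, complementarity of spectral projections together with $\langle\Omega_\rho,\Omega_\rho\rangle=\tr\rho=1$ gives
\[
\big\langle\Omega_\rho,\,P_{[L,\infty)}(\Delta_{\rho|\sigma})\,\Omega_\rho\big\rangle \;=\; 1-\big\langle\Omega_\rho,\,P_{[0,L)}(\Delta_{\rho|\sigma})\,\Omega_\rho\big\rangle .
\]
Second, from the operator inequality $\Delta_{\rho|\sigma}\,P_{[L,\infty)}(\Delta_{\rho|\sigma})\ge L\,P_{[L,\infty)}(\Delta_{\rho|\sigma})$ and $\langle\Omega_\sigma,\Delta_{\rho|\sigma}\,\Omega_\sigma\rangle=\tr(\sigma^{1/2}\rho\,\sigma^{-1/2})=\tr\rho=1$ one obtains a Markov-type bound in the other reference vector,
\[
\big\langle\Omega_\sigma,\,P_{[L,\infty)}(\Delta_{\rho|\sigma})\,\Omega_\sigma\big\rangle \;\le\; L^{-1},
\]
equivalently $\mathbb{P}(e^{-X}\ge L)\le \mathbb{E}[e^{-X}]/L$ for $X\sim\mu_{\rho|\sigma}$ with $\Psi_1(\rho|\sigma)=0$.

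Now put $S:=P_{[L,\infty)}(\Delta_{\rho|\sigma})(\Omega_\rho)\in\cA=\cB(\cH)$, so that $\|S\|_{\cB(\cH)}\le\|S\|_{HS}=\langle S,S\rangle^{1/2}\le 1$ by the first fact. The test $T$ is fashioned from $S$; the cleanest candidate is the orthogonal projection of $\cH$ onto $\ran(S)$, for which $TS=S$, i.e.\ $\pi(T)S=S$ on $\cA$. For the type-I estimate, decompose $\Omega_\rho=S+S'$ with $S':=P_{[0,L)}(\Delta_{\rho|\sigma})(\Omega_\rho)$ so $\langle S,S'\rangle=0$; since $\pi(T)$ is a positive self-adjoint contraction on $\cA$ with $\pi(T)S=S$, the cross terms drop out and, using \eqref{eq_GNS},
\[
\tr(\rho T)=\big\langle\Omega_\rho,\pi(T)\Omega_\rho\big\rangle=\langle S,S\rangle+\big\langle S',\pi(T)S'\big\rangle\;\ge\;\langle S,S\rangle\;=\;1-\big\langle\Omega_\rho,P_{[0,L)}(\Delta_{\rho|\sigma})\Omega_\rho\big\rangle,
\]
which is the first inequality of \eqref{eq_alphabeta}. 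The second inequality has to come from the Markov-type bound above, transported to $T$ through the identity $\Omega_\rho=\Delta_{\rho|\sigma}^{1/2}(\Omega_\sigma)$, which connects the $\rho$- and $\sigma$-quadratic forms of $P_{[L,\infty)}(\Delta_{\rho|\sigma})$ and shows that $\tr(\sigma T)$ is controlled by $\langle\Omega_\sigma,P_{[L,\infty)}(\Delta_{\rho|\sigma})\Omega_\sigma\rangle\le L^{-1}$.

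The main obstacle, I expect, is producing one $T$ that is at the same time a bona fide POVM element ($0\le T\le\mathbb{I}$), loses nothing in the type-I bound, and meets the type-II bound. The ``exact'' choice $T=\rho^{-1/2}SS^{*}\rho^{-1/2}$ saturates the first inequality of \eqref{eq_alphabeta} because $\tr(\rho T)=\tr(S^{*}S)=\langle S,S\rangle$, but it can fail $T\le\mathbb{I}$ (its operator norm is $\|\rho^{-1/2}S\|^2=\|P_{[L,\infty)}(\Delta_{\rho|\sigma})(\mathbb{I})\|^2$, which need not be $\le1$); a crude truncation of it to $[0,\mathbb{I}]$ may then spoil the type-I bound, and the plain range projection above may violate the type-II bound. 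Li's construction is precisely what performs this balancing act, and once $T$ is chosen the verification reduces to the two displayed facts about $\Delta_{\rho|\sigma}$ together with the orthogonality computation.
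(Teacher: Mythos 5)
Your argument for the first inequality is sound: with $S:=P_{[L,\infty)}(\Delta_{\rho|\sigma})(\Omega_\rho)$ and $T$ the orthogonal projection of $\cH$ onto $\ran(S)$, the identity $\pi(T)S=S$ together with the orthogonality $\langle S,S'\rangle=0$ does give $\tr(\rho T)\geq \langle S,S\rangle = \big\langle\Omega_\rho,P_{[L,\infty)}(\Delta_{\rho|\sigma})\Omega_\rho\big\rangle$, which is the type-I bound. The genuine gap is the type-II bound, and the mechanism you propose for it does not work. The inequality $\big\langle\Omega_\sigma,P_{[L,\infty)}(\Delta_{\rho|\sigma})\Omega_\sigma\big\rangle\leq L^{-1}$ is correct, but it concerns the spectral projection $P_{[L,\infty)}(\Delta_{\rho|\sigma})$ acting on the GNS space $\cA$, whereas $\tr(\sigma T)=\langle\Omega_\sigma,\pi(T)\Omega_\sigma\rangle$ involves the quite different projection $\pi(T)$ (left multiplication by $T$). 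These two projections of $\cA$ are not comparable, and ``transporting'' the Markov bound through $\Omega_\rho=\Delta_{\rho|\sigma}^{1/2}\Omega_\sigma$ does not produce an inequality between their $\Omega_\sigma$-quadratic forms; this mismatch is exactly the non-commutativity obstruction the paper points out when noting that $(\rho-e^{\gamma}\sigma)_+\neq\big(\log(\rho\sigma^{-1})-\gamma\mathbb I\big)_+$. Moreover your $S=\sum_{\lambda/\mu\geq L}\lambda^{1/2}P_\lambda(\rho)P_\mu(\sigma)$ has its range graded by the eigenspaces of $\rho$ (each piece sits inside $\ran P_\lambda(\rho)$), which gives no handle on $\tr(\sigma T)$; there is no reason your range projection satisfies $\tr(\sigma T)\leq L^{-1}$.

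The paper closes precisely this gap by choosing the test differently: $T$ is the projection onto $\bigvee_{\lambda/\mu\geq L}\ran\big(P_\mu(\sigma)P_\lambda(\rho)\big)$ (note the order of the factors: the range is now graded by the eigenspaces of $\sigma$, with $\tilde{\cA}_\lambda\subseteq\bigoplus_{\mu\leq\lambda/L}\ran P_\mu(\sigma)$), which still satisfies the type-I bound by an argument close in spirit to yours. The type-II bound is then obtained by Li's Gram--Schmidt orthogonalization: one produces mutually orthogonal subspaces $\cA_k\subseteq\tilde\cA_{\lambda_k}$ spanning $\ran(T)$ with $\dim\cA_k\leq\tr P_{\lambda_k}(\rho)$ and with $P_\mu(\sigma)P_{\cA_k}=0$ unless $\mu\leq\lambda_k/L$, whence $\tr(\sigma T)=\sum_k\sum_{\mu\leq\lambda_k/L}\mu\,\tr(P_\mu(\sigma)P_{\cA_k})\leq L^{-1}\sum_k\lambda_k\tr P_{\lambda_k}(\rho)=L^{-1}$. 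This dimension-counting step is the substantive content of the proposition and is missing from your proposal; without it (or a substitute), the second inequality in \eqref{eq_alphabeta} is unproved.
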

	
	\subsection{Proof of \Cref{mainresult}} \label{sec_proofmainthm}
	
	The proof is divided into two parts: \Cref{eq_maineq1} and \Cref{eq_maineq2}.
	
	\paragraph{Proof of \Cref{eq_maineq1}} By continuity and monotonicity of $\Phi$, it suffices to prove that for any $\delta>0$, there exists $g_\delta(x)\underset{+\infty}{=}o(\sqrt x)$ such that for any $n\in\NN$:
	\begin{align}\label{part1proof} 
	-\log\beta_n(\eps)\le  D(\rho_n||\sigma_n)+\sqrt{w_nv(\hrho||\hsigma)}\,\Phi^{-1}(\eps+\delta)+o(\sqrt{w_n}).
	\end{align}
	For any test $T_n$, and real numbers $\theta_n$ and $v_n$ (to be specified later), \Cref{cor1} as well as the symmetry of $\est$ give 
	\begin{align*}
	\operatorname{e}^{-\theta_n}\tr \big((\mathbb{I}_n-T_n)\rho_n\big)+\tr (T_n\sigma_n)\geq  e^*_{sym}(\sigma_n,\rho_n\operatorname{e}^{-\theta_n})
	\ge \frac{ \operatorname{e}^{-\theta_n}}{1+\operatorname{e}^{v_n-\theta_n}} \mathbb{P}(X_n\le v_n),
	\end{align*}
	where $X_n$ is a random variable of law $\mu_{\sigma_n|\rho_n}$. Hence, for any sequence of tests $(T_n)_{n\in\NN}$ such that $\tr\big((\mathbb{I}_n-T_n)\rho_n\big)\le \eps$, we have
	\begin{align}
	\tr (T_n\sigma_n)&\ge \operatorname{e}^{-\theta_n}\Big(\frac{\mathbb{P}(X_n\le v_n)}{1+\operatorname{e}^{v_n-\theta_n}} -\tr \big((\mathbb{I}_n-T_n)\rho_n\big)\Big)\nonumber\\
	&\ge \operatorname{e}^{-\theta_n}\Big(\frac{\mathbb{P}(X_n\le v_n)}{1+\operatorname{e}^{v_n-\theta_n}} -\eps\Big). \label{berryesseenchanges}
	\end{align}
	Then by relation \eqref{convg} and \Cref{dvee}, we know that the random variable
	\begin{align}
	Y_n:=\frac{X_n- D(\rho_n||\sigma_n)}{\sqrt{w_n v(\hrho, \hsigma)}}
		\end{align}
	converges in law to a standard normal distribution.
	Let us choose
	$$v_n:= D(\rho_n||\sigma_n)+\sqrt{w_n\,v(\hrho, \hsigma)}\,\Phi^{-1}(\eps+\delta),$$
	with $\delta>0$, so that:
	\begin{align*}
\mathbb{P}(X_n\le v_n)=\mathbb{P}(Y_n\le \Phi^{-1}(\eps+\delta))\underset{n\to\infty}{\to}\Phi(\Phi^{-1}(\eps+\delta))=\eps+\delta>\eps.
	\end{align*}	
	Therefore, there exists $\eps'>\eps$ and $n_0\in\NN$ such that for all $n\ge n_0$, $\mathbb{P}(X_n\le v_n)\ge \eps'>\eps $. Take $\theta_n:=v_n+h(w_n)$, for some positive function $h$ such that $h(x)\underset{+\infty}{=}o(\sqrt{x})$ but $h(w_n)\to\infty$ as $n\to\infty$. Hence, again for $n$ large enough,
	\begin{align*}
	\operatorname{e}^{-\theta_n}\Big(\frac{\mathbb{P}(X_n\le v_n)}{1+\operatorname{e}^{v_n-\theta_n}} -\eps\Big) \ge \operatorname{e}^{-\theta_n}\, \big( \eps'(1+e^{-h(w_n)})^{-1}- \eps\big),
	\end{align*}
	and since the quantity in parenthesis on the right hand side of the above inequality is bounded below, and \eqref{berryesseenchanges} is true for any sequence of tests $(T_n)_{n \in {\mathbb{N}}}$ such that $\tr\big(\rho_n(\mathbb{I}_n-T_n)\big)~\le~\eps$, we have for $n$ large enough
	\begin{align*}
	-\log\beta_n(\eps) \le
	D(\rho_n||\sigma_n)+\sqrt{w_nv(\hrho, \hsigma)}\,\Phi^{-1}(\eps+\delta)+\tilde g_\delta(w_n),
	\end{align*}
	with $\tilde g_\delta \underset{+\infty}{=}o(\sqrt x)$, and modifying $\tilde g_\delta$ for a finite number of values gives the desired inequality.
	\medskip
	
	\paragraph{Proof of \Cref{eq_maineq2}}
	
	For each $n$ in $\NN$, let
	\[L_n = \exp \big(D(\rho_n||\sigma_n) + \sqrt{w_n\,} t_2+ f(w_n) \big) \]
	with $f(x)\underset{+\infty}{=}o(\sqrt{x})$, and consider the test $T_n$ associated by Proposition \ref{prop_1} with $L_n$. Then relations \eqref{eq_alphabeta} imply
	\begin{equation} \label{eq_betan}
		-\log \beta(T_n)  \geq D(\rho_n||\sigma_n) + \sqrt{w_n\,} t_2+ f(w_n),
	\end{equation}
	and
	\begin{align*}
		\alpha(T_n)
		&\leq  \big\langle{\Omega_{\rho_n}}, P_{(0,L_n)} (\Delta_{\rho_n|\sigma_n})\, \Omega_{\rho_n}\big\rangle,\\
		&\leq  \big\langle{\Omega_{\rho_n}}, P_{(-\infty,\log L_n)} (\log \Delta_{\rho_n|\sigma_n})\, \Omega_{\rho_n}\big\rangle.
	\end{align*}
	As we remarked after definition \eqref{eq_defDelta}, $J \Delta_{\rho|\sigma} J = \Delta_{\sigma|\rho}^{-1}$, with $J$ being the map $:A\mapsto A^*$. This, together with the fact that $J^2=\id$ implies that $\log \Delta_{\rho_n|\sigma_n}=-J (\log \Delta_{\sigma_n|\rho_n}) J$, and we obtain
	\[P_{(-\infty,\log L_n)} (\log \Delta_{\rho_n|\sigma_n}) = P_{(-\infty,\log L_n)} (-J\log \Delta_{\sigma_n|\rho_n} J)= J  P_{(-\infty,\log L_n)} (-\log \Delta_{\sigma_n|\rho_n})\,J.\]
Noting that $J$ is anti self-adjoint \eqref{antiJ}, $J \Omega_{\rho_n} = \Omega_{\rho_n}$, and the definition of the random variable $X_n\sim\mu_{\sigma_n|\rho_n}$, we obtain
	\begin{align*}
	\alpha(T_n) &\leq   \big\langle{\Omega_{\rho_n}}, P_{(-\infty,\log L_n)} (\log \Delta_{\rho_n|\sigma_n})\, \Omega_{\rho_n}\big\rangle\\
	&=   \big\langle{\Omega_{\rho_n}}, J P_{(-\infty,\log L_n)} (-\log \Delta_{\sigma_n|\rho_n})J\, \Omega_{\rho_n}\big\rangle\\
    &=\langle J\Omega_{\rho_n},P_{(-\infty, \log L_n)}(-\log \Delta_{\sigma_n|\rho_n})J\Omega_{\rho_n}\rangle\\
	&=\langle \Omega_{\rho_n}, P_{(-\infty,\log L_n)} (-\log \Delta_{\sigma_n|\rho_n})\Omega_{\rho_n}\rangle\\
	  %\big\langle{\Omega_{\rho_n}}, P_{]-\infty,L_n[} (\Delta_{\rho_n|\sigma_n})\, \Omega_{\rho_n}\big\rangle =
	&=\PP(X_n\leq \log L_n)=\mathbb{P}\left(\frac{X_n-D(\rho_n||\sigma_n)}{\sqrt{w_n\,}}\leq t_2+\frac{f(w_n)}{\sqrt{w_n\,}}\right).
	\end{align*}
	By consequence \eqref{convg} of Bryc's theorem, the above converges to $\Phi(t_2/\sqrt{v(\hat \rho,\hat{\sigma})})<\varepsilon$. Therefore, for $n$ large enough, one has $\alpha(T_n)\leq \varepsilon$. The bound \eqref{eq_betan} proves \Cref{eq_maineq2}.

\qed

\section{Examples} \label{sec_Examples}

In this section we give examples of sequences of quantum states, $\hat{\rho} = (\rho_n)_{n \in \NN}$ and $\hat{\sigma} = (\sigma_n)_{n \in \NN}$, for which \Cref{cond-Bryc} holds, and hence \Cref{mainresult} applies. Our examples closely follow those from \cite{Jaksicetal}, since \Cref{cond-Bryc} is stronger than the conditions required for the (first order) Stein lemma given in that reference.

\subsection{Quantum i.i.d.~states}
The simplest example is that of i.i.d.~states: $\rho_n = \rho^{\otimes n}$ and 
$\sigma_n= \sigma^{\otimes n}$, with $\rho, \sigma \in \cD(\cH)$ and $\cH_n = \cH^{\otimes n}$. 
For such states, from the definitions \reff{qrel} and \reff{info-var} of the quantum relative entropy and the quantum information variance it follows that
\begin{equation*}
D(\rho_n || \sigma_n) = n D(\rho||\sigma) \quad {\hbox{and}} \quad V(\rho_n || \sigma_n) = n V(\rho||\sigma)\label{nv}.
\end{equation*}
\noindent Hence, choosing $w_n =n$ in \Cref{drho}, we obtain
\begin{align*}
d(\hrho, \hsigma) = D(\rho||\sigma) \qquad  v(\hrho, \hsigma) = V(\rho||\sigma).
\end{align*}
Moreover, $\Psi_{1-z}(\rho_n | \sigma_n) = n\Psi_{1-z}(\rho|\sigma)$. 

From the above it follows that \Cref{cond-Bryc} is satisfied. Hence, in this case \Cref{mainresult} holds, and more precisely \Cref{coro_ideal} holds. This result was obtained independently by Tomamichel and Hayashi \cite{TH13} and Li \cite{L14}, employing the Central Limit Theorem (CLT). The fact that it follows from our analysis is not surprising, since for i.i.d.~states Bryc's theorem reduces to the CLT.
In addition, in the i.i.d.~case we have the more powerful Berry-Esseen theorem (see {e.g.~\cite{Feller2}) which can be used to obtain information about the behaviour of the third order term 
in the expansion of $- \log \beta_n(\eps)$, as stated in the following theorem.
\begin{theorem}
	Fix $\eps\in(0,1)$. Then in the case in which the states $\rho_n$ and $\sigma_n$ in the sequences $\hrho$ and $\hsigma$ are i.i.d.~states on the sequence of finite dimensional Hilbert spaces $\cH^{\otimes n}$, i.e.~$\rho_n=\rho^{\otimes n}$ and $\sigma_n=\sigma^{\otimes n}$, we have 
	$$-\log\beta_n(\eps)=nD(\rho||\sigma)+\sqrt{nV(\rho||\sigma)}\,\Phi^{-1}(\eps)+\mathcal{O}(\log n).$$
\end{theorem}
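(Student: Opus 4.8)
The plan is to re-run the two halves of the proof of \Cref{mainresult} in the i.i.d.\ case $\rho_n=\rho^{\otimes n}$, $\sigma_n=\sigma^{\otimes n}$, $w_n=n$, but to replace the qualitative convergence \eqref{convg} coming from Bryc's theorem by the quantitative Berry--Esseen estimate. The key structural observation is that in this setting the random variable $X_n$ of law $\mu_{\sigma_n|\rho_n}$ appearing in \eqref{psi1-z}--\eqref{eq_relationXPsi} is a genuine sum $X_n=Z_1+\cdots+Z_n$ of $n$ i.i.d.\ copies of a single random variable $Z$ of law $\mu_{\sigma|\rho}$: indeed $\Psi_{1-z}(\rho^{\otimes n}|\sigma^{\otimes n})=n\,\Psi_{1-z}(\rho|\sigma)$, so $\EE[e^{-zX_n}]=\EE[e^{-zZ}]^n$, which identifies the law of $X_n$ with the $n$-fold convolution of that of $Z$. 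By \Cref{lem1}, $\EE[Z]=D(\rho||\sigma)$ and $\operatorname{Var}(Z)=V(\rho||\sigma)=:V$; moreover $Z$ takes only finitely many values (logarithms of ratios of eigenvalues of $\rho$ and $\sigma$), so it has a finite third absolute moment. If $V=0$ then $\rho=\sigma$, $\beta_n(\eps)=1-\eps$, and the claim is trivial, so assume $V>0$; then the Berry--Esseen theorem (see e.g.\ \cite{Feller2}) provides a constant $C=C(\rho,\sigma)$ with $\big|\PP(X_n\le nD(\rho||\sigma)+t\sqrt{nV})-\Phi(t)\big|\le C/\sqrt n$ for all $n\in\NN$ and $t\in\RR$. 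We shall also use that, $\eps$ being a fixed point of the open interval $(0,1)$, $\Phi^{-1}$ is smooth near $\eps$, so $\Phi^{-1}(\eps+a_n)=\Phi^{-1}(\eps)+O(a_n)$ for any $a_n\to0$.

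\textbf{Upper bound.} I would re-run the argument for \eqref{eq_maineq1} with the choices $v_n:=nD(\rho||\sigma)+\sqrt{nV}\,\Phi^{-1}\!\big(\eps+K/\sqrt n\big)$, for a fixed constant $K$ large enough in terms of $C$ and $\eps$, and $\theta_n:=v_n+\tfrac12\log n$. Berry--Esseen gives $\PP(X_n\le v_n)\ge\eps+(K-C)/\sqrt n$, and then \eqref{berryesseenchanges} (applied with $X_n\sim\mu_{\sigma_n|\rho_n}$) shows that, for all $n$ large and every test $T_n$ with $\alpha(T_n)\le\eps$,
\[
\tr(T_n\sigma_n)\ \ge\ e^{-\theta_n}\Big(\frac{\PP(X_n\le v_n)}{1+n^{-1/2}}-\eps\Big)\ \ge\ c\,e^{-\theta_n}\,n^{-1/2}
\]
for some constant $c=c(K,C,\eps)>0$. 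Taking $-\log$ of the infimum over such $T_n$, and Taylor-expanding $\Phi^{-1}$ at $\eps$ inside $v_n$, yields $-\log\beta_n(\eps)\le nD(\rho||\sigma)+\sqrt{nV}\,\Phi^{-1}(\eps)+\mathcal O(\log n)$.

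\textbf{Lower bound.} I would re-run the argument for \eqref{eq_maineq2}: apply \Cref{prop_1} with $L_n:=\exp\!\big(nD(\rho||\sigma)+\sqrt{nV}\,\Phi^{-1}(\eps-K'/\sqrt n)\big)$ for a fixed $K'>C$. The test $T_n$ it supplies satisfies $-\log\beta(T_n)\ge\log L_n$, and, exactly as in the proof of \eqref{eq_maineq2} (using $J\Delta_{\rho_n|\sigma_n}J=\Delta_{\sigma_n|\rho_n}^{-1}$ and $J\Omega_{\rho_n}=\Omega_{\rho_n}$), $\alpha(T_n)\le\PP(X_n\le\log L_n)$. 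By Berry--Esseen this is $\le\Phi\big(\Phi^{-1}(\eps-K'/\sqrt n)\big)+C/\sqrt n=\eps-(K'-C)/\sqrt n<\eps$, so $\alpha(T_n)\le\eps$ for $n$ large; hence $-\log\beta_n(\eps)\ge\log L_n=nD(\rho||\sigma)+\sqrt{nV}\,\Phi^{-1}(\eps)+\mathcal O(1)$, again by Taylor-expanding $\Phi^{-1}$ at $\eps$. Combining the two bounds gives the claimed expansion.

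\textbf{Main obstacle.} Since the statement only refines the already-proven \Cref{mainresult}, there is no essentially new difficulty; the delicate points are the estimates themselves. The most delicate is the upper bound: the parenthetical factor $\PP(X_n\le v_n)(1+e^{-h(n)})^{-1}-\eps$ — bounded below by a positive constant in the original proof — is now only $\Theta(n^{-1/2})$, so one has to check that the shrinking margin $K/\sqrt n$, the choice $h(n)=\tfrac12\log n$ (any $h(n)\to\infty$ with $h(n)=O(\log n)$ would do), and the Berry--Esseen error are mutually consistent, and that after passing to $-\log$ the gap $h(n)=\tfrac12\log n$ (between $\theta_n$ and $v_n$) and the $\tfrac12\log n$ coming from $-\log(n^{-1/2})$ add up to only $\mathcal O(\log n)$. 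One also has to record carefully that $X_n$ is genuinely an i.i.d.\ sum with the stated mean, variance and finite third moment, so that Berry--Esseen applies with an $n$-independent constant, and that the Taylor expansion of $\Phi^{-1}$ is legitimate precisely because $\eps$ is a fixed interior point of $(0,1)$.
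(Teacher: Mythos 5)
Your proposal is correct and follows essentially the same route as the paper: identify $X_n\sim\mu_{\sigma_n|\rho_n}$ as an i.i.d.\ sum via $\Psi_{1-z}(\rho^{\otimes n}|\sigma^{\otimes n})=n\Psi_{1-z}(\rho|\sigma)$, replace the Bryc/CLT step by Berry--Esseen, and rerun the two halves of the proof of \Cref{mainresult} with $\theta_n-v_n=\tfrac12\log n$ and an $\mathcal O(\log n)$ (resp.\ $\mathcal O(1)$) adjustment in $L_n$. Your $\pm K/\sqrt n$ shifts of $\eps$ inside $\Phi^{-1}$ are in fact a welcome refinement: they guarantee the parenthetical factor in \eqref{berryesseenchanges} stays bounded below by a positive multiple of $n^{-1/2}$, a sign issue the paper's own write-up leaves implicit, while costing only $\mathcal O(1)$ in the exponent.
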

\smallskip

\noindent
\begin{proof} We first prove the bound
\begin{align}
-\log\beta_n(\eps) &\leq nD(\rho||\sigma)+\sqrt{nV(\rho||\sigma)}\,\Phi^{-1}(\eps)+\mathcal{O}(\log n)
\end{align}
Following the proof of \Cref{eq_maineq1} of \Cref{mainresult} up to \Cref{berryesseenchanges} (with the choice $w_n = n$), we infer that for any test $T_n$ such that $\tr \left((\mathbb{I}_n-T_n)\rho_n\right)\le \eps$, for any $\theta_n, v_n \in {\mathbb{R}}$
	\begin{align}\label{berryx}
	\tr (T_n\sigma_n)\ge \operatorname{e}^{-\theta_n}\Big( \frac{\mathbb{P}(X_n\le v_n)}{1+\operatorname{e}^{v_n-\theta_n}}  -\eps\Big),
	\end{align}
where $X_n\sim \mu_{\sigma_n|\rho_n}$ is a sum of $n$ $i.i.d.$ random variables of law $\mu_{\sigma|\rho}$ (compute e.g. its characteristic function). By the Berry-Esseen theorem, for any real $a$,  
${\mathbb{P}}\left(\frac{X_n -n D(\rho||\sigma)}{\sqrt{nV(\rho||\sigma)}} \le a\right)
%\mu_{\sigma_n|\rho_n}\big((-\infty,a]\big)
-\Phi(a)=\mathcal{O}(n^{-1/2})$. 
Choosing $v_n=nD(\rho||\sigma)+\sqrt {n V(\rho||\sigma)}\, \Phi^{-1}(\eps)$,
in \Cref{berryx}, gives
\begin{align*}
	\tr (T_n \sigma_n)&\ge \operatorname{e}^{-\theta_n}\Big( \frac{\eps +\mathcal{O}(n^{-1/2})}{1+\operatorname{e}^{v_n-\theta_n}}-\eps\Big)
\end{align*}
Further, choosing $\theta_n=v_n+\log\sqrt{n}$, we obtain
\begin{align*}
	\tr (T_n\sigma_n)\ge \operatorname{e}^{-nD(\rho||\sigma)-\sqrt{nV(\rho||\sigma)}\,\Phi^{-1}(\eps)-\log\sqrt{n}}\Big( \frac{\mathcal{O}(n^{-1/2})-\eps n^{-1/2}}{1+n^{-1/2}} \Big),
\end{align*}	
so that
$$-\log \tr(\sigma_nT_n)\le nD(\rho||\sigma)+\sqrt{nV(\rho||\sigma)}\,\Phi^{-1}(\eps) +\mathcal{O}(\log n).$$
\medskip

\noindent
Next, to prove the opposite inequality, namely,
$$-\log \tr(\sigma_nT_n)\le nD(\rho||\sigma)+\sqrt{nV(\rho||\sigma)}\,\Phi^{-1}(\eps) +\mathcal{O}(\log n),$$
we proceed similarly to the proof of inequality \reff{eq_maineq2} of \Cref{mainresult} (with the choice $w_n = n$), except that we replace $o(\sqrt{w_n})$
in the expression of $L_n$ with $\cO(\log n)$.			\qed
\end{proof}

\subsection{Quantum Spin Systems} \label{sec_QSS}
In this section we consider the example of quantum spin systems on the lattice $\Lambda=\ZZ^d$ (all results below hold for any lattice of finite degree). Here we closely follow the discussion and notation of \cite{Jaksicetal}. Let $\cX$ denote the set of all finite subsets of $\Lambda$, and 
$\cX_0 \subset \cX$ the set of subsets of $\Lambda$ containing $0$.
At each lattice site $x\in\Lambda$, there is a particle with a finite 
number of internal (or spin) degrees of freedom, and we assume that each particle has an identical state space~$\mathfrak h$. For bookkeeping purposes we denote by $\h_x$ the copy of $\h$ associated with the particle at  site~$x$. For $X\in \cX$ we let $\h_X=\bigotimes_{x\in X} \h_x$ and for $X_1$, $X_2\in\cX$ with $X_1\subset X_2$ we identify $\h_{X_1}$ with a subspace of $\h_{X_2}$ via the decomposition $\h_{X_2}= \h_{X_1}\otimes \h_{X_2\setminus X_1}$.

Consider the algebras $\cA_X$ of all bounded operators acting on $\h_X$, $X\in\mathcal{X}$, with the usual operator norm and with Hermitian conjugation as the $*$-involution. The algebras $\cA_X$ can be considered to be partially nested, i.e.
$$\cA_{X_1} \subset \cA_{X_2} \quad{\hbox{if}} \,\, X_1 \subset X_2,$$
by identifying each operator $A_1 \in \cA_{X_1}$ with the operator $A_1 \otimes \mathbb{I}_{X_2 \setminus X_1} \in \cA_{X_2}$, where $\mathbb{I}$ denotes the identity operator. Moreover, the algebras $\cA_X$ are {\em{local}},  i.e.~if $A_1 \in \cA_{X_1}$ and $A_2 \in \cA_{X_2}$,
and $X_1 \cap X_2 = \emptyset$, then $A_1A_2 = A_2A_1$. The norm closure of $\bigcup_{X \nearrow \ZZ^d} \cA_X$ defines an algebra which we denote by $\cA$. It is the quasilocal C*-algebra of observables associated with the infinite lattice $\ZZ^d$. All local algebras $\cA_X$ are subalgebras of $\cA$, and we denote by $\|A\|$ the norm of any operator $A\in \cA$.

For any $X$ in $\cX$ we denote by $|X|$ its cardinality, by $\mathrm{diam}(X)$ its diameter 
$$\mathrm{diam}(X):= \max\{|x-y|_{\ZZ^d} \ \mathrm{s.t.} \  x,y\in X\},$$
where $|x-y|_{\ZZ^d}:= \sum_{i=1}^d|x_i - y_i|$ denotes the ``Manhattan distance'' between the two sites $x, y\in \ZZ^d$; here $x=(x_1, \ldots, x_d)$ and $y=(y_1, \ldots, y_d)$. For 
$a\in \Lambda$ we define a map $T_a$ as the identity $\h_x\to\h_{x+a}$ which we extend as a map $\h_X\to\h_{X+a}$ where $X+a=\{x+a\, |\, x\in X\}$. 
The group of space translations ${\mathbf{T}}_{\ZZ^d}$ acts as a $*$-automorphism group $\{T_a \,:\, a \in \ZZ^d\}$ on $\cA$, and for any $X \subset \ZZ^d$
$$ \cA_{X + a} = T_a \cA_X T_{-a}.$$

An {\em{interaction}} of a quantum spin system is a function $\Phi$  from finite, nonempty subsets~$X$ of $\ZZ^d$, to self-adjoint observables $\Phi_X \in \cA_X$. The interactions are said to be translation invariant if
$$ T_a \Phi_X T_{-a}= \Phi_{X+a} \quad \mbox{for all}\ a \in \ZZ^d, \,\, X \subset \ZZ^d.$$
The following quantity will be relevant for us:
\begin{equation}
\|\Phi\| = \sum_{X\in \cX_0} \|\Phi_X\|. 
\end{equation}
We say that the interaction has finite range if there exists an $R>0$ such that $\mathrm{diam}\, X \geq R$ implies $\Phi(X)=0$; the smallest such $R$ is called the range of $\Phi$. Notice that an interaction $\Phi$ with finite range has finite $\|\Phi\|$. 
For any $X\in \cX$ we associate to an interaction~$\Phi$ the so-called {\em{interaction Hamiltonian}} on $X$:
for two fixed interactions $\Phi$ and $\Psi$ we define the following Hamiltonians:
\begin{equation} \label{eq_defHamQSS}
H_X^\Phi = \sum_{Y\subset X} \Phi_Y \quad {\hbox{and}} \quad H_X^\Psi = \sum_{Y\subset X} \Psi_Y.
\end{equation}

We are interested in {\em{finite-volume Gibbs states}} associated with $\Phi$ and $\Psi$ respectively. 
More precisely, for every $n\in\NN$ we let $\Lambda_n=\{-n,\ldots,+n\}^d$ and define two Gibbs states on $\cH_n:={\h}_{\Lambda_n}$ by density matrices
\begin{equation}\label{eq_defGibbsstates}
\rho_n= \frac{\e^{-\beta_1 H_{\Lambda_n}^\Phi}}{\tr(\e^{-\beta_1 H_{\Lambda_n}^\Phi})} \quad {\hbox{and}} \quad \sigma_n= \frac{\e^{-\beta_2 H_{\Lambda_n}^\Psi}}{\tr(\e^{-\beta_2 H_{\Lambda_n}^\Psi})}
\end{equation}
where $\beta_1$, $\beta_2$ $\in(0,\infty)$ are \textit{inverse temperatures} (the physical definition would actually require $\beta=(k_B T)^{-1}$ where $T$ is the temperature and $k_B$ is Boltzmann's constant; here we set $k_B = 1$). Using the results of \cite{NR} we can prove the following:
\begin{proposition}\label{prop-qss}
If $\Phi$ and $\Psi$ are translation invariant, finite range interactions, then for high enough temperatures (i.e.~for inverse temperatures $\beta_1$ and $\beta_2$ small enough), the sequence $(\rho_n,\sigma_n)_{n\in\NN}$ of pairs of finite volume Gibbs states on $\Lambda_n = \{-n, \ldots, n\}^d$ (defined through \Cref{eq_defGibbsstates} above),  satisfies \Cref{cond-Bryc} with respect to the weights $w_n=|\Lambda_n|=(2n +1)^d$.
\end{proposition}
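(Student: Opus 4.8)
The plan is to reduce $E_n$ to partition–function–type quantities and then control it by a convergent high–temperature cluster expansion, uniformly in $n$ and on a fixed complex ball. Set $Z_n^\Phi=\tr(\e^{-\beta_1 H_{\Lambda_n}^\Phi})$ and $Z_n^\Psi=\tr(\e^{-\beta_2 H_{\Lambda_n}^\Psi})$. Since $\rho_n,\sigma_n$ are faithful Gibbs states, $\rho_n^{1-z}=\e^{-(1-z)\beta_1 H_{\Lambda_n}^\Phi}/(Z_n^\Phi)^{1-z}$ and $\sigma_n^{z}=\e^{-z\beta_2 H_{\Lambda_n}^\Psi}/(Z_n^\Psi)^{z}$, so that for $z\in[0,1]$
\[
E_n(z)=\Psi_{1-z}(\rho_n|\sigma_n)=\log\tr\!\big(\e^{-(1-z)\beta_1 H_{\Lambda_n}^\Phi}\,\e^{-z\beta_2 H_{\Lambda_n}^\Psi}\big)-(1-z)\log Z_n^\Phi-z\log Z_n^\Psi .
\]
The two affine terms are entire in $z$, and $\tfrac1{|\Lambda_n|}\log Z_n^\Phi$, $\tfrac1{|\Lambda_n|}\log Z_n^\Psi$ converge (to the respective pressures) and are bounded uniformly in $n$ by elementary estimates. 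Thus everything reduces to the study of $F_n(z):=\log\tr(\e^{-(1-z)\beta_1 H_{\Lambda_n}^\Phi}\e^{-z\beta_2 H_{\Lambda_n}^\Psi})$ — equivalently, to showing that $\tr(\rho_n^{1-z}\sigma_n^{z})$ does not vanish on a complex ball $B_\CC(0,r)$ with $r$ independent of $n$, and that $\tfrac1{|\Lambda_n|}F_n$ converges and is bounded there.

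For this I would use the space--time (Duhamel/Dyson) polymer expansion exactly as developed in \cite{NR} and used in \cite{Jaksicetal}. One views $\e^{-(1-z)\beta_1 H_{\Lambda_n}^\Phi}\e^{-z\beta_2 H_{\Lambda_n}^\Psi}$ as an imaginary–time–ordered product over $[0,1]$ carrying the interaction $\beta_1\Phi$ on $[0,1-z]$ and $\beta_2\Psi$ on $[1-z,1]$, discretizes by the Lie--Trotter formula, expands the resulting product of local exponentials, and passes to the continuum limit. This yields a representation of the trace as a polymer partition function on $\Lambda_n\times[0,1]$, a polymer being a connected cluster of local terms, with weights given by time–ordered integrals of products of the local pieces. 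Two features matter: each polymer weight depends \emph{analytically} on $z$, the dependence entering only through the integration endpoints $1-z,\,z$ and the explicit prefactors; and, at high temperature, the weights satisfy a cluster–expansion convergence criterion (of Koteck\'y--Preiss type) uniformly for $z\in B_\CC(0,r)$, provided $r$ and $\beta_1,\beta_2$ are small enough that $(1+r)\beta_1\|\Phi\|$ and $(1+r)\beta_2\|\Psi\|$ stay below the convergence threshold. Convergence of the expansion then gives that $F_n(z)$ equals an absolutely convergent sum of connected–cluster contributions, each analytic in $z$; hence $F_n$, and with it $E_n$, is analytic on $B_\CC(0,r)$ for every $n$. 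Moreover, grouping clusters by a site they touch and using the Koteck\'y--Preiss bound, $\sup_{z\in B_\CC(0,r)}|F_n(z)|\le C\,|\Lambda_n|$ with $C$ independent of $n$, which together with the bound on the $Z_n$'s establishes item~3 of \Cref{cond-Bryc}.

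Finally, for $x\in(-r,r)$ the existence of $E(x)=\lim_n\tfrac1{|\Lambda_n|}E_n(x)$ follows from translation invariance: assigning to each connected cluster a canonical reference site in $\Lambda_n$, each site deep inside $\Lambda_n$ contributes the same amount $\phi(x)$ — the absolutely convergent sum over clusters anchored at $0$ in $\ZZ^d\times[0,1]$ — while the $\mathcal{O}(n^{d-1})=o(|\Lambda_n|)$ sites near $\partial\Lambda_n$ contribute a uniformly bounded amount, so $\tfrac1{|\Lambda_n|}F_n(x)\to\phi(x)$; combining with the convergence of $\tfrac1{|\Lambda_n|}\log Z_n^\Phi$ and $\tfrac1{|\Lambda_n|}\log Z_n^\Psi$ gives $E(x)$, so item~2 of \Cref{cond-Bryc} holds. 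Together with the previous paragraph this verifies \Cref{cond-Bryc} with $w_n=|\Lambda_n|=(2n+1)^d$.

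I expect the main obstacle to be the honest construction and convergence proof of the space--time polymer expansion for a \emph{product} of two Gibbs–type exponentials rather than a single one, and in particular tracking the analytic dependence on the now–complex splitting parameter $z$ through the time–ordered integrals and checking the Koteck\'y--Preiss condition uniformly over $B_\CC(0,r)$. All the required machinery is, however, already contained in \cite{NR} (and is precisely what \cite{Jaksicetal} adapts for the first–order Stein lemma); the one new point is simply that the same machinery, with $r$ chosen small, delivers complex analyticity on a fixed ball together with the uniform volume–normalized bound — which is exactly what \Cref{cond-Bryc}, i.e.\ the hypotheses of Bryc's theorem, requires.
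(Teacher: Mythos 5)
Your proposal follows essentially the same route as the paper: reduce $E_n(z)$ to the mixed partition function $\log\tr\big(\e^{-(1-z)\beta_1 H_{\Lambda_n}^\Phi}\e^{-z\beta_2 H_{\Lambda_n}^\Psi}\big)$ plus affine pressure terms, invoke the high-temperature cluster expansion of \cite{NR} (the paper cites its Proposition~7.10 directly rather than reconstructing the space--time polymer expansion) to get analyticity of the cluster weights and a uniform summability bound on a fixed complex ball, and then obtain the volume-normalized limit from translation invariance with an $\mathcal O(n^{d-1})$ boundary correction. The argument is correct and matches the paper's proof in all essentials.
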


\noindent\begin{proof}
In the present context, the quantity 
$E_n(z)$ appearing in \Cref{cond-Bryc} is given by
\begin{align}
E_n(z) &:= \Psi_{1-z}(\rho_n |\sigma_n)\nonumber\\
&=\log \tr (\e^{-(1-z) \beta_1 H_{\Lambda_n}^\Phi} \e^{-z \beta_2 H_{\Lambda_n}^\Psi})  - (1-z) \log \tr (\e^{- \beta_1 H_{\Lambda_n}^\Phi})-z \log \tr (\e^{-\beta_2  H_{\Lambda_n}^\Psi}).
\end{align}
It is a well-known result (see e.g.~\cite{Simon}) that, under the assumptions of \Cref{prop-qss}, the quantities
\[\frac{1}{|\Lambda_n|} \log \tr (\e^{- \beta_1 H_{\Lambda_n}^\Phi}) \quad {\hbox{and}} \quad \frac{1}{|\Lambda_n|} \log \tr (\e^{-\beta_2  H_{\Lambda_n}^\Psi})\]
converge as $n\to\infty$, the limits being called the \textit{pressure} associated with the pairs $(\beta_1,\Phi)$ and $(\beta_2, \Psi)$, respectively. 
Hence, to establish \Cref{cond-Bryc} it suffices to prove that there exists an $r >0$ such that the functions
\begin{align}
f_n(z) &:= \frac1{|\Lambda_n|} \log \tr (\e^{-(1-z)\beta_1 H_{\Lambda_n}^\Phi} \, \e^{-z \beta_2 H_{\Lambda_n}^\Psi})
\label{fnz}
\end{align}
are analytic on $B_\CC(0,r)$, converge pointwise on the real segment $(-r,+r)$, and admit a uniform (in $n$ and $z$) bound.

Our proof relies on a result of \cite{NR}, namely Proposition 7.10 of the paper, which in the present context can be stated as \Cref{prop:NR} below. Before stating it, however, we need to define certain quantities.

Consider the completely mixed state acting on the Hilbert space $\cH_n := \h_{\Lambda_n}$. It is an element of the observable algebra $\cA_{\Lambda_n}$ and is given by the tensor product (or i.i.d.) density matrix
$$ \tau_n := \bigotimes_{x \in \Lambda_n} \frac{I_x}{ {\rm{dim}}\, \h} 
= \frac{1}{({\rm{dim}}\, \h)^{|\Lambda_n|}}  \bigotimes_{x \in \Lambda_n} {I_x},
$$
where $I_x \in \cA_x$ denotes the identity operator acting on $\h_x$, and 
$ {\rm{dim}}\, \h$ is the dimension of $\h$ (recall $\h_x \simeq \h$).

Further, define the following quantity, for any $z_1, z_2 \in \mathbb{C}$:
$$ Z_{\Lambda_n}^{z_1,z_2} := \tr \left( \tau_n e^{z_1 \beta_1 H_{\Lambda_n}^\Phi}
e^{z_2 \beta_2 H_{\Lambda_n}^\Psi}\right).$$
Its logarithm is hence given by
\begin{align}\label{log-z}
\log Z_{\Lambda_n}^{z_1,z_2} &= - |\Lambda_n| \log ({{\dim}}\, \h) + \log \tr\left( e^{z_1 \beta_1 H_{\Lambda_n}^\Phi}e^{z_2 \beta_2 H_{\Lambda_n}^\Psi}  \right).
\end{align}
Note in particular, that for the choice $z_1 = z-1$ and $z_2= -z$, the function $f_n(z)$ (defined through \Cref{fnz}), whose analyticity 
and convergence properties we
are interested in, can be expressed in terms of the above quantity as follows:
\begin{align}\label{reln}
f_n(z) = \frac{1}{|\Lambda_n|} \log Z_{\Lambda_n}^{z-1, -z} +  \log ({{\dim}}\, \h).
\end{align}
Hence, we can deduce the desired properties of $f_n(z)$ from those of $\log Z_{\Lambda_n}^{z-1,-z}$.

The quantity $\log Z_{\Lambda_n}^{z_1,z_2}$ has an expansion of the form (see \cite{NR})
\begin{align}\label{cluster}
 \log Z_{\Lambda_n}^{z_1,z_2} &= \sum_{C \subseteq \Lambda_n} \rw^{z_1,z_2}(C),
\end{align}
where the {\em{cluster weights}} $\rw^{z_1,z_2}(C)$ are also translation-invariant, i.e.~\[\rw^{z_1,z_2}(C)=\rw^{z_1,z_2}(C+x)\]
 for any $x\in \Lambda$, $C\in\mathcal X$. Such an expansion is called a {\em{cluster expansion}}. Proposition 7.10 of~\cite{NR} is a statement of analyticity of these cluster weights and the boundedness of the cluster expansion. An immediate simplification can be stated as follows:
\begin{proposition}\label{prop:NR}
Assume that $\Phi$ and $\Psi$ are translation-invariant interactions, and 
\begin{align}\label{c1}
\sum_{X \in \cX_0} e^{2a|X|} \left(e^{\delta(\beta_1 ||\Phi_X|| + \beta_2||\Psi_X||)} -1 \right) \leq a,
\end{align}
for some $a, \delta>0$. 
Then all cluster weights are analytic in 
$\cD:=B_\CC(0,\delta)^2$
and 
\begin{align}
\sup_{(z_1, z_2) \in \cD} \sum_{C_0\in \cX_0} | {\rw}^{z_1,z_2}(C_0)| \leq a
\end{align}
\end{proposition}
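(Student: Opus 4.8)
The plan is to deduce \Cref{prop:NR} directly from Proposition~7.10 of \cite{NR}, by checking that the smallness hypothesis \eqref{c1} is exactly the quantitative input that proposition requires. First I would recall the polymer representation behind the cluster expansion \eqref{cluster}. Since $\tau_n$ is a product state over the sites of $\Lambda_n$ and $\tr(\tau_n\,\cdot\,)$ is proportional to the tracial state on $\cA_{\Lambda_n}$ (hence factorizes over disjoint regions), one expands the two exponentials $\e^{z_1\beta_1 H^\Phi_{\Lambda_n}}$ and $\e^{z_2\beta_2 H^\Psi_{\Lambda_n}}$ occurring in $Z_{\Lambda_n}^{z_1,z_2}$ into their power series in $z_1,z_2$, rewriting $Z_{\Lambda_n}^{z_1,z_2}$ as a sum over finite multisets of interaction terms $\Phi_Y,\Psi_Y$; grouping by the connected components of the union of their supports organizes this into a polymer gas indexed by connected sets $X\subseteq\Lambda_n$ (decorated by such multisets), with polymer weights $\zeta^{z_1,z_2}(X)$ that are \emph{entire} in $(z_1,z_2)$ --- each $\zeta^{z_1,z_2}(X)$ is a finite combination of monomials $z_1^kz_2^l$ with coefficients bounded by $\beta_1^k\beta_2^l\prod\|\Phi_{Y_i}\|\prod\|\Psi_{Y_j}\|/(k!\,l!)$. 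Taking the logarithm and applying the abstract Mayer--Ursell expansion then yields \eqref{cluster}, with $\rw^{z_1,z_2}(C)$ the sum of Ursell contributions of clusters of polymers whose union equals $C$; translation invariance of $\Phi$, $\Psi$, and $\tau_n$ makes the $\rw^{z_1,z_2}$ translation invariant.

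The core of the argument is the convergence criterion. The Koteck\'y--Preiss condition for this polymer gas asks for a constant $a>0$ with $\sum_{X\ni x}\sup_{(z_1,z_2)\in\cD}|\zeta^{z_1,z_2}(X)|\,\e^{2a|X|}\le a$ for every site $x$, the factor $\e^{2a|X|}$ leaving room both for the Koteck\'y--Preiss weight and for the entropy of recovering $X$ from one of its sites. After the Mayer resummation, the weight of a polymer supported on $X$ is bounded, uniformly for $|z_1|,|z_2|\le\delta$, by a product over the interaction terms $\Phi_Y,\Psi_Y$ contributing to it of factors $(\e^{\delta\beta_1\|\Phi_Y\|}-1)$ and $(\e^{\delta\beta_2\|\Psi_Y\|}-1)$ --- the ``$-1$'' reflecting that in a connected cluster each contributing term is hit at least once. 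Inserting this bound and using subadditivity of cardinality, the Koteck\'y--Preiss sum is dominated by the left-hand side of \eqref{c1}; so \eqref{c1} is precisely the statement that the criterion holds with the stated $a$. Proposition~7.10 of \cite{NR} then applies and delivers both conclusions at once: each $\rw^{z_1,z_2}(C)$ is analytic on $\cD=B_\CC(0,\delta)^2$ --- being a locally uniform limit of the analytic partial sums of \eqref{cluster}, it is analytic by the same Vitali-type reasoning invoked after \Cref{cond-Bryc} --- and one has the uniform bound $\sup_{(z_1,z_2)\in\cD}\sum_{C_0\in\cX_0}|\rw^{z_1,z_2}(C_0)|\le a$.

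The genuinely technical ingredients --- constructing the polymer representation in the non-commutative setting and estimating the polymer weights by the product of $(\e^{\delta\beta\|\Phi_Y\|}-1)$-factors --- are carried out in \cite{NR} and are not reproved here; for us the proof reduces to a dictionary check matching notation and normalizations (reference state $\tau_n$, activities $\zeta^{z_1,z_2}$, disc radius $\delta$, with \eqref{c1} the specialization of the hypothesis of Proposition~7.10 of \cite{NR}). The one point demanding genuine attention is exactly this translation: that restricting to $|z_1|,|z_2|\le\delta$ is what converts the operator-norm smallness $\delta\beta_i\|\Phi_Y\|$ into the exponents appearing in \eqref{c1}, and that the single constant $a$ serves simultaneously as the Koteck\'y--Preiss weight and as the output bound. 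Once this is in place, \Cref{prop:NR} follows immediately --- and with it, via the substitution $z_1=z-1$, $z_2=-z$ of \eqref{reln}, the verification of \Cref{cond-Bryc} for the Gibbs states \eqref{eq_defGibbsstates}.
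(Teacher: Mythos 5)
Your proposal is correct and takes essentially the same route as the paper: the paper offers no independent proof of \Cref{prop:NR}, presenting it simply as ``an immediate simplification'' of Proposition~7.10 of \cite{NR} specialized to the present pair of interactions and to the polydisc $B_\CC(0,\delta)^2$, with \eqref{c1} as the imported smallness hypothesis. Your additional sketch of the polymer representation, the Kotecký--Preiss criterion, and the origin of the $(\e^{\delta\beta\|\Phi_Y\|}-1)$ factors is a faithful account of what \cite{NR} supplies, but it does not change the logical structure, which in both cases is a dictionary check against the cited result.
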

To employ the above proposition in our proof, we proceed as follows. Denote by $R$ the range of $\Phi$,
and let $q=\max \{\beta_1 ||\Phi||, \beta_2||\Psi||\}$.
We fix $a >0$ and $\delta =1+r>1 $. The number of sets $X$ that contain the origin $0$, and have diameter at most $R$ is bounded by $2^{N_{R,d}}$ with $N_{R,d}=(2R+1)^d$. A crude upper bound for the left hand side of \Cref{c1} is 
$$ 2^{N_{R,d}} \, e^{2a N_{R,d}} \left( e^{2 \delta  q} - 1\right).$$
Since this upper bound is zero for $q=0$, it remains smaller than $a$ for $q$ small but nonzero. 
For this value of $q$, \eqref{c1} holds, so that the claim of the proposition holds with $\mathcal D=B_\CC(0,1+r)^2$. In particular, setting $z_1 = z-1$ and $z_2= -z$, and using the translation invariance of the interaction $\Phi$ and $\Psi$, we infer that
\begin{equation} \label{eq_uniformboundclusterexpansion}
\sup_{z\in B_{\mathbb{C}}(0,r)} \sum_{C_0\in\cX_0} |\rw^{z-1, -z}(C_0)|\leq a.
\end{equation}
Now observe that any $C \in \cX$ can be written $x+C_0$ with $x\in \Lambda$ and $C_0\in\cX_0$ in exactly $|C|=|C_0|$ distinct ways. Then, using the 
notation $P_c$, which is equal to $1$ if the condition $c$ holds and 
$0$ otherwise, we have
\begin{align*}
\sum_{C\subset\Lambda_n} \rw^{z-1, -z}(C)&= \sum_{x\in \Lambda_n} \sum_{C_0\in \cX_0} 
P_{x+C_0\subseteq \Lambda_n} \frac{\rw^{z-1, -z}(C_0)}{|C_0|}\\
|\Lambda_n| \sum_{C_0\in\cX_0} \frac{\rw^{z-1, -z}(C_0)}{|C_0|} &= \sum_{x\in \Lambda_n} \sum_{C_0\in \cX_0} \frac{\rw^{z-1, -z}(C_0)}{|C_0|},
\end{align*}
so that by \eqref{reln},
\[\sum_{C_0\in\cX_0} \frac{\rw^{z-1, -z}(C_0)}{|C_0|} -\frac1{|\Lambda_n|}\sum_{C\subseteq\Lambda_n} \rw^{z-1, -z}(C) =  \frac1{|\Lambda_n|}\sum_{x\in \Lambda_n} \sum_{C_0\in \cX_0} P_{x+C_0\not\subseteq \Lambda_n} \frac{\rw^{z-1, -z}(C_0)}{|C_0|}.\]
By \Cref{eq_uniformboundclusterexpansion}, one has
\[ \Big|\sum_{C_0\in\cX_0} \frac{\rw^{z-1, -z}(C_0)}{|C_0|}\Big|\leq  \sum_{C_0\in\cX_0} \frac{|\rw^{z-1, -z}(C_0)|}{|C_0|} \leq \sum_{C_0\in\cX_0} |\rw^{z-1, -z}(C_0)|\leq a\]
for any $z\in B_\CC(0,r)$. Now, by \Cref{reln,cluster}, \Cref{cond-Bryc} will follow for this model if we can prove
\begin{equation}\label{eq_cond1QSS}
\lim_{n\to\infty} \sup_{z\in B_\CC(0,r)} \Big|\frac1{|\Lambda_n|}\sum_{x\in \Lambda_n} \sum_{C_0\in \cX_0} P_{x+C_0\not\subseteq \Lambda_n} \frac{\rw^{z-1, -z}(C_0)}{|C_0|}\Big| =0.
\end{equation}
Fix $z\in B_\CC(0,r)$. From \eqref{eq_uniformboundclusterexpansion}, for any $\varepsilon>0$, there exists a finite set $\Sigma\subset\cX_0$ such that 
\[ \sum_{C_0\in \cX_0\setminus\Sigma} |\rw^{z-1,-z}(C_0)| < \varepsilon.\]
Denote by $S=\max \{\mathrm{diam}\,C_0\,|\, C_0\in \Sigma\}$, and let 
\[\Lambda_n^\Sigma = \{x\in\Lambda_n \, |\, x+C_0 \subset \Lambda_n \, \forall C_0\in\Sigma\}.\]
Obviously for $n>S$ one has $\Lambda_{n-S}\subset \Lambda_n^\Sigma$, and
\begin{align*}
\Big|\sum_{x\in \Lambda_n} \sum_{C_0\in \cX_0} P_{x+C_0\not\subseteq \Lambda_n}\frac{\rw^{z-1, -z}(C_0)}{|C_0|}\Big| &\leq \sum_{x\in \Lambda_n} \sum_{C_0\in \cX_0} 
P_{x+C_0\not\subseteq \Lambda_n}\frac{|\rw^{z-1, -z}(C_0)|}{|C_0|}\\
&\leq\sum_{x\in \Lambda_n^\Sigma} \sum_{C_0\in \cX_0} P_{x+C_0\not\subseteq \Lambda_n} \frac{|\rw^{z-1, -z}(C_0)|}{|C_0|}\\
&\qquad + \sum_{x\in \Lambda_n\setminus\Lambda_n^\Sigma} \sum_{C_0\in \cX_0} P_{x+C_0\not\subseteq \Lambda_n} \frac{|\rw^{z-1, -z}(C_0)|}{|C_0|}.
\end{align*}
By definition of $\Lambda_n^\Sigma$ one has
\[  \sum_{x\in \Lambda_n^\Sigma} \sum_{C_0\in \cX_0} P_{x+C_0\not\subseteq \Lambda_n} \frac{|\rw^{z-1, -z}(C_0)|}{|C_0|} \leq|\Lambda_n^\Sigma| \sum_{C_0\in \cX_0\setminus\Sigma} \big|\rw^{z-1, -z}(C_0)\big|<\varepsilon \,|\Lambda_n^\Sigma|. \]
On the other hand,
\[\sum_{x\in \Lambda_n\setminus\Lambda_n^\Sigma} \sum_{C_0\in \cX_0} P_{x+C_0\not\subseteq \Lambda_n} \frac{|\rw^{z-1, -z}(C_0)|}{|C_0|}\leq  | \Lambda_n\setminus\Lambda_n^\Sigma|\sum_{C_0\in \cX_0}  |\rw^{z-1, -z}(C_0)| \leq a\,| \Lambda_n\setminus\Lambda_n^\Sigma|. \]
This implies immediately 
\[ \sup_{z\in B_\CC(0,r)} \Big|\frac1{|\Lambda_n|}\sum_{x\in \Lambda_n} \sum_{C_0\in \cX_0} P_{x+C_0\not\subseteq \Lambda_n} \frac{\rw^{z-1, -z}(C_0)}{|C_0|}\Big| \leq \varepsilon + a\,\Big(1-\Big(\frac{2(n-S)+1}{2n+1}\Big)^d\Big),\]
where we use the fact that $|\Lambda_{n-S}| \leq |\Lambda_n^\Sigma|$.
Since $\varepsilon$ is arbitrary, this implies \Cref{eq_cond1QSS} and therefore \Cref{cond-Bryc}. 

\qed
\end{proof}

\subsection{Free fermions on a lattice} \label{sec_CAR}
In this section we consider quasi-free states of a fermionic lattice gas. We closely follow the treatment of \cite{MHOF}. This is given, for example, by states of non-interacting electrons on the lattice $\ZZ^d$, which are allowed
to hop from site to site. Since the electrons are fermions, they are subject to Fermi-Dirac statistics and hence they have to obey the Pauli exclusion principle. The prefix ``quasi''
in ``quasi-free'' can be heuristically understood to arise from the fact that, even though the electrons do not interact among themselves, their movement on the lattice is constrained
by the Fermi-Dirac statistics.

Let us give a brief description of the setup. Let us start with the one-particle Hilbert space $\h := \ell^2(\ZZ^d)$. It represents a single particle which is confined to the lattice $\Lambda:=\ZZ^d$. To incorporate the statistics of the particle, we consider the Fock space
$\cF(\h):= \oplus_{n \in {\mathbb{N}}} \wedge^n \h$, with the convention $\wedge^0= \mathbb{C}$. Here we use the notation
$$x_1 \wedge x_2 \ldots \wedge x_n = \frac{1}{\sqrt{n!}}\sum_{\sigma \in S_n} {\rm{sgn}}(\sigma) x_{\sigma(1)} \otimes x_{\sigma(2)} \ldots \otimes x_{\sigma(n)}$$
where the summation runs over all permutations of $n$ elements and ${\rm{sgn}}(\sigma)$ denotes the
sign of the permutation $\sigma$. The algebra of observables is the C*-algebra generated by the creation and annihilation operators on the Fock space, obeying the canonical anticommutation relations (CAR). More precisely, for each $x \in \h$,
the creation operator $a^*(x)$ is the unique bounded linear extension $a^*(x): \cF(\h) \mapsto \cF(\h)$ of 
$$
a^*(y) : x_1  \wedge \ldots \wedge x_n \mapsto y \wedge x_1  \wedge \ldots \wedge x_n, \quad x_1, \ldots, x_n \in \h, \, n \in {\mathbb{N}},$$
and the corresponding annihilation operator is its adjoint $a(x) := (a^*(x))^*$.
Creation
and annihilation operators satisfy the canonical anticommutation relations (CAR):
$$a(x)a(y) + a(y)a(x) = 0 , \quad a(x)a^*(y) + a^*(y) a(x) = \langle x, y \rangle \,{\mathbb{I}}$$
The C*-algebra generated by $\{ a(x) : x \in \h \}$ is called the algebra of the canonial
anticommutation relations, and is denoted by ${\rm{CAR}}(\h)$. The number operator $N$ is defined in terms of the creation and 
annihilation operators as follows: $N = \sum_k a_k^* a_k$ where $a_k = a(e_k)$ for some orthonormal basis $\{e_k\}$ of $\h$. The algebra $\mathrm{CAR}(\h)$ is infinite-dimensional.
However, the objects associated with $\h_n=\ell^2(\Lambda_n)$, where $\Lambda_n$ is a finite subset of the lattice $\ZZ^d$, are finite-dimensional.

Let us denote the standard basis in $\ell^2(\ZZ^d)$ by $\{e_{\bfi} : \bfi \in \ZZ^d\}$. Then {\em{shift operators}} are
defined as the unique linear extensions of $T_\bfj : e_\bfi \to e_{\bfi + \bfj}$, $\bfi \in \ZZ^d$, $\forall\, \bfj \in \ZZ^d$. An operator which commutes with all 
the unitaries $T_\bfj$ is said to be shift-invariant. Shift-invariant operators on $l^2(\ZZ^d)$ commute with each other. Moreover, defining the Fourier transformation
$$
{\mathfrak{F}}:\,\ell^2(\ZZ^d)\to L^2([0,2\pi)^d);\, \quad \, {\mathfrak{F}} e_{\bfk}:=\varphi_{\bfk}(\bfx), \quad {\hbox{where}} \,\, \varphi_{\bfk}(\bfx) = e^{i\langle \bfk, \bfx \rangle}, \,\, \bfx \in [0,2\pi)^d, \,\, \bfk \in \ZZ^d,
$$
where $\langle \bfk, \bfx \rangle:=\sum_{i=1}^d k_ix_i$, every shift-invariant operator $Q$ can be expressed in the form $Q=\mathfrak{F}^{-1}M_{\hq}\mathfrak{F}$, where $M_{\hq}$ denotes the multiplication operator by a bounded measurable function $\hq$ on $[0,2\pi)^d$. The operator of $Q$ in the basis $\{e_{\bfi} : \bfi \in \ZZ^d\}$ is a Toeplitz operator, and its elements are given by
$$Q_{\bfj, \bfk} \equiv \langle e_\bfj, Q e_\bfk\rangle = \frac{1}{(2\pi)^d} \int e^{- i \langle\bfj - \bfk, \bfx\rangle} \hq(\bfx) d\bfx.$$

Let $H \in \cB(\h)$ denote the one-particle Hamiltonian of a system of free fermions on the lattice $\Lambda$. Then an example of a quasi-free state of a fermionic lattice gas is its Gibbs state, which for an inverse temperature $\beta$, is given by the density matrix 
$\rho_\beta := e^{-\beta \hat{H}}/{\tr(e^{-\beta\hat{ H}})}$, where $\hat{H}=\d\Gamma(H)$ denotes the (differential) second quantization of $H$, acting on the Fock space $\cF(\h)$ (see \cite{BR2} or Merkli's notes in \cite{OQS1}). Consider, in particular, examples in which the Hamiltonian is the second quantization of a shift-invariant operator, implying that the total number of fermions is conserved. The Gibbs state can be expressed as a linear form on ${\rm{CAR}}(\h)$ as follows 
\begin{align}\label{QomegaQ}
	{\omega_Q\Big(a^*(x_1)\ldots a^*(x_n) a(y_m) \ldots a(y_1)\Big)}&:= \tr \big( \rho_\beta\, a^*(x_1)\ldots a^*(x_n) a(y_m) \ldots a(y_1)\big)\nonumber\\
	&= \delta_{mn} \,{\rm{det}} \big(\langle y_i, Q x_j\rangle\big)_{i,j},
\end{align}
where $Q=\e^{-\beta H}(1+\e^{-\beta H})^{-1} \in \cB(\h)$, where $H$ denotes the one-particle Hamiltonian. 
The state $\omega_Q$ is in then called a quasi-free state with {\em{symbol}} $Q$. It is shift-invariant, since the underlying Hamiltonian is chosen to be shift-invariant.

For our quantum hypothesis testing problem we proceed as follows. Consider a sequence of finite subsets $\Lambda_n := \{0,\ldots,n-1\}^d \subset \ZZ^d$ and the subspaces $\h_n=\ell^2(\Lambda_n)$ which we can view as subspaces of $\h$.  Let $\omega_Q$ and $\omega_R$ be two shift-invariant Gibbs states with symbols $Q$ and $R$ respectively, such that $\delta < Q, R < 1- \delta$ for some 
$\delta \in (0,1/2)$. By the discussion above, 
the operators ${\mathfrak{F}} Q {\mathfrak{F}}^{-1}$ and ${\mathfrak{F}} R {\mathfrak{F}}^{-1}$ on $L^2([0,2\pi)^d)$ are multiplication operators by bounded measurable functions $\hq, \hr : [0, 2 \pi)^d \to [0,1]$ with essential range in $[\delta,1-\delta]$. This condition ensures that the ``local restrictions" of $\omega_Q$, $\omega_R$ which we now define, are faithful. Let $Q_n=P_n Q P_n$, $R_n=P_n R P_n$, where $P_n$ denotes the orthogonal projection operator on~$\h_n$. Then we choose the sequence of states $(\rho_n, \sigma_n)_{n \in {\mathbb{N}}}$ of our hypothesis testing problem to be given by density matrices corresponding to states $\omega_{Q_n}$ and $\omega_{R_n}$ on $\mathrm{CAR}(\h_n)$, associated with the symbols $Q_n$, $R_n$ respectively  through \Cref{QomegaQ}. That is, the null-hypothesis in this case is that the true state of the infinite system is $\omega_Q$, while the alternative hypothesis is that it is $\omega_R$, and we make measurements on local subsystems to decide between these two options. By Lemma 3 in \cite{Dierckx2008a}, the density matrices $\rho_n$, $\sigma_n$ associated with $\omega_{Q_n}$ and $\omega_{R_n}$ are
\begin{align}\label{gibbs}
	\rho_n=\det(\mathbb I - Q_n) \, \bigoplus_{k=0}^{\dim \mathfrak h _n} \bigwedge^k \frac{Q_n}{\mathbb I - Q_n} \qquad 
	\sigma_n=\det(\mathbb I - R_n) \, \bigoplus_{k=0}^{\dim \mathfrak h _n} \bigwedge^k \frac{R_n}{\mathbb I - R_n}
\end{align}
where for any operator $A$ on $\mathfrak h$, $\wedge^k A$ denotes the restriction of $A^{\otimes k}$ to the antisymmetric subspace $\wedge^k \h$ of $\mathfrak h^{\otimes k}$. 
As shown in \cite{MHOF}, it follows from \Cref{gibbs} that for any $s\in\mathbb{R}$
\begin{align}\label{psisfermions}
	\Psi_s(\rho_n|\sigma_n)=\tr \log (\mathbb{I}_n-Q_n)^{s}+
	\tr \log(\mathbb{I}-R_n)^{1-s} + \tr \log (\mathbb{I}_n+W_n^{(s)}),
\end{align}	
where $$W_{n}^{(s)}=\left(\frac{Q_n}{\mathbb{I}_n-Q_n} \right)^{\frac{s}{2}}   
\left(\frac{R_n}{\mathbb{I}_n-R_n} \right)^{1-s}    \left(\frac{Q_n}{\mathbb{I}_n-Q_n} \right)^{\frac{s}{2}}.$$
and
\begin{align} \label{eq_EntRevCAR}
	D(\rho_n ||\sigma_n) &=
	\frac{\d}{\d s}\Psi_s(\rho_n|\sigma_n)\vert_{s=1}\nonumber\\
	&= \tr (Q_n \log Q_n) +\tr  ((\mathbb I_n-Q_n)\log(\II_n-Q_n)) \nonumber \\&\qquad-\tr (Q_n\log R_n)-\tr ((\II_n-Q_n) \log(\II_n-R_n)).
\end{align}

\begin{proposition}
	For the sequence $(\rho_n,\sigma_n)_{n\in\NN}$ defined above, we have
	\begin{align} 
		d(\hrho,\hsigma) &:= \lim_{n \to \infty} \frac1{n^d}\,D(\rho_n||\sigma_n) \\
		&=\frac1{(2\pi)^d}\int_{[0,2\pi)^d}\hq(\bfx)\log \frac{\hq(\bfx)}{\hr(\bfx)} +(1-\hq(\bfx)) \log \frac{1-\hq(\bfx)}{1-\hr(\bfx)}\,{\mathrm{d}} \bfx.\label{eq_CARD}
	\end{align}
	and for any $s\in\mathbb{R}$
	\begin{align}\label{eq_psi}
		\lim_{n \to \infty} \frac{1}{n^d} \Psi_s(\rho_n|\sigma_n)
		&=\frac1{(2\pi)^d}\int_{[0,2\pi)^d}\log \big(\hq^{s}(\bfx)\hr^{1-s}(\bfx)+(1-\hq(\bfx))^s(1-\hr(\bfx))^{1-s} \big){\mathrm{d}} \bfx.
	\end{align}
	Moreover the sequence $(\rho_n,\sigma_n)_{n\in\NN}$ satisfies \Cref{cond-Bryc} with respect to the weights $w_n=|\Lambda_n|$.
	Assuming moreover that $Q$ satisfies 
	\begin{align}\label{condsoncoroll}
		\sum_{\bfk \in \ZZ^d}|\bfk|^{d}\left|\langle  e_\bfk,Qe_\mathbf{0}\rangle\right|<\infty,
	\end{align}	
	then
	\begin{align}\label{strong}
		D(\rho_n||\sigma_n)-n^d\, d(\hrho,\hsigma) = \mathcal O(n^{d-1}).
	\end{align}
\end{proposition}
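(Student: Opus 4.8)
The plan is to establish, in order: (i) the two limit formulas \eqref{eq_CARD} and \eqref{eq_psi} via Szegő-type asymptotics for Toeplitz operators; (ii) \Cref{cond-Bryc} by checking analyticity, pointwise convergence, and the uniform bound for the functions $z \mapsto \frac{1}{n^d}\Psi_{1-z}(\rho_n|\sigma_n)$, using the explicit formula \eqref{psisfermions}; and (iii) the sharp remainder estimate \eqref{strong} under the extra decay hypothesis \eqref{condsoncoroll}, which is where the real work lies.

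For step (i), I would start from \eqref{psisfermions} and \eqref{eq_EntRevCAR}, which express $\Psi_s(\rho_n|\sigma_n)$ and $D(\rho_n\|\sigma_n)$ as traces of functions of the truncated Toeplitz operators $Q_n = P_n Q P_n$ and $R_n = P_n R P_n$. Since $Q = \mathfrak F^{-1} M_{\hq}\mathfrak F$ and $R=\mathfrak F^{-1}M_{\hr}\mathfrak F$ are shift-invariant with symbols $\hq,\hr$ taking values in $[\delta,1-\delta]$, the operators $Q_n,R_n$ are Hermitian Toeplitz matrices with spectra in $[\delta,1-\delta]$. The first-order Szegő limit theorem then gives, for any continuous $F$ on $[\delta,1-\delta]$, that $\frac{1}{n^d}\tr F(Q_n) \to \frac{1}{(2\pi)^d}\int_{[0,2\pi)^d} F(\hq(\bfx))\,\d\bfx$; one applies this (and its two-symbol generalization for the cross term $\tr\log(\mathbb I_n + W_n^{(s)})$, which is a smooth function of commuting-in-the-limit Toeplitz operators) to read off \eqref{eq_psi}, then differentiate at $s=1$ — justified via \Cref{cond-Bryc}, once proven — to get \eqref{eq_CARD}. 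The cross term requires a little care since $Q_n$ and $R_n$ do not commute for finite $n$, but the symbols $\hq,\hr$ commute as multiplication operators, so $W_n^{(s)}$ has the same limiting distribution as $\big(\frac{\hq}{1-\hq}\big)^{s}\big(\frac{\hr}{1-\hr}\big)^{1-s}$, and $\frac{1}{n^d}\tr\log(\mathbb I_n + W_n^{(s)}) \to \frac{1}{(2\pi)^d}\int \log\!\big(1 + \frac{\hq^s}{(1-\hq)^s}\frac{\hr^{1-s}}{(1-\hr)^{1-s}}\big)\d\bfx$, which combines with the other two terms to give \eqref{eq_psi}.

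For step (ii), analyticity of $E_n(z) = \Psi_{1-z}(\rho_n|\sigma_n)$ on a complex ball $B_\CC(0,r)$ follows because for $\mathrm{Re}(z)$ in a neighbourhood of $[0,1]$ the operators $\big(\frac{Q_n}{\mathbb I_n - Q_n}\big)^{z/2}$ etc. are well-defined analytic operator-valued functions (spectra bounded away from $0$ and $\infty$) and $\det(\mathbb I_n + W_n^{(z)})$ is a nonvanishing analytic function of $z$ — nonvanishing because $W_n^{(z)}$ stays in a region where $\mathbb I_n + W_n^{(z)}$ is invertible for $z$ near the real segment. Pointwise convergence of $\frac{1}{n^d}E_n(x)$ on $(-r,r)$ is \eqref{eq_psi} (extended from $s\in\mathbb R$ to a real neighbourhood). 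The uniform bound $\sup_n \sup_{z\in B_\CC(0,r)} \frac{1}{n^d}|E_n(z)| < \infty$ follows from \eqref{psisfermions}: each of the three traces is $O(n^d)$ with constant depending only on $\delta$ and $r$, since $\|\log(\mathbb I_n - Q_n)^z\| = O(|z|)$ uniformly and $\tr |\log(\mathbb I_n + W_n^{(z)})| \le n^d \sup\|\cdot\| = O(n^d)$. Invoking Vitali's theorem as in the text then yields \Cref{cond-Bryc}.

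Step (iii) is the main obstacle. Under \eqref{condsoncoroll}, i.e. $\sum_{\bfk}|\bfk|^d|\langle e_\bfk, Q e_{\mathbf 0}\rangle| < \infty$, the symbol $\hq$ has enough smoothness (its Fourier coefficients decay like $|\bfk|^{-d}$ summably against $|\bfk|^d$) that the Szegő asymptotics acquire a sharp first correction: one expects $\tr F(Q_n) = n^d \cdot \frac{1}{(2\pi)^d}\int F(\hq) + O(n^{d-1})$, the $O(n^{d-1})$ being a boundary/surface term from the truncation $P_n$, rather than the $o(n^d)$ coming from the naive limit theorem. The cleanest route is to compare $\tr F(Q_n)$ with $\tr\big(P_n F(Q) P_n\big) = \sum_{\bfk\in\Lambda_n}\langle e_\bfk, F(Q) e_\bfk\rangle = n^d\,\langle e_{\mathbf 0}, F(Q) e_{\mathbf 0}\rangle$ (by shift-invariance of $F(Q)$), which already equals the main term exactly since $\langle e_{\mathbf 0}, F(Q)e_{\mathbf 0}\rangle = \frac{1}{(2\pi)^d}\int F(\hq)$; the error $\tr F(Q_n) - \tr P_n F(Q) P_n$ then has to be bounded by $O(n^{d-1})$. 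This difference is controlled using a commutator/resolvent expansion: writing $F$ via the holomorphic functional calculus, $F(Q_n) - P_n F(Q)P_n = \frac{1}{2\pi i}\oint F(\lambda)\big[(\lambda - Q_n)^{-1} - P_n(\lambda-Q)^{-1}P_n\big]\d\lambda$, and the bracketed operator is expressed through $P_n Q (\mathbb I - P_n)$ — an operator whose Hilbert-Schmidt norm on the ``boundary layer'' of $\Lambda_n$ is governed precisely by $\sum_{\bfk}|\bfk|\cdot|\langle e_\bfk, Q e_{\mathbf 0}\rangle|$-type sums, which \eqref{condsoncoroll} controls (the weight $|\bfk|^d$ is what is needed in dimension $d$ to sum the contributions of the $O(n^{d-1})$ boundary sites times the off-diagonal reach). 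Applying this to $F(t) = t\log t$, $(1-t)\log(1-t)$ and to the analogous expansion for the cross terms in $D(\rho_n\|\sigma_n)$ (here one also needs the matching decay to be available — note the hypothesis is stated only on $Q$, so one uses that $R$ is fixed and the $R$-dependent terms $\tr(Q_n\log R_n)$, $\tr((\mathbb I_n - Q_n)\log(\mathbb I_n - R_n))$ can be handled by the same boundary estimate applied to $Q_n$ against the smooth bounded function $\log\hr$, $\log(1-\hr)$), and summing the $O(n^{d-1})$ contributions gives \eqref{strong}. The delicate point is bookkeeping the interplay between the off-diagonal decay rate of $Q$ and the geometry of the cube $\Lambda_n$ to confirm the exponent is exactly $d-1$ and not worse; this is where I would spend most of the effort, likely citing or adapting the Toeplitz-determinant / Szegő-correction machinery (as in the references to \cite{MHOF} and the Toeplitz literature) rather than reproving it from scratch.
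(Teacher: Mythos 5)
Your overall plan is sound, and your steps (i) and (ii) essentially coincide with the paper's: \eqref{eq_CARD} and \eqref{eq_psi} are simply quoted from \cite{MHOF} (so your Szeg\H{o}-theorem sketch is doing work the paper outsources), and the verification of points 1 and 3 of \Cref{cond-Bryc} proceeds exactly as you describe, via complex powers of $A_n=Q_n(\II_n-Q_n)^{-1}$ and $B_n=R_n(\II_n-R_n)^{-1}$ with spectra in $[M^{-1},M]$ and trace bounds of order $n^d$. One point where you are too casual: the claim that $\II_n+W_n^{(1-z)}$ remains invertible for complex $z$ does not follow merely from analyticity of $z\mapsto W_n^{(1-z)}$. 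The paper makes this quantitative by proving $\|W_n^{(1-z)}-W_n^{(1-s)}\|\le 3CM^{1+2r}r$ and invoking a spectral-variation bound (Theorem VI.3.3 of \cite{Bhatia}) to confine $\spec(W_n^{(1-z)})$ to the right half-plane, \emph{uniformly in $n$}, for $r$ small; some such uniform quantitative control is indispensable and you should supply it. For step (iii) you take a genuinely different route: the paper does not use a resolvent/commutator Szeg\H{o}-correction argument, but computes $\tr(Q_n\log R_n)$ bare-handedly, rewriting the double sum over $\bfi,\bfj\in\{0,\dots,n-1\}^d$ as a sum over $\bfk=\bfi-\bfj$ with multiplicity $\prod_i(n-|k_i|)$, expanding that product to isolate the $n^d$ term, and bounding the remainder by $\sum_{l\le d-1} n^{l}\sum_\bfk|\bfk|^{d}|\langle e_\bfk,Qe_{\mathbf 0}\rangle|$ --- which is exactly where the weight $|\bfk|^{d}$ in \eqref{condsoncoroll} is consumed. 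Your comparison of $\tr F(Q_n)$ with $\tr\,P_nF(Q)P_n$ is a legitimate and arguably cleaner alternative for the terms that are functions of $Q_n$ alone, and it in fact needs less than the full weight $|\bfk|^d$.

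The genuine gap in your step (iii) is the treatment of the cross terms $\tr(Q_n\log R_n)$ and $\tr\big((\II_n-Q_n)\log(\II_n-R_n)\big)$ from \eqref{eq_EntRevCAR}. Your plan is to run the boundary estimate for $Q_n$ ``against the smooth bounded function $\log\hr$'', but that presupposes replacing $\log R_n$ by the truncated Toeplitz operator $P_n(\log R)P_n$ (the one whose symbol actually is $\log\hr$), and the error $\log R_n-P_n(\log R)P_n$ is itself a boundary-layer operator whose control in trace norm requires off-diagonal decay of $R$ --- which hypothesis \eqref{condsoncoroll} does not provide, since it constrains $Q$ only. As written, this step does not close: you must either impose a matching decay condition on $R$, or devise a pairing in which the error is always summed against the decaying matrix elements of $Q$ with $\log R_n$ entering only through the uniform bound $\|\log R_n\|\le |\log\delta|$ (the latter is the spirit of the paper's computation, which keeps $Q$'s Fourier coefficients explicit and never differentiates or expands in $R$). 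This is the one place where your outline, if executed literally, would fail.
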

\begin{remark}
	This proves that whenever \Cref{condsoncoroll} is satisfied, then the assumptions of \Cref{coro_ideal} hold for $d = 1$. In the general case, denote $C$ such that $D(\rho_n||\sigma_n)=n^2\, d(\hrho,\hsigma) + n \, C + O(1)$, then for $d=2$, the second order of $-\log\beta_n(\varepsilon)$ is essentially given by $n\, (t^*_2(\eps) + C)$ and for $d\geq 3$ it is given by $n^{d-1} C$. 
\end{remark}	
\begin{remark}
	The stronger assumption of \Cref{condsoncoroll} holds in particular if the two states considered here are Gibbs states at different temperatures, under the condition that for some $\beta>0$,
	\begin{equation*} 
		\sum_{{\bf k} \in \ZZ^d} |{\bf k}|^d \big|\langle e_{\bf k}, \frac{\e^{-\beta H}}{1+\e^{-\beta H}} \,e_{\bf 0}\rangle \big| < + \infty,
	\end{equation*} This proves that for such systems we are in the case where \Cref{coro_ideal} holds.
\end{remark} 

\begin{proof}
	\Cref{eq_CARD} and \Cref{eq_psi} (which is point 2 of \Cref{cond-Bryc}) were proved in \cite{MHOF}. We now prove that the first and third parts of \Cref{cond-Bryc} hold as well. For any $n$, let $A_n=\frac{Q_n}{ \mathbb{I}_n - Q_n}$ and $B_n=\frac{R_n}{\mathbb{I}_n- R_n}$. We have $M^{-1}\leq A_n \leq 
	M$, and $M^{-1}\leq B_n \leq M$ for $M=(1-\delta)\delta^{-1}$.
	The operators~$A_n$ and $B_n$ are self-adjoint and we can therefore define their complex powers. Define $W_n^{(1-z )}= A_n^{\frac{1-z}2} B_n^{z} A_n^{\frac{1-z}2}$. As $A_n^z=A_n^s A_n^{i t}$, we have for $r\in(0,1)$ and any $z=s+i t$ in $B_\CC(0,r)$,
	\begin{align}
		\max(\|A_n^z\|,\|B_n^z\|)\leq M^{|s|} \quad \mbox{so that} \quad  \|W_n^{(1-z)}\|\leq M^{1+2r}. \label{ABmax}
	\end{align}
	As any differentiable function is locally Lipschitz, with Lipschitz constant given by the supremum norm of its derivative over each bounded interval, we also obtain the following bound
	\begin{align}
		\|A_n^s-A_n^{s+it}\|=\|A_n^{s+i0}-A_n^{s+it}\|\le \sup_{0\le u\le t}\|(\log A_n) A_n^{s+iu}\||t|\le r M^{|s|}\log M.
	\end{align}
	Proceeding in the same manner with $B_n$, we conclude that there exists a constant $C:=\log M$ such that
	\begin{align}
		\max\big(\|A_n^{s}- A_n^{s+i t}\|,\|B_n^{s}- B_n^{s+i t}\|\big)\leq C M^{|s|} r \quad \mbox{if } |t|<r.\label{diffABmax}
	\end{align}
	Using \Cref{ABmax}, \Cref{diffABmax} and the triangle inequality, we obtain
	\begin{align} 
		\big\|W_n^{({1-z})}-W_n^{(1-s)} \big\| &= \|A_n^{\frac{1-z}{2}}B_n^zA_n^{\frac{1-z}{2}}-A_n^{\frac{1-s}{2}}B_n^sA_n^{\frac{1-s}{2}}\|\label{eq_normcontrol}\\
		&\le \|A_n^{1/2}\|^2 \| A_n^{-z/2}(B_n^z- B_n^s)A_n^{-z/2}+ A_n^{-z/2} B_n^{s} A_n^{-z/2}-A_n^{-s/2}B_n^{s} A_n^{-s/2}\|\nonumber \\
		&\le M^{1+2|s|}  Cr+M \| (A_n^{-z/2}-A_n^{-s/2})B_n^sA_n^{-z/2}+A_n^{-s/2}B_n^{s}(A_n^{-z/2}-A_n^{-s/2})\|\nonumber \\
		&\le 3M^{1+2|s|}Cr\le 3M^{1+2r}Cr,\nonumber
	\end{align}
	Theorem VI.3.3 from \cite{Bhatia} implies that
	\begin{equation} \label{eq_spectralcontrol}
		\max_{\lambda\in \mathrm{sp}\,W_n^{({z})}}\min_{\mu\in \mathrm{sp}\,W_n^{({s})}}|\lambda-\mu| \leq   3 C M^{1+2r} r.
	\end{equation}
	Since $\mathrm{sp}\,W_n^{({1-s})}\subset [M^{-(1+2r)},M^{1+2r}]$, \Cref{eq_spectralcontrol} implies that, for $r$ small enough, the spectrum of $W_n^{(1-z)}$ remains in the half plane $\{z'\in \CC\, |\, \mathrm{Re}\,z' >0\}$ for any $n$ and $z\in B_\CC(0,r)$. Then considering the principal branch of the complex logarithm (which we simply denote by~$\log$), we can define $\tr \log (\mathbb{I}_n+W_n^{(1-z)})$ by Dunford-Taylor functional calculus (see \cite{Kato}) for any $z\in B_\CC(0,r)$, and it will be an analytic function of $z$. Then expression \eqref{psisfermions} gives us an analytic extension of $\frac1{n^d}\Psi_{1-z}(\rho_n|\sigma_n)$ to $z\in B_\CC(0,r)$, and a uniform bound is provided by 
	\begin{gather*}
		\max\big(\big|\tr\log(\II_n-Q_n)\big|,\big|\tr\log(\II_n-R_n)\big|\big)\leq n^d\,\log(1+M),\\
		|\tr\log(\II_n+W_n^{(1-z)})\big|\leq n^d\,\log(1+M^{1+2r}).
	\end{gather*}
	Therefore, points 1 and 3 of \Cref{cond-Bryc} hold.
	
	We now turn to the estimate of $D(\rho_n||\sigma_n)- n^d\,d(\hat\rho,\hat\sigma)$. 
	Using the fact that $Q_n$ and $R_n$ are truncated Toeplitz operators, as well as $\langle \e_\bfi,Q_n e_{\bfj}\rangle=\langle \e_\bfi,Q e_{\bfj}\rangle$ for any $\bfi,\bfj\in \{0,...,n-1\}^d$, we have
	\begin{align}
		\tr (Q_n\log R_n) &= \sum_{\bfi \in \{0,\ldots,n-1\}^d} \langle e_\bfi, Q_n \log R_n \, e_\bfi\rangle=  \sum_{\bfi,\bfj\in \{0,\ldots,n-1\}^d} \langle e_\bfi, Q e_\bfj\rangle \,\overline{\langle e_\bfi,\log R_n \, e_\bfj\rangle}\nonumber\\
		&=  \sum_{\bfi,\bfj\in \{0,\ldots,n-1\}^d} \langle e_{\bfi-\bfj}, Q e_\mathbf{0}\rangle \,\overline{\langle e_{\bfi-\bfj},\log R_n \, e_\mathbf{0}\rangle}\nonumber\\
		&=  \sum_{\bfk\in \{-(n-1),\ldots,+(n-1)\}^d}\big(\prod_{i=1}^d 2 (n-|k_i|)\big) \langle e_{\bfk}, Q e_\mathbf{0}\rangle \,\overline{\langle e_{\bfk},\log R_n \, e_\mathbf{0}\rangle}\nonumber\\
		&= (2n)^d  \sum_{\bfk\in \{-(n-1),\ldots,+(n-1)\}^d} \langle e_{\bfk}, Q e_\mathbf{0}\rangle \,\overline{\langle e_{\bfk},\log R_n \, e_\mathbf{0}\rangle}\nonumber\\
		&\quad + 2^d \sum_{\bfk\in \{-(n-1),...,(n-1)\}^d} \langle e_\bfk,Qe_\mathbf{0}\rangle \overline{\langle e_\bfk,\log R_n e_\mathbf{0}\rangle}\nonumber\\
&\qquad \times \sum_{l=0}^{d-1}n^{l}(-1)^{d-l}\sum_{1\le i_1<...<i_{d-l}\le d}\prod_{j=1}^{d-l}|k_{i_j}|\nonumber\\
		&=:n^d A_n+B_n\label{AnBn}
	\end{align}
	where $\bfk=(k_1,...,k_d)$. However, the condition \reff{condsoncoroll}, as well as the bound $\big|\langle e_{\bfk},\log R_n \, e_0\rangle\big|\leq \log (1-\delta)$ imply that 
	\begin{align*}
		&n\,\Big|\sum_{\bfk\notin \{-(n-1),\ldots,+(n-1)\}^d} \langle e_{\bfk}, Q e_\mathbf{0}\rangle \,\overline{\langle e_{\bfk},\log R_n \, e_\mathbf{0}\rangle}\Big|\le \log(1-\delta)\sum_{\bfk\notin \{-(n-1),\ldots,+(n-1)\}^d}|\bfk|\left| \langle e_{\bfk}, Qe_\mathbf{0}\rangle \right|\rightarrow0,
	\end{align*}
	so that 
	\begin{align*}
		\sum_{\bfk\in\ZZ^d} \langle e_{\bfk}, Q e_\mathbf{0}\rangle \,\overline{\langle e_{\bfk},\log R_n \, e_\mathbf{0}\rangle} -  \sum_{\bfk\in \{-(n-1),\ldots,+(n-1)\}^d} \langle e_{\bfk}, Q e_\mathbf{0}\rangle \,\overline{\langle e_{\bfk},\log R_n \, e_\mathbf{0}\rangle}=o(n^{-1})
	\end{align*}	
	Thus, the term $A_n$ in \Cref{AnBn} has the following asymptotic behavior:
	\begin{align*}
		A_n=2^d  \sum_{\bfk\in \ZZ^d} \langle e_{\bfk}, Q e_\mathbf{0}\rangle \,\overline{\langle e_{\bfk},\log R_n \, e_\mathbf{0}\rangle}+o(n^{-1}).
	\end{align*}	
	Similarly
	\begin{align*}
		|B_n|&\le 2^d \log(1-\delta) \sum_{l=0}^{d-1}n^l\sum_{\bfk\in\ZZ^d}|\langle e_\bfk,Qe_\mathbf{0}\rangle|\sum_{1\le i_1<...<i_{d-l}\le d}\quad\prod_{j=1}^{d-l}|k_{i_j}|\\
		&\le 2^d \log(1-\delta)\sum_{l=0}^{d-1}n^l\sum_{\bfk\in\ZZ^d}|\bfk|^d|\langle e_\bfk,Qe_\mathbf{0}\rangle|,
	\end{align*}
	which implies that $B_n=\mathcal{O}(n^{d-1})$. Hence, $\tr Q_n\log R_n = n^{d}A'+\mathcal{O}(n^{d-1})$ for some constant $A'$. Similarly we prove 
	that the three other terms of \Cref{eq_EntRevCAR} have the same asymptotic expansion, so that $D(\rho_n||\sigma_n)=n^d A^{\prime\prime}+\mathcal{O}(n^{d-1})$, for some constant $A^{\prime\prime}$. But from \Cref{eq_CARD} (which is a consequence of the multidimensional multivariate Szeg\"o limit theorem (see Lemma 3.1 of \cite{MHOF}), $A''=d(\hat{\rho},\hat{\sigma})$, and \Cref{strong} follows.
	\qed
\end{proof}

\paragraph{Acknowledgements}
N.D.~would like to thank Fernando Brandao, Mark Fannes and Will Matthews for helpful discussions. Y.P.~acknowledges the support of ANR contract ANR-14-CE25-0003-0, and would like to thank the Statistical Laboratory of the Centre for Mathematical Sciences at University of Cambridge (where part of this work was done) for hospitality. N.D.~and C.R.~are grateful to Felix Leditzky for help with the figures. All three authors are grateful to Vojkan Jaksic for helpful discussions. They would also like to thank Milan Mosonyi for his valuable comments on an earlier version of the paper.

\bibliography{biblio}

\appendix
\section{Vitali's theorem and Bryc's theorem} \label{app_Vitali}

In this section we state and prove Vitali's theorem, and prove Bryc's theorem (both in the case of one variable).

\begin{theorem}
Let $r>0$, and let $(F_n)_{n\in\NN}$ be a family of analytic functions on $B_\CC(0,r)$, such that
\begin{enumerate}
\item for any $x$ (real) in $(-r,+r)$, $F_n(x)\rightarrow_{n\rightarrow\infty}F(x)$,
\item one has the uniform bound $\sup_n \sup_{z\in B_\CC(0,r)} |F_n(z)|<\infty$,
\end{enumerate}
then 
\begin{enumerate}
\item the function $F$ so defined can be extended to an analytic function on $B_\CC(0,r)$, 
\item for any $k\in \{0\}\cup\NN$, the $k$-th derivative $F_n^{(k)}$ converges uniformly on any compact subset of $B_\CC(0,r)$ to the $k$-th derivative $F^{(k)}$.
\end{enumerate}
\end{theorem}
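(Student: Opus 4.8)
The plan is to deduce Vitali's theorem from normality of the family together with the identity theorem. First I would observe that hypothesis~2 says $(F_n)_{n\in\NN}$ is uniformly bounded on $B_\CC(0,r)$, hence in particular locally bounded; by Montel's theorem it is then a normal family, i.e.\ every subsequence of $(F_n)$ admits a sub-subsequence converging uniformly on every compact subset of $B_\CC(0,r)$. By the Weierstrass convergence theorem (equivalently, by Morera's theorem applied to a locally uniform limit), the limit of any such locally uniformly convergent sub-subsequence is analytic on $B_\CC(0,r)$. If one prefers to avoid naming Montel's theorem, this step can instead be obtained from the Cauchy estimates: the uniform bound on $F_n$ gives a bound on $\sup_K|F_n'|$ that is uniform in $n$ on each compact $K$, hence local equicontinuity, and precompactness then follows from Arzel\`a--Ascoli.

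Next I would pin down the limit. Let $G_1,G_2$ be the locally uniform limits of two convergent subsequences $(F_{n_j})_j$ and $(F_{m_k})_k$. By hypothesis~1, for every real $x\in(-r,r)$ we have $G_1(x)=\lim_j F_{n_j}(x)=F(x)=\lim_k F_{m_k}(x)=G_2(x)$. Thus $G_1$ and $G_2$ are analytic on the connected open set $B_\CC(0,r)$ and agree on the segment $(-r,r)$, which has an accumulation point inside $B_\CC(0,r)$; the identity theorem forces $G_1=G_2$ on all of $B_\CC(0,r)$. Hence every locally uniform subsequential limit coincides with one analytic function, which we call $F$; this is consistent with and extends the pointwise limit of hypothesis~1, proving the analytic extension claimed in item~1.

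It then remains to upgrade ``every subsequence has a further subsequence converging to $F$'' into ``the whole sequence converges to $F$ locally uniformly''. For a fixed compact $K\subset B_\CC(0,r)$ I would argue by contradiction: if $\sup_K|F_n-F|$ did not tend to $0$, there would be $\delta>0$ and a subsequence along which $\sup_K|F_n-F|\ge\delta$; but by normality that subsequence has a further subsequence converging uniformly to $F$ on $K$, a contradiction. This establishes item~1 together with the case $k=0$ of item~2. Finally, for the higher derivatives, fix $z_0\in B_\CC(0,r)$ and $\varrho>0$ with $\overline{B_\CC(z_0,\varrho)}\subset B_\CC(0,r)$; Cauchy's formula for derivatives gives, for $z\in B_\CC(z_0,\varrho/2)$,
\[
F_n^{(k)}(z)-F^{(k)}(z)=\frac{k!}{2\pi i}\oint_{|\zeta-z_0|=\varrho}\frac{F_n(\zeta)-F(\zeta)}{(\zeta-z)^{k+1}}\,\d\zeta,
\]
and since $|\zeta-z|\ge\varrho/2$ on the contour, the right-hand side is bounded in modulus by $\frac{k!}{(\varrho/2)^{k}}\,\sup_{|\zeta-z_0|=\varrho}|F_n(\zeta)-F(\zeta)|$, which tends to $0$ by the $k=0$ case. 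Covering an arbitrary compact $K$ by finitely many such half-disks yields uniform convergence of $F_n^{(k)}$ to $F^{(k)}$ on $K$.

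The only mildly delicate point is the ``subsequence of a subsequence'' extraction used to pass from normality to convergence of the full sequence; everything else is routine complex analysis. If the self-contained appendix is meant to avoid citing Montel, then the Cauchy-estimate/Arzel\`a--Ascoli route sketched above is the part that needs to be written out in full detail.
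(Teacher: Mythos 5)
Your proposal is correct and follows essentially the same route as the paper's proof: Cauchy estimates (or Montel) plus Arzel\`a--Ascoli to extract locally uniformly convergent subsequences, the identity theorem on the real segment to pin down the limit, and Cauchy's integral formula for the derivatives. If anything, you are slightly more explicit than the paper about the subsequence-of-a-subsequence argument needed to pass from subsequential convergence to convergence of the full sequence, which the paper leaves implicit.
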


\noindent \begin{proof}
   For any arbitrary $r_0 \in (0,r)$, let $\Gamma_{r_0}$ denote the circle centered at the origin with radius $r_0$. We have for any $n$ and any $z$ in $B_\CC(0,r_0)$
\begin{equation} \label{eq_CauchyVitali}
F_n^{(k)}(z)=\frac{k!}{2\mathrm i \pi} \int_{\Gamma_{r_0}} \frac{F_n(u)}{(u-z)^{k+1}}\,\d u.
\end{equation}
Assumption $2$ implies that the family of derivatives $(F_n')_{n\in\NN}$ is uniformly bounded on the closed ball $\overline{B_\CC(0,r_0)}$. Since this holds for any arbitrary $r_0 \in (0,r)$, $(F_n)_{n\in\NN}$ is equicontinuous on any compact subset of $B_\CC(0,r)$.
The Arzela-Ascoli theorem implies that there exists a subsequence 
$\bigl( F_{\varphi(n)}\bigr)_{n}$ of $(F_n)_{n}$
which converges uniformly to an analytic limit $G$ on $\overline{B_\CC(0,r_0)}$. Necessarily, relation \eqref{eq_CauchyVitali} holds for $G$, which is therefore analytic on $B_\CC(0,r)$. By assumption $1$, on the interval $(-r, r)$, $G$ equals $F$ necessarily. This, uniquely determines $G$ 
on $B_\CC(0,r)$, so that $(F_n)_{n}$ itself converges to an analytic function which is an extension of $F$. The convergence of derivatives follows from \Cref{eq_CauchyVitali}.\qed \end{proof}

We now turn to a proof of Bryc's theorem, stated as \Cref{theo_Bryc}.

\begin{proof}
The analyticity of $H_n$ in a neighbourhood of zero shows that the random variables $X_n$ have moments of all orders. By a direct application of Vitali's theorem, the function $H$ is analytic on $B_\CC(0,r)$ and $\frac1{w_n}H_n$ converges uniformly on any compact subset of  $B_\CC(0,r)$ to $H$. Replacing $X_n$ with $X_n+ H_n'(0)$ we can assume that $H_n'(0)=0$ for all $n$. For $R>0$ denote by $\Gamma_R$ the circle centered at the origin, with radius $R$. Cauchy's formula gives, for any $0<r_0<r$, and $k\in\NN$
\begin{align*}
\frac{\d^k}{\d z^k} H(z)&= \frac{k!}{2\mathrm i \pi} \int_{\Gamma_{r_0/ \sqrt{w_n}}} \frac{H(u)}{u^{k+1}}\d u\\
&= \lim_{n\to\infty}\frac{k!}{w_n\,2\mathrm i \pi}\int_{\Gamma_{r_0/ \sqrt{w_n}}} \frac{H_n(u)}{u^{k+1}}\d u
\end{align*}
and a simple change of variable $u=u'/\sqrt{w_n}$ gives
\begin{align*}
H^{(k)}(0)&= \lim_{n\to\infty}w_n^{k/2-1}\frac{k!}{2\mathrm i \pi}\int_{\Gamma_{r_0}} \frac{H_n(u/\sqrt{w_n})}{u^{k+1}}\d u.
\end{align*}
However,
\begin{align*}
\frac{k!}{2\mathrm i \pi}\int_{\Gamma_{r_0}} \frac{H_n(u/\sqrt{w_n})}{u^{k+1}}\d u =\left. \frac{\d^k}{\d z^k}\right|_{z=0} \log \mathbb E \big(\exp - z \frac{X_n}{\sqrt{w_n}}\big), 
\end{align*}
so that
\begin{align}
H^{(k)}(0)&= \lim_{n\to\infty}w_n^{k/2-1}\left. \frac{\d^k}{\d z^k}\right|_{z=0} \log \mathbb E \big(\exp - z \frac{X_n}{\sqrt{w_n}}\big).
\end{align}
Therefore, for $k\geq 3$,
\[ \lim_{n\to\infty}\left. \frac{\d^k}{\d z^k} \right|_{z=0}\log \mathbb E \big(\exp - z \frac{X_n}{\sqrt{w_n}}\big)=0\]
whereas for $k=2$
\[\lim_{n\to\infty}\left.\frac{\d^2}{\d z^2} \right|_{z=0}\log \mathbb E \big(\exp - z \frac{X_n}{\sqrt{w_n}}\big) =\lim_{n\to\infty} \frac1{w_n} H_n''(0)=H''(0)\]
and by the choice that $H_n'(0)=0$ for any $n$,
\[\lim_{n\to\infty}\left. \frac{\d}{\d z}\right|_{z=0} \log \mathbb E \big(\exp - z \frac{X_n}{\sqrt{w_n}}\big) =\lim_{n\to\infty} \frac1{\sqrt {w_n}} H_n'(0)=0.\]
Therefore, the cumulant of order $k$ of $(X_n/\sqrt{w_n})_{n\in\NN}$ converges to the cumulant of order $k$ of the centered normal distribution $ \mathcal N(0,H''(0))$, so that the same is true of moments. Since the normal distribution is determined by its moments, the sequence $(X_n/\sqrt{w_n})_{n\in\NN}$ converges in law to $ \mathcal N(0,H''(0))$ (see Theorem 30.2 and Example 30.1 in \cite{Billingsley}).
\qed \end{proof}

\section{Proof of \Cref{boundsD} and \Cref{cor1}} \label{app_boundseNP}

\paragraph{Proof of \Cref{boundsD}} We first prove the upper bound in \Cref{ineq}. In order to do so, we use the following inequality (see \cite{Jaksicetal}): For any $A,B \in \cP_{\RR_+}(\cH)$ and $s$ in $[0,1]$,
\begin{align}\label{matrixineq}
\frac{1}{2}(\tr A+\tr B-\tr |A-B|)\le \tr A^{1-s}B^s.
\end{align}	

For a given test $T$,
\begin{align*}
\operatorname{\es}(A,B,T)&=\tr(A(\mathbb{I}- T))+\tr(B T)\\
&=\tr A+\tr(T(B-A)).
\end{align*}
Therefore, by \Cref{lemmaproj}, 
\begin{align}
\operatorname{\est}(A,B)&=\tr A-\tr(A-B)P_{+}(A-B)\nonumber\\
&=1/2(\tr A+\tr B-\tr|A-B|),\label{eqq}
\end{align}		
where $P_{\RR_+}(A-B)$ denotes the projection onto the support of $(A-B)_+$. 
From \Cref{phi-s-inner} we have that $\langle \Omega_B, \Delta^s_{A|B}\Omega_B\rangle =\tr(A^s B^{1-s})$ where $\Omega_B = B^{1/2}$. Hence using \Cref{matrixineq} we obtain
\begin{align*}
\langle \Omega_{B}, \Delta^s_{A|B} \Omega_{B}\rangle\ge 1/2(\tr A+\tr B-\tr|A-B|)=\operatorname{\est}(A,B).
\end{align*}
\medskip

Next we prove the lower bound in \Cref{ineq}. By \Cref{lemmaproj}
we know that the infimum in \Cref{e-symm} of $\est$ 
is attained for a test given by an orthogonal projection which we denote by $P$.
For such a test, $P$, using spectral decompositions of $A$ and $B$ we can write
\begin{align*}
\tr(A (\mathbb{I}-P))&=\sum_{\lambda\in \operatorname{sp}(A)}\lambda\tr((\mathbb{I}-P)P_\lambda(A)(\mathbb{I}-P))\\
&=\sum_{(\lambda,\mu)\in\operatorname{sp}(\rho)\times\operatorname{sp}(B)}\lambda\mu^{-1}\tr(B (\mathbb{I}-P) P_{\lambda}(A)(\mathbb{I}-P)P_\mu(B)).
\end{align*}	
Similarly,
\begin{align*}
\tr(B P)=\sum_{(\lambda,\mu)\in\operatorname{sp}(A)\times\operatorname{sp}(B)}\tr(B PP_{\lambda}(A)PP_\mu(B)).		
\end{align*}
where we have used the fact that $\sum_\lambda P_\lambda(A) = \mathbb{I} = \sum_\mu P_\mu(B)$ since $A, B \in \cP_{\RR_+}(\cH)$.		
Therefore as for any $\kappa\ge 0$, denoting $\bP = \mathbb{I} - P$, we have 
\begin{align*}
\kappa \bP P_\lambda(A)\bP+ PP_{\lambda}(A)P &=\frac{\kappa}{1+\kappa}P_\lambda(A)+\frac{1}{1+\kappa}(1-(1+\kappa)\bP)P_{\lambda}(A)(1-(1+\kappa)\bP)\\
&\ge \frac{\kappa}{1+\kappa}P_\lambda(A)
\end{align*}
and hence we obtain
\begin{align*}
\tr(B P)+\tr(A(\mathbb{I}-P))&\ge\sum_{(\lambda,\mu)\in \operatorname{sp}(A)\times\operatorname{sp}(B)}\frac{\lambda/\mu}{1+\lambda/\mu}\tr(B P_\lambda(A)P_\mu(B))\\
&=\langle \Omega_B,\Delta_{A|B}(1+\Delta_{A|B})^{-1}\Omega_B\rangle
\end{align*}
and the lower bound in \Cref{ineq} follows.
\qed

\paragraph{Proof of \Cref{cor1}}
\Cref{boundsD} implies that for any $\theta \in\mathbb{R}$,
\begin{align}
\operatorname{\est}(\sigma,\operatorname{e}^{-\theta} \rho)&\ge \langle \Omega_{\e^{-\theta}\rho},\Delta_{\sigma|\e^{-\theta}\rho}(1+\Delta_{\sigma|\e^{-\theta}\rho})^{-1}\Omega_{\e^{-\theta}\rho}\rangle\nonumber\\
&=\operatorname{e}^{-\theta}\langle\Omega_{\rho},\operatorname{e}^{\theta}\Delta_{\sigma|\rho}(1+\operatorname{e}^{\theta}\Delta_{\sigma|\rho})^{-1}\Omega_\rho\rangle\nonumber\\
&=\int_{\mathbb{R}}\frac{\operatorname{e}^{-\theta}}{1+\operatorname{e}^{x-\theta}}d\mu_{\sigma|\rho}(x) 
\end{align}
and the result follows from Markov's inequality.
\qed

\section{Proof of \Cref{prop_1}} \label{app_Prop1}
Let the spectral decompositions of the states $\rho$ and $\sigma$ be given as follows,	
\[\rho = \sum_{\lambda\in\spec(\rho)} \lambda P_\lambda(\rho) \qquad \sigma= \sum_{\mu\in\spec(\sigma)} \mu P_\mu(\sigma)\]
Denote for any $\lambda\in \spec(\rho)$:
\[Q_{\lambda}:= P_{(-\infty,\lambda/L]}(\sigma)=  \sum_{\mu\in\spec(\sigma):L\mu\le \lambda} P_{\mu}(\sigma). \]
$Q_\lambda$ then increases with $\lambda$. The operator 
\[ \tilde T :=  \sum_{\lambda\in \spec(\rho),\mu\in\spec(\sigma):L\mu\le\lambda}  P_\mu(\sigma) P_\lambda(\rho) P_\mu(\sigma)\]
is well-defined and nonnegative. We define our test $T$ by the projection onto the support of $\tilde T$. 
Now fix $\lambda\in\spec(\rho)$. For any $\varphi\in \mathrm{ran}(P_\lambda(\rho))$ of unit norm, one has $P_\lambda(\rho)\ge |\varphi\rangle\langle \varphi|$, where $|\varphi\rangle\langle \varphi|$ denotes the projector onto $\operatorname{span}(\varphi)$, 
and therefore for any $\mu\in\spec(\sigma)$ such that $L\mu\le \lambda$ one has $\tilde{T} \ge |P_\mu(\sigma)\varphi\rangle\langle P_\mu(\sigma)\varphi|$.
This implies that $P_\mu(\sigma)\varphi\in \operatorname{supp}(\tilde{T})$. Hence $$T\ge \frac{\ketbra{P_\mu(\sigma) \varphi}{P_\mu(\sigma) \varphi}}{\|P_\mu(\sigma) \varphi\|^2}. $$ 
Since $\{P_\mu(\sigma)\varphi\}_{\mu\in\spec(\sigma)}$ form a family of orthogonal vectors, the following inequality holds,
\[ T \geq \sum_{\mu\in\spec(\sigma):L\mu\le \lambda} \frac{\ketbra{P_\mu(\sigma) \varphi}{P_\mu(\sigma) \varphi}}{\|P_\mu(\sigma) \varphi\|^2}.\]
Moreover, since
\[\tr \big(\ketbra \varphi\varphi\, \ketbra{P_\mu(\sigma) \varphi}{P_\mu(\sigma) \varphi}\big)=  \|P_\mu(\sigma) \varphi\|^4 \]
we have that
\[\tr \big(\ketbra \varphi\varphi\, T\big) \geq \sum_{\mu\in\spec(\sigma):L\mu\le\lambda}  \|P_\mu(\sigma) \varphi\|^2 =  \sum_{\mu\in\spec(\sigma):L\mu\le\lambda}\tr \big(\ketbra \varphi\varphi\, P_\mu(\sigma)\big) \]
for $\varphi\in\operatorname{ran}(P_\lambda(\rho))$. Thus by writing $P_{\lambda}(\rho)=\sum_{i=1}^{\operatorname{dim}(\mathrm{ran}(P_\lambda(\rho)))}|\varphi_i\rangle\langle \varphi_i|$, where $\{\varphi_i\}_i$ is an orthonormal basis for $\mathrm{ran}(P_{\lambda}(\rho))$,
\[\tr \,P_\lambda(\rho) T \geq \sum_{\mu\in\spec(\sigma):L\mu\le\lambda}  \tr \,P_\lambda(\rho)\, P_\mu(\sigma) \]
and
\begin{align*}
\tr \rho T \geq \sum_{\lambda\in\spec(\rho),\mu\in\spec(\sigma):L\mu\le\lambda}  \lambda  \tr \,P_\lambda(\rho)\, P_\mu(\sigma)& = \sum_{\lambda\in\spec(\rho),\mu\in\spec(\sigma):L\mu\le\lambda} \tr \big(\rho^{1/2}\,P_\lambda(\rho)\, \rho^{1/2}\,P_\mu(\sigma)\big)\\
&=   \big\langle{\Omega_\rho},{P_{[L,+\infty)}(\Delta_{\rho|\sigma})\,\Omega_\rho}\big\rangle.
\end{align*}
Since $\Delta_{\rho|\sigma}$ is a positive operator, this proves the first inequality in \eqref{eq_alphabeta}.
\smallskip

To prove the second inequality in \eqref{eq_alphabeta}, first observe that for any $\lambda\in\spec(\rho)$, $\mu\in\spec(\sigma)$:
\begin{align}
\mathrm{ran}(P_\mu(\sigma) P_\lambda(\rho))=\mathrm{supp}(P_\mu(\sigma) P_\lambda(\rho)P_\mu(\sigma)).
\end{align}
Indeed, using the decomposition $P_{\lambda}(\rho)=\sum_{i=1}^{\operatorname{dim}(\mathrm{ran}(P_\lambda(\rho)))}|\varphi_i\rangle\langle \varphi_i|$, where $\{\varphi_i\}$ is an orthonormal basis for $\mathrm{ran}(P_{\lambda}(\rho))$, we obtain
\begin{align*}
P_\mu(\sigma)P_\lambda(\rho)P_\mu(\sigma)=\sum_{i=1}^{\operatorname{dim}\mathrm{ran}(P_\lambda(\rho))}|P_\mu(\sigma)\varphi_i\rangle\langle P_\mu(\sigma)\varphi_i|,
\end{align*}
which implies that the support of $P_\mu(\sigma)P_\lambda(\rho)P_\mu(\sigma)$ is the span of the set of vectors $\{P_\mu(\sigma)\varphi_i\}_i$, and is hence the range of $P_\mu(\sigma)P_{\lambda}(\rho).$ 
Then denote for $\lambda\in \spec(\rho)$,
\begin{equation} \label{eq_deftildeAk}
\tilde{\mathcal{A}}_\lambda=\bigvee_{\mu\in\spec(\sigma):\lambda/\mu\ge L} \mathrm{ran}( P_\mu(\sigma) P_\lambda(\rho))=\bigvee_{\mu\in\spec(\sigma):\lambda/\mu\ge L}\mathrm{supp}(P_\mu(\sigma) P_\lambda(\rho)P_\mu(\sigma)).
\end{equation}
To complete the proof we employ a Gram-Schmidt orthogonalization procedure as in \cite{L14}. We first order the eigenvalues of $\rho$ and denote them as $\lambda_1, \lambda_2, \ldots, \lambda_{d_\rho}$
where $d_{\rho}$ is the number of different eigenvalues of $\rho$, and $\lambda_1 > \lambda_2 >\ldots > \lambda_{d_\rho}$. Define a family $(\mathcal{A}_i)_{1\le i\le d_\rho}$ of subspaces of $\mathcal H$ by $\mathcal{A}_1 := \tilde{\mathcal{A}}_{\lambda_1}$ and, $\forall\, 1 \leq k \leq d_\rho-1$, 
\begin{align}\label{AtildeA}
\mathcal{A}_{k+1}:= \big( \tilde{\mathcal{A}}_{\lambda_1}+\ldots + \tilde{\mathcal{A}}_{\lambda_{k+1}}\big)\cap \big(\tilde{\mathcal{ A}}_{\lambda_1}+\ldots + \tilde{\mathcal{ A}}_{\lambda_k}\big)^\perp.
\end{align}
The subspaces $\mathcal{A}_k$ are mutually orthogonal by construction.
Moreover, it can be shown by induction that for any $k$,
\begin{equation} \label{eq_sums}
\mathcal{A}_1+\ldots +\mathcal{A}_k = \tilde{\mathcal{ A}}_{\lambda_1} + \ldots + \tilde{\mathcal{ A}}_{\lambda_k}.
\end{equation}
By definition, $T$ is the projection on
\begin{align}
\bigvee_{\lambda\in \spec(\rho),\mu\in\spec(\sigma):\lambda\ge  L\mu}\mathrm{supp}(P_\mu(\sigma) P_\lambda(\rho)P_\mu(\sigma)) &=\bigvee_{\lambda\in\spec(\rho),\mu\in\spec(\sigma):\lambda\ge L\mu} \mathrm{ran}(P_{\mu}(\sigma) P_\lambda(\rho)) 
\nonumber\\
&= \bigvee_{\lambda} \tilde{\mathcal{A}}_{\lambda} = \bigoplus_{k=1}^{d_\rho} \mathcal{A}_k,
\end{align}
so that $T = \sum_k^{d_\rho} P_{\mathcal{A}_k}$ where, for each $k$, $P_{\mathcal{A}_k}$ is the orthogonal projector onto $\mathcal{A}_k$. Note that
\begin{align}\label{AAtildineq}
\mathcal{A}_{k}\subset{\tilde{\mathcal{A}}}_{\lambda_k}\Rightarrow\operatorname{dim}\mathcal{A}_k\le \operatorname{dim}\tilde{\mathcal{A}}_{\lambda_k}.
\end{align}
Using the definition of  $\tilde{\mathcal{A}}_{\lambda_k}$, one finally gets
\begin{align}\label{okey}
 \tr P_{\mathcal{A}_k}=\operatorname{dim}\mathcal{A}_k\le \operatorname{dim}\tilde{\mathcal{A}}_{\lambda_k}\le \tr P_{\lambda_k}(\rho).
 \end{align}
We then obtain
\begin{align}
\tr\, \sigma T &= \sum_{\mu\in\spec(\sigma)}\sum_{k=1}^{d_\rho} \mu \,\tr \left(P_{\mu}(\sigma) P_{\mathcal{A}_k}\right) \nonumber\\
&= \sum_{k=1}^{d_\rho}\sum_{\mu\in\spec(\sigma):\atop{\lambda_k/\mu\ge L}} \mu \,\tr  \left(P_{\mu}(\sigma) P_{\mathcal{A}_k}\right)\nonumber\\
& \leq L^{-1} \sum_{\mu\in\spec(\sigma):\atop{\lambda_k/\mu\ge L}} \sum_{k=1}^{d_\rho} \lambda_k \tr \left(P_{\mu}(\sigma) P_{\mathcal{A}_k} \right) \nonumber\\
&\leq L^{-1} \sum_{\mu\in\spec(\sigma)} \sum_{k=1}^{d_\rho} \lambda_k \tr \left(P_{\mu}(\sigma) P_{\mathcal{A}_k}\right)= L^{-1} \sum_{k=1}^{d_\rho} \lambda_k \tr \left(  P_{\mathcal{A}_k}\right)\nonumber\\
& \leq L^{-1} \sum_{k=1}^{d_\rho} \lambda_k \tr \left(  P_{\lambda_k}(\rho)\right)=L^{-1}\tr(\rho)=L^{-1},\nonumber
\end{align}
where the second line follows from the fact that $P_{\mu}(\sigma) P_{\mathcal{A}_k} = 0$ unless $\lambda_k/\mu\ge L$ (by (\ref{eq_deftildeAk}) and (\ref{AAtildineq})), and the last one from \eqref{okey}. This concludes the proof. \qed

\end{document}